\def\doi{9(2:09)2013}
\newcommand{\proc}[1]{{\textsc{#1}}}
\newcommand{\f}{\ensuremath{\varphi}}
\newcommand{\p}{\ensuremath{\psi}}
\renewcommand{\th}{\ensuremath{\theta}}
\newcommand{\si}{\sigma}
\newcommand{\De}{\Delta}
\newcommand{\Ga}{\Gamma}
\newcommand{\Si}{\Sigma}
\newcommand{\lang}{{\mathcal L}}
\newcommand{\vty}[1]{\mathsf{#1}}
\newcommand{\op}[1]{\mathbb{#1}}
\newcommand{\lgc}[1]{\mathrm{#1}}
\newcommand{\alg}[1]{\mathbf{#1}}
\newcommand{\mdl}[1]{\models_{\lgc{#1}}}
\newcommand{\der}[1]{\vdash_{\lgc{#1}}}
\newcommand{\Con}{{\rm Con}}
\newcommand{\imp}{\Rightarrow}
\newcommand{\rl}{\ / \ }
\newcommand{\eq}{\approx}
\newcommand{\Qg}{{\ensuremath{\mathbb{Q}}}}
\newcommand{\Vg}{{\ensuremath{\mathbb{V}}}}
\newcommand{\nVg}{{\ensuremath{\mathbb{V}^{\mbox{-}}}}}
\newcommand{\K}{{\ensuremath{\mathcal{K}}}}
\newcommand{\F}{{\ensuremath{\alg{F}}}}
\newcommand{\Q}{{\ensuremath{\mathcal{Q}}}}
\newcommand{\A}{{\ensuremath{\alg{A}}}}
\newcommand{\B}{{\ensuremath{\alg{B}}}}
\newcommand{\C}{{\ensuremath{\alg{C}}}}
\newcommand{\Luk}{\textup{\L}}
\begin{document}

\title[Admissibility in Finitely Generated Quasivarieties]{Admissibility in Finitely Generated Quasivarieties\rsuper*}

\author[G.~Metcalfe]{George Metcalfe}	
\address{Mathematics Institute, University of Bern, Sidlerstrasse 5, Bern 3012, Switzerland}	
\email{\{george.metcalfe, christoph.roethlisberger\}@math.unibe.ch}  
\thanks{The authors acknowledge support from Swiss National Science Foundation grants 20002{\_}129507 
and 200021{\_}146748 and 
Marie Curie Reintegration Grant PIRG06-GA-2009-256492.}	
\author[C.~R{\"o}thlisberger]{Christoph R{\"o}thlisberger}	

\keywords{Unification, Admissibility, Quasivariety, Finite Algebra, Free Algebra}
\subjclass{F.4.1, I.2.3, I.1.2}
\ACMCCS{[{\bf Theory of computation}]: Models of computation; [{\bf Computing Methodologies}]:  Symbolic and algebraic manipulation---Symbolic and algebraic algorithms---Theorem proving algorithms}
\titlecomment{{\lsuper *}A precursor to this paper, reporting preliminary results, has appeared as~\cite{MR12}.}


\begin{abstract}
Checking the admissibility of quasiequations in a finitely generated
(i.e., generated by a finite set of finite algebras) quasivariety $\Q$
amounts to checking validity in a suitable finite free algebra of the
quasivariety, and is therefore decidable. However, since free algebras
may be large even for small sets of small algebras and very few
generators, this naive method for checking admissibility in $\Q$ is
not computationally feasible. In this paper, algorithms are introduced
that generate a minimal (with respect to a multiset well-ordering on
their cardinalities) finite set of algebras such that the validity of
a quasiequation in this set corresponds to admissibility of the
quasiequation in $\Q$. In particular, structural completeness
(validity and admissibility coincide) and almost structural
completeness (validity and admissibility coincide for quasiequations
with unifiable premises) can be checked. The algorithms are
illustrated with a selection of well-known finitely generated
quasivarieties, and adapted to handle also admissibility of rules in
finite-valued logics.
\end{abstract}


\maketitle

 
\section{Introduction}

The problem of checking the {\em validity} of quasiequations in finitely generated (i.e., generated by a finite set of finite algebras) 
quasivarieties or, similarly, checking consequences from finite sets of formulas 
in finite-valued logics, is decidable and has been  investigated extensively 
in the literature. In particular, uniform methods for generating proof systems to check validity such as tableaux, resolution, and multisequent calculi, 
have been developed, as have standard optimization techniques for these systems such as lemma generation and indexing (see, e.g.,~\cite{Hah93,Zac93,BFS99}).  
However, checking the {\em admissibility}  of quasiequations in finitely generated quasivarieties, or similarly, checking the 
admissibility of rules in finite-valued logics, 
is not so well-understood. The problem is decidable, but a naive approach leads to computationally 
unfeasible procedures even for small sets of small algebras. The main goal of this paper is to define uniform methods 
that generate computationally acceptable proof systems for checking admissibility in an arbitrary 
finitely generated quasivariety.

Intuitively, a rule is said to be admissible in a logical system if it can be added to the system without producing any new theorems. 
More formally, a quasiequation is admissible in a class of algebras $\K$ if every $\K$-unifier of the premises is a $\K$-unifier 
of the conclusion, where a $\K$-unifier of an equation $\f \eq \p$ is a substitution $\si$ such that $\si(\f) \eq \si(\p)$ is valid in 
$\K$. Admissibility plays a fundamental meta-level role in describing ``hidden properties'' of classes of  
algebras and logical systems. For example, establishing the completeness of a logical system  with respect to some restricted 
class of algebras (perhaps just one standard algebra) often involves showing that a certain rule or 
quasiequation is admissible; see, e.g.,~\cite{MOG08} for applications of the admissibility of rules 
in the context of fuzzy logics. Also, the closely related problem of deciding unifiability of concepts can be a 
useful tool for database redundancy checking in description logics~\cite{BM10}. 
Moreover, it may be possible to automatically obtain admissible rules 
for classes of algebras and logics that can then be used to simplify reasoning steps or to speed up 
derivations for checking validity.

Admissibility (in tandem with unification) has been studied intensively 
in the context of intermediate and transitive modal logics and their 
algebras~\cite{Ryb97,Iem01,Ghi99,Ghi00,Jer05,CM10,BR11a,BR11b}, 
leading in some cases to proof systems for checking admissibility~\cite{Ghi02,IM09a,IM09b,BRST10}. 
Axiomatizations and characterizations have also been obtained for certain families of 
finite algebras and many-valued logics, in particular \L ukasiewicz logics (or classes of 
MV-algebras)~\cite{Jer09a,Jer09b} 
and other fuzzy logics~\cite{CM09}, fragments of the substructural 
logic R-Mingle~\cite{Met12a}, and classes of De Morgan algebras~\cite{MR12a,CM201x}. However, a general theory, 
covering arbitrary finite algebras and finite-valued logics, has so far been lacking. 

The starting point for this work is the observation (see Lemma 4.1.10 of~\cite{Ryb97} and Corollary~\ref{c:admfin} below) 
that for a finite set of finite algebras $\K$, admissibility in the quasivariety $\Qg(\K)$ amounts to validity in the free algebra  on 
$n$ generators $\F_\K(n)$, where $n$ is the maximum cardinality of the algebras in $\K$. Since by Birkhoff's theorem on the structure of 
free algebras~\cite{Bir35}, this algebra $\F_\K(n)$ is finite, checking admissibility in $\Qg(\K)$ is decidable.
On the other hand, even for small $n$ and a small set of small algebras $\K$, 
the size of $\F_\K(n)$ may be prohibitively large for checking validity.
This is striking since validity and admissibility in $\Qg(\K)$ may coincide, 
$\Qg(\K)$ is then called {\em structurally complete}, or at least coincide for quasiequations 
with $\Qg(\K)$-unifiable premises, in which case, $\Qg(\K)$ is called {\em almost structurally complete}.
In other cases, $\Qg(\K)$-admissibility may correspond
to validity in  other, often quite small, algebras.  
We provide general algorithms here that discover such algebras, or, more precisely, 
generate finite sets of finite algebras such that the $\Qg(\K)$-admissibility of a quasiequation 
corresponds to validity in the quasivariety generated by these algebras. It is shown, moreover, 
that these are the smallest sets of algebras with this property with respect to a standard 
well-ordering on the multiset of their cardinalities.

We proceed as follows. First, in Section~\ref{s:prelim}, we recall some basic notions from universal algebra. 
Then  in Section~\ref{s:fingen}, we introduce some key ideas and methods for finitely generated quasivarieties; 
in particular, we apply a standard multiset well-ordering to the cardinalities of algebras in generating sets for 
quasivarieties, and provide an algorithm for finding the (unique up to isomorphism) minimal 
generating set of a finitely generated quasivariety. Section~\ref{s:admfree} provides characterizations of admissibility,
 unifiability, structural completeness, and almost structural completeness. These characterizations are then exploited 
 in Section~\ref{s:algorithms} to define corresponding algorithms, and illustrated  
using a selection of well-known finite algebras, confirming some known results 
from the literature, and establishing new ones. In Section~\ref{s:logics}, the approach is extended to 
finite-valued logics, where the designated values as well as the underlying finite algebra play a significant role. 
Finally, in Section~\ref{s:concluding}, we conclude with some remarks on future directions for this research.


 \section{Preliminaries} \label{s:prelim}

Let us first recall some basic ideas from universal algebra, referring to~\cite{BS81,Gor98} for further details. 
Given an algebraic language $\lang$ (i.e., without relation symbols), an {\em $\lang$-algebra} $\A$ is 
an algebraic structure consisting of a set $A$ (the universe) and an $n$-ary function $\star^\A$ corresponding to each 
$n$-ary function symbol $\star$ of $\lang$ (as usual, calling nullary functions  \emph{constants}).
We call $\A$ {\em finite} if $A$ is a finite set and $\lang$ consists of finitely many function symbols.  
A {\em congruence} on an $\lang$-algebra $\A$ is an equivalence relation $\th$ on $A$ satisfying for each $n$-ary 
function symbol $\star$ of $\lang$: $\{(a_1,b_1),\ldots,(a_n,b_n)\} \subseteq \th$ implies 
$(\star^\A(a_1,\ldots,a_n),\star^\A(b_1,\ldots,b_n)) \in \th$. 
The congruences of $\A$ form a complete lattice $\Con(\A)$ with bottom element $\De_\A = \{(a,a)  :  a \in A\}$ 
and top element $\nabla_\A = \{(a,b)  :  a,b \in A\}$, where the meet of a set of congruences on $\A$ is just the 
intersection of those congruences. Given $\th \in \Con(\A)$, the {\em quotient algebra of $\A$ by $\th$} 
is the $\lang$-algebra $\A / \th$ with universe $A / \th$ consisting of the equivalence classes $a / \th$ for $a \in A$ 
with functions defined for each $n$-ary function symbol $\star$ of $\lang$ by 
$\star^{\A / \th}(a_1 / \th,\ldots, a_n / \th) = \star^{\A}(a_1,\ldots,a_n) / \th$.

Term algebras $\alg{Tm_\lang}(X)$ are defined over a set of variables $X$ in the usual way, 
writing just $\alg{Tm_\lang}$ when $X$ is a fixed  countably infinite set, and letting $\f,\p$ stand for arbitrary members of the 
universe ${\rm Tm}_\lang$ called {\em $\lang$-terms}.
An {\em $\lang$-equation} is an ordered pair of $\lang$-terms, written $\f \eq \p$. 
An {\em $\lang$-clause} is defined as an ordered pair $\Si,\De$
of finite sets of $\lang$-equations, written $\Si \imp \De$, and 
called an  {\em $\lang$-quasiequation} if $|\De| = 1$ and an
{\em $\lang$-negative clause} if $\De = \emptyset$.
As usual, if the language is clear from the context, we may omit the prefix $\lang$.

Let us fix $\K$ to be a class of $\lang$-algebras, noting that  often in what follows $\K$ will consist of a finite set of 
$\lang$-algebras $\A_1,\ldots,\A_n$, and in this case we typically omit brackets. 
Given a finite  set of $\lang$-equations $\Si \cup \De$, 
we write $\Si \mdl{\K} \De$ and say that the $\lang$-clause $\Si \imp \De$ is {\em $\K$-valid}, 
if for every $\A \in \K$ and homomorphism $h \colon \alg{Tm_\lang} \to \A$,
\[
\Si \subseteq \ker h \qquad {\rm implies} \qquad \De \cap \ker h \neq \emptyset,
\]
recalling that $\ker h = \{(\f,\p)  :  h(\f) = h(\p)\}$. We also say that $\Si$ is {\em $\K$-satisfiable} if 
 $\Si \subseteq \ker{h}$ for some $\A \in \K$ and homomorphism $h \colon \alg{Tm_\lang} \to \A$. 

The class $\K$ is said to be {\em axiomatized} by a set of $\lang$-clauses $\Lambda$ if $\K$ is the class of $\lang$-algebras $\A$ 
such that all $\lang$-clauses in $\Lambda$ are $\A$-valid. $\K$ is called a {\em variety}, {\em quasivariety}, or 
{\em antivariety} if it is axiomatized by a set of $\lang$-equations, $\lang$-quasiequations, or 
$\lang$-negative clauses, respectively. The variety $\Vg(\K)$, quasivariety $\Qg(\K)$, and antivariety 
$\nVg(\K)$ {\em generated by} $\K$ are the smallest variety, quasivariety, and antivariety containing $\K$, respectively. 

Moreover, let $\op{H}$, $\op{I}$, $\op{S}$, $\op{P}$, $\op{P}_U$,  $\op{P}^*_U$, and $\op{H}^{-1}$ be the 
class operators of taking homomorphic images, isomorphic images, subalgebras, products, ultraproducts, non-empty ultraproducts, 
and homomorphic preimages, respectively. Then $\Vg(\K) = \op{HSP}(\K)$, $\Qg(\K) = \op{ISPP}_U(\K)$, and 
$\nVg(\K) = \op{H}^{-1}\op{S}\op{P}^*_U(\K)$, and if $\K$ is a finite set of finite algebras, these last two  
equivalences refine to  $\Qg(\K) = \op{ISP}(\K)$ and $\nVg(\K) = \op{H}^{-1}\op{S}(\K)$ (see~\cite{BS81,Gor98} for details).

Given a language $\lang$ and a set of variables $X$ such that either $X \neq \emptyset$ or $\lang$ contains at least 
one constant symbol, the term algebra $\alg{Tm_\lang}(X)$ exists and admits a congruence:
\[
\theta_\K(X) = \bigcap \{ \phi \in \Con(\alg{Tm_\lang}(X)) : \alg{Tm_\lang}(X)/\phi \in \op{IS}(\K) \}.
\]
Following \cite{BS81}, we let $\overline{X} = X / \theta_\K(X)$ and define the \emph{free algebra of $\K$ over $\overline{X}$}:
\[
\F_\K(\overline{X}) = \alg{Tm_\lang}(X) / \theta_\K(X).
\]
Then $\F_\K(\overline{X})$ has the universal mapping property for $\K$ over $\overline{X}$: namely, for each $\A \in \K$, 
any map from $\overline{X}$ to $A$ extends to a homomorphism from  $\F_\K(\overline{X})$ to $\A$ 
(\cite[Theorem II.10.10]{BS81}). 
If also $\K \neq \emptyset$, then $\F_\K(\overline{X}) \in \op{ISP}(\K) \subseteq \Qg(\K)$ (\cite[Theorem II.10.12]{BS81}). 

Note that $\F_\K(\overline{X}) \cong \F_\K(\overline{Y})$ whenever $|X| = |Y|$ (also $|\overline{X}| = |X|$ 
if $\K$ contains at least one non-trivial algebra). Hence we may consider for each cardinal $\kappa$, the 
(unique up to isomorphism) {\em free algebra of $\K$ on $\kappa$ generators} $\F_\K(\kappa)$, where $\F_\K(\kappa_1)$ 
is a subalgebra of $\F_\K(\kappa_2)$ for cardinals $\kappa_1 \le \kappa_2$. 
We note also that $\Vg(\K) = \Vg(\F_\K(\omega))$ and that 
$\Vg(\K_1) = \Vg(\K_2)$ implies $\F_{\K_1}(\omega) = \F_{\K_2}(\omega)$ 
(see~\cite[Corollary II.11.10; Exercise II.11.2]{BS81}). 


 \section{Finitely Generated Quasivarieties} \label{s:fingen}

A quasivariety $\Q$ is said to be {\em finitely generated} if $\Q = \Qg(\K)$ for some finite (generating) 
set $\K$ of finite $\lang$-algebras. Our first goal in this section will be to define a reasonable measure for 
comparing these generating sets; we will then  describe a method for obtaining a ``smallest'' generating set 
according to this measure. More precisely, we apply the standard multiset well-ordering defined by Dershowitz and 
Manna in~\cite{DeMa79} to the multiset of the cardinalities of algebras in $\K$. We  show that by decomposing 
finite $\lang$-algebras into their $\Q$-subdirectly irreducible components and appropriately 
refining the set of algebras obtained, we arrive at a (unique up to isomorphism) 
minimal set of $\lang$-algebras that still generates $\Q$.

Recall  that a \emph{multiset} over a set $S$ is an ordered pair $\langle S,f \rangle$ where
 $f$ is a function $f\colon S \to \mathbb{N}$, and is called \emph{finite} if $\{ x \in S  :  f(x) > 0 \}$ is finite.
For a well-ordered set $\langle S,\leq \rangle$, the 
\emph{multiset ordering $\leq_m$} on the set $M(S)$ of finite multisets over $S$ is defined by
$\langle S,f \rangle \leq_m \langle S,g \rangle$ if $f(x) > g(x)$ implies that for some $y \in S$, 
$y > x$ and $g(y) > f(y)$. It then follows that $\leq_m$ 
is a well-ordering of $M(S)$~(see~\cite{DeMa79}). As usual, we write a finite multiset of 
elements from $S$ as $[a_1,\ldots,a_n]$ where $a_1,\ldots,a_n \in S$ may include repetitions.

A  set of finite $\lang$-algebras $\{ \A_1, \dots, \A_n \}$ will be called a \emph{minimal generating set}
for the quasivariety $\Qg(\A_1, \dots, \A_n)$ if for every set of finite $\lang$-algebras
$\{ \B_1, \dots, \B_k \}$:
\[
\Qg(\A_1, \dots, \A_n) = \Qg(\B_1, \dots, \B_k) \qquad {\rm implies} \qquad 
[|A_1|, \dots, |A_n|] \leq_m [|B_1|, \dots, |B_k|].
\]
There are of course many measures that could be used to compare 
generating sets. Indeed, it may seem unreasonable to view  one hundred algebras with three elements as 
an improvement on a single algebra with four elements. However, the order is optimal in the 
following sense. In general, checking a quasiequation with $r$ variables in a finite algebra $\A$ 
requires checking $|A|^r$ assignments of variables to elements of $\A$. 
But then checking validity in  $\{ \A_1, \dots, \A_n \}$ will involve checking fewer assignments of variables than 
checking validity in $\{ \B_1, \dots, \B_k \}$ if $[|A_1|, \dots, |A_n|] \leq_m [|B_1|, \dots, |B_k|]$ 
 for quasiequations with sufficiently many variables.

To obtain minimal  generating sets  for $\Qg(\K)$ where $\K$ is a finite set of finite algebras, we 
 consider representations of the algebras in $\K$ using smaller algebras of $\Qg(\K)$. 
An algebra $\A$ is called a {\em subdirect product} of algebras $\A_1,\ldots,\A_n$ if there 
exist surjective homomorphisms $f_i\colon \A \to \A_i$ for $i = 1 \ldots n$ 
such that the induced homomorphism
\[
f\colon \A \to \A_1 \times \dots \times \A_n, \qquad f(x) = \langle f_1(x), \dots, f_n(x) \rangle
\]
is an embedding. In this case, $f$ is called a \emph{subdirect representation of $\A$}  and 
$\A_1,\ldots,\A_n$ are called {\em subdirect components} (for this representation) of $\A$. 
If $\Q$ is a quasivariety and $\A_1,\ldots, \A_n \in \Q$, then $\A$ is called  a {\em $\Q$-subdirect product of 
$\A_1, \dots, \A_n$}  and $f$ is called  a \emph{$\Q$-subdirect embedding}. 
$\A$ is called \emph{$\Q$-subdirectly irreducible} if for every $\Q$-subdirect embedding
$f\colon \A \to \A_1 \times \dots \times \A_n$, $\A$ is isomorphic to  $\A_i$ for some $i \in \{1,\ldots,n\}$. 

We will make essential use of the following:

\begin{lem}[{\cite[Corollary~6]{XC80}}] \label{lem:Qdecomp}
Let $\Q$ be a quasivariety and $\A \in \Q$.
Then $\A$ is a $\Q$-subdirect product of $\Q$-subdirectly irreducible members of $\Q$.
\end{lem}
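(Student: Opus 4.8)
The plan is to adapt Birkhoff's subdirect representation theorem to the relative setting, working throughout with the lattice of \emph{$\Q$-congruences} of $\A$, namely those $\theta \in \Con(\A)$ with $\A/\theta \in \Q$. First I would record the basic properties of this collection. Its meet is ordinary intersection: if each $\A/\theta_i \in \Q$, then $\A/\bigcap_i \theta_i$ embeds into $\prod_i \A/\theta_i$, which lies in $\Q$ since $\Q$ is closed under $\op{I}$, $\op{S}$, and $\op{P}$. Crucially, the collection is also closed under \emph{directed} unions. This last point is the quasivariety-specific ingredient, and I expect it to be the main obstacle, since in the variety case it is automatic but here the relative congruences need not be closed under arbitrary unions. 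If $\{\theta_i\}$ is upward directed, then $\A/\bigcup_i \theta_i$ is the direct limit of the system $\{\A/\theta_i\}$, and one invokes the fact that a direct limit of members of $\Q$ lies in $\op{ISP}_U(\Q) \subseteq \Q$; hence $\bigcup_i \theta_i$ is again a $\Q$-congruence.

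Next I would translate $\Q$-subdirect irreducibility into a lattice condition. A $\Q$-subdirect embedding $\A/\theta \to \B_1 \times \cdots \times \B_n$ corresponds, via the correspondence theorem, to a finite family of $\Q$-congruences $\theta \subseteq \phi_1, \dots, \phi_n$ with $\bigcap_i \phi_i = \theta$, where $\A/\phi_i \cong \B_i$ and $\A/\theta \cong \B_i$ precisely when $\phi_i = \theta$. Hence $\A/\theta$ is $\Q$-subdirectly irreducible exactly when $\theta$ is meet-irreducible among the $\Q$-congruences; in particular, it suffices that $\theta$ be \emph{completely} meet-irreducible, that is, that $\theta$ lie strictly below the meet of all $\Q$-congruences strictly above it.

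With these tools the representation is built as in the classical argument. For each pair $a \neq b$ in $A$, the diagonal $\De_\A$ is a $\Q$-congruence excluding $(a,b)$, so by the closure under directed unions Zorn's Lemma yields a $\Q$-congruence $\theta_{a,b}$ maximal with respect to $(a,b) \notin \theta_{a,b}$. Every $\Q$-congruence strictly above $\theta_{a,b}$ contains $(a,b)$ by maximality, so their meet contains $(a,b)$ and is therefore strictly above $\theta_{a,b}$; thus $\theta_{a,b}$ is completely meet-irreducible and $\A/\theta_{a,b}$ is $\Q$-subdirectly irreducible. Finally $\bigcap_{a \neq b} \theta_{a,b} = \De_\A$, since no pair $a \neq b$ lies in $\theta_{a,b}$, so the induced homomorphism $\A \to \prod_{a \neq b} \A/\theta_{a,b}$ is a subdirect embedding exhibiting $\A$ as a $\Q$-subdirect product of the $\Q$-subdirectly irreducible members $\A/\theta_{a,b} \in \Q$. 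When $\A$ is finite, as in all our applications, the index set is finite and this is a subdirect product in the finite sense of the definition.
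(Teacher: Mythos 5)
The paper offers no proof of this lemma at all---it is imported directly from Caicedo via \cite[Corollary~6]{XC80}---and your argument is a correct reconstruction of the standard proof behind that citation: Birkhoff's subdirect representation theorem relativized to the lattice $\Con_\Q(\A)$ of $\Q$-congruences, with Zorn's lemma producing, for each pair $a \neq b$, a $\Q$-congruence maximal among those omitting $(a,b)$, hence completely meet-irreducible, hence with $\Q$-subdirectly irreducible quotient. You have also correctly isolated the one genuinely quasivariety-specific ingredient---closure of $\Con_\Q(\A)$ under directed unions, because a direct limit of members of $\Q$ lies in $\op{ISP}_U(\Q) \subseteq \Q$---which is precisely the hypothesis (``classes closed under direct limits'') around which the cited source is organized; the only blemish is your unused side remark that $\A/\theta \cong \B_i$ \emph{precisely} when $\phi_i = \theta$ (an abstract isomorphism may exist even when $\phi_i \neq \theta$, e.g.\ for infinite algebras), but your proof relies only on the true direction, that $\phi_i = \theta$ forces the $i$-th projection to be an isomorphism.
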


\begin{lem}[{\cite[Proposition 3.1.6]{Gor98}}] \label{lem:irr_emb}
If $\Q = \Qg(\K)$ for a finite set $\K$ of finite algebras 
and $\A$ is a $\Q$-subdirectly irreducible algebra, then $\A \in \op{IS}(\K)$.
\end{lem}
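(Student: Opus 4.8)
The plan is to read off a subdirect representation of $\A$ from the fact that $\Q$ is generated by a finite set of finite algebras, and then apply the definition of $\Q$-subdirect irreducibility to it. First I would use the refinement recalled in Section~\ref{s:prelim}: since $\K$ is a finite set of finite algebras, $\Q = \Qg(\K) = \op{ISP}(\K)$. Hence the $\Q$-subdirectly irreducible algebra $\A$ embeds into a product $\prod_{i \in I}\B_i$ with each $\B_i \in \K$; writing $e$ for this embedding, $\pi_i$ for the projections, $f_i = \pi_i \circ e \colon \A \to \B_i$, and $\A_i = f_i[\A]$, each $\A_i$ is a subalgebra of $\B_i$, so $\A_i \in \op{IS}(\K) \subseteq \Q$, each $f_i$ is surjective onto $\A_i$, and the induced map $\A \to \prod_{i \in I}\A_i$ is again an embedding. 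Thus $\A$ is a subdirect product of algebras from $\op{IS}(\K)$, all of cardinality at most $m = \max\{\, |B| : \B \in \K \,\}$; equivalently, $\bigcap_{i \in I}\ker f_i = \De_{\A}$ with $\A/\ker f_i \cong \A_i \in \op{IS}(\K)$ for each $i$.

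If the index set $I$ can be taken finite, the lemma follows immediately: the map $\A \to \prod_{i \in I}\A_i$ is then a $\Q$-subdirect embedding (each $\A_i \in \Q$), so $\Q$-subdirect irreducibility forces $\A \cong \A_j$ for some $j$, and as $\A_j \in \op{IS}(\K)$ we get $\A \in \op{IS}(\K)$. The substantive task is therefore to reduce to a \emph{finite} subdirect representation, i.e.\ to find a finite $J \subseteq I$ with $\bigcap_{i \in J}\ker f_i = \De_{\A}$. This is routine once $\A$ is known to be finite: then $\Con(\A)$ is finite, the downward-directed family of finite intersections $\bigcap_{i \in F}\ker f_i$ (over finite $F \subseteq I$) takes only finitely many values, and its infimum $\De_{\A}$ is already attained at some finite $F = J$. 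Choosing this $J$ and applying $\Q$-subdirect irreducibility as above completes the argument.

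The crux, and where I expect the real work to lie, is thus the boundedness of $\A$: I would show that a $\Q$-subdirectly irreducible algebra all of whose subdirect components have at most $m$ elements can itself have at most $m$ elements, so that $\A$ is finite. The natural route is through the relative congruence lattice $\{\theta \in \Con(\A) : \A/\theta \in \Q\}$: one argues that if $\A$ were too large then $\De_{\A}$ would split as the meet of two strictly larger relative congruences whose quotients lie in $\Q$ (being subdirect products of the bounded $\A_i$) and are not isomorphic to $\A$, contradicting $\Q$-subdirect irreducibility. Making this splitting argument precise---so that both quotients are genuinely smaller than $\A$ and remain in $\Q$---is the delicate point; everything else is bookkeeping around $\op{ISP}(\K) = \Q$ and the definition of $\Q$-subdirect irreducibility.
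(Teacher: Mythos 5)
The paper gives no proof of this lemma---it is quoted from Gorbunov~\cite[Proposition 3.1.6]{Gor98}---but the standard argument behind the citation is exactly your first paragraph and nothing more: since $\K$ is a finite set of finite algebras, $\Qg(\K) = \op{ISP}(\K)$, so $\A$ has a (possibly infinite) subdirect representation with components in $\op{IS}(\K) \subseteq \Q$, i.e., $\bigcap_{i \in I} \ker f_i = \De_\A$ with each $\ker f_i \in \Con_\Q(\A)$. The operative notion of $\Q$-subdirect irreducibility, made explicit in the paper's Corollary~\ref{cor:correspondence}(b) and matching Gorbunov's, is that $\De_\A$ is meet-irreducible with respect to \emph{arbitrary} families of $\Q$-congruences; applied directly to the family $(\ker f_i)_{i \in I}$, this yields $\ker f_i = \De_\A$ for some $i$, whence $\A \cong \A_i \in \op{IS}(\K)$. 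No reduction to a finite index set is needed, and the lemma follows in one line from the decomposition you already constructed.

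The genuine gap in your proposal is the step you yourself flag as the crux. Reading the definition as quantifying only over finite subdirect embeddings, you must show that a finitely $\Q$-subdirectly irreducible algebra whose subdirect components all have at most $m$ elements has at most $m$ elements itself, and your sketch does not establish this: you assert that if $|A| > m$ then $\De_\A$ ``would split'' as the meet of two strictly larger $\Q$-congruences with quotients in $\Q$, but you construct no such pair, and membership in $\op{ISP}(\K)$ does not hand you one. From the representation one only gets the downward-directed family of finite intersections $\bigcap_{i \in F} \ker f_i$ (finite $F \subseteq I$); any two of these meet in a further member of the family, which is nontrivial whenever no finite subfamily already separates all points (in which case you would be done anyway), so this family can never witness a splitting, and producing two \emph{disjoint} nontrivial relative congruences requires information about $\Con_\Q(\A)$ that the hypotheses do not obviously supply. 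What you are attempting is in effect the claim that every relatively \emph{finitely} subdirectly irreducible member of a finitely generated quasivariety is bounded in size---a statement strictly stronger than the lemma under the complete-meet-irreducibility reading, and of the kind that, in the varietal analogue (J\'{o}nsson's lemma), is only available under congruence-distributivity-style assumptions. Your finite-case argument and the compactness reduction inside the finite lattice $\Con(\A)$ are both correct, but they are conditional on the unproven boundedness claim, so the proof as written is incomplete. The repair is simply to invoke the arbitrary-family characterization of $\Q$-subdirect irreducibility (Corollary~\ref{cor:correspondence}(b)), under which your first paragraph already completes the proof.
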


\noindent
Observe now that if $\Q$ is a quasivariety and $\A \in \Q$  is a $\Q$-subdirect product of $\A_1, \dots, \A_n$, 
then each $\A_i$ is in $\Q$ by definition, and $\A$ is isomorphic to a subalgebra of a 
product of members of $\Q$. Hence we obtain:

\begin{lem} \label{lem:ReduceGens}
Let $\K$ be a class of $\lang$-algebras and suppose that $\K'$ is obtained from $\K$ by either (a) replacing 
$\A \in \K$ with $\A_1,\ldots,\A_n$ where $\A$ is a $\Qg(\K)$-subdirect product of $\A_1, \dots, \A_n$, 
or (b) replacing $\A, \B \in \K$ with $\B$ where $\A \in \op{IS}(\B)$. Then $\Qg(\K) = \Qg(\K')$.
\end{lem}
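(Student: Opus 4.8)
The plan is to exploit that $\Qg(\K)$ is by definition the smallest quasivariety containing $\K$: whenever $\K' \subseteq \Qg(\K)$, the quasivariety $\Qg(\K)$ contains $\K'$ and hence $\Qg(\K') \subseteq \Qg(\K)$, and symmetrically. So to obtain $\Qg(\K) = \Qg(\K')$ it suffices to prove the two inclusions $\K' \subseteq \Qg(\K)$ and $\K \subseteq \Qg(\K')$. In each of the cases (a) and (b) only the explicitly named algebras differ between $\K$ and $\K'$, so I would only need to verify membership for these finitely many algebras, everything else lying in $\K \cap \K'$ and hence in both quasivarieties.

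For case (a), I would first note that since $\A$ is a $\Qg(\K)$-subdirect product of $\A_1, \dots, \A_n$, the very definition of a $\Q$-subdirect product (taken with $\Q = \Qg(\K)$) requires $\A_1, \dots, \A_n \in \Qg(\K)$; as the remaining members of $\K'$ already belong to $\K \subseteq \Qg(\K)$, this gives $\K' \subseteq \Qg(\K)$. For the reverse inclusion I would use that the subdirect representation supplies an embedding $f\colon \A \to \A_1 \times \dots \times \A_n$, so that $\A \in \op{IS}(\A_1 \times \dots \times \A_n) \subseteq \op{ISP}(\A_1, \dots, \A_n) \subseteq \Qg(\K')$; since every other member of $\K$ lies in $\K'$, this yields $\K \subseteq \Qg(\K')$, and the two inclusions combine to give the claim.

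For case (b), one direction is immediate: here $\K' \subseteq \K$ (we simply drop $\A$), so $\Qg(\K') \subseteq \Qg(\K)$. For the other direction I would only need $\A \in \Qg(\K')$, which follows from $\A \in \op{IS}(\B) \subseteq \Qg(\B) \subseteq \Qg(\K')$ using $\B \in \K'$; together with $\K \setminus \{\A\} \subseteq \K'$ this gives $\K \subseteq \Qg(\K')$ and hence $\Qg(\K) \subseteq \Qg(\K')$. I do not expect a genuine obstacle: the whole argument is a matter of unwinding definitions and invoking the standard facts $\op{IS}(\K) \subseteq \op{ISP}(\K) \subseteq \Qg(\K)$ together with the monotonicity of $\Qg$. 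The only point needing a little care is reading off from the definition of $\Q$-subdirect product that the subdirect components $\A_i$ must themselves lie in $\Q$, which is precisely what makes the first inclusion in case (a) work and is already flagged in the paragraph preceding the statement.
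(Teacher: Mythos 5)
Your proof is correct and matches the paper's argument, which is given only as the observation preceding the lemma: the components of a $\Qg(\K)$-subdirect product lie in $\Qg(\K)$ by definition, while $\A$ embeds into their product, so $\A \in \op{ISP}(\A_1,\dots,\A_n) \subseteq \Qg(\K')$, and case (b) follows from $\op{IS}(\B) \subseteq \Qg(\B)$. You have simply spelled out the two-inclusion bookkeeping that the paper leaves implicit.
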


\noindent
In particular,  replacing each algebra $\A$ in a finite set $\K$ of finite algebras with the $\Qg(\K)$-subdirectly irreducible 
algebras in some $\Qg(\K)$-subdirect representation of $\A$, then removing any algebra that embeds into 
another algebra in the set, produces  a minimal generating set for $\Qg(\K)$ that is unique up to isomorphism.

\begin{thm} \label{t:MinGenSet}
Suppose that $\Q = \Qg(\A_1, \dots, \A_n )$ where  $\A_i$ is a  finite $\Q$-subdirectly irreducible algebra for  $i \in \{1 \dots n\}$ 
and $\A_i \not \in \op{IS}(\A_j)$ for $j \not = i$. Then $\{ \A_1, \dots, \A_n \}$ is the unique minimal generating set for $\Q$ 
up to isomorphism.
\end{thm}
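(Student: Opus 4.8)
The plan is to prove two assertions: (i) that $\{\A_1,\dots,\A_n\}$ is a minimal generating set, i.e.\ $[|A_1|,\dots,|A_n|] \leq_m [|B_1|,\dots,|B_k|]$ for every finite set $\{\B_1,\dots,\B_k\}$ of finite algebras with $\Qg(\B_1,\dots,\B_k)=\Q$; and (ii) that any minimal generating set agrees with $\{\A_1,\dots,\A_n\}$ up to isomorphism. The common engine is a reduction that turns an arbitrary generating set into a canonical one while never increasing the multiset of cardinalities. First I would record two facts about the Dershowitz--Manna order, each checked directly from the definition: deleting an element from a finite multiset yields a strictly $\leq_m$-smaller multiset, and replacing one element by finitely many strictly smaller elements also yields a strictly $\leq_m$-smaller multiset. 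I would also use that $\leq_m$, being a well-order, is in particular a total order and hence antisymmetric.

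Given a generating set $\{\B_1,\dots,\B_k\}$, I would reduce it in two stages using Lemmas~\ref{lem:Qdecomp}--\ref{lem:ReduceGens}. In the first stage, since each $\B_j\in\Q$, Lemma~\ref{lem:Qdecomp} presents $\B_j$ as a $\Q$-subdirect product of $\Q$-subdirectly irreducible algebras; these are surjective homomorphic images of $\B_j$, so of cardinality at most $|B_j|$, and strictly smaller whenever $\B_j$ is not itself $\Q$-subdirectly irreducible, since a surjective homomorphism between finite algebras of equal cardinality is an isomorphism. Replacing each non-$\Q$-subdirectly-irreducible $\B_j$ by such components is justified by Lemma~\ref{lem:ReduceGens}(a), keeps the generated quasivariety equal to $\Q$, and strictly decreases the multiset of cardinalities at each genuine replacement. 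In the second stage I repeatedly discard any algebra embeddable into another member (Lemma~\ref{lem:ReduceGens}(b)), which preserves $\Q$ and again strictly decreases the multiset. The result is a set $\K'''$ of pairwise non-embeddable $\Q$-subdirectly irreducible algebras generating $\Q$, whose multiset of cardinalities is $\leq_m [|B_1|,\dots,|B_k|]$.

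The crux is to identify $\K'''$ with $\{\A_1,\dots,\A_n\}$ up to isomorphism by a two-sided embedding argument. Each $\A_i$ is $\Q$-subdirectly irreducible and $\Q=\Qg(\K''')$, so Lemma~\ref{lem:irr_emb} embeds $\A_i$ into some $\C\in\K'''$; symmetrically, since $\C$ is $\Q$-subdirectly irreducible and $\Q=\Qg(\A_1,\dots,\A_n)$, the same lemma embeds $\C$ into some $\A_{i'}$. Composing yields $\A_i \hookrightarrow \A_{i'}$, and because the $\A_i$ are pairwise non-embeddable this forces $i=i'$ and, by finiteness, $\A_i\cong\C$. Pairwise non-embeddability on both sides makes this correspondence a cardinality-preserving bijection, so $\K'''=\{\A_1,\dots,\A_n\}$ up to isomorphism and the two multisets of cardinalities coincide. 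Chaining this with the reduction gives $[|A_1|,\dots,|A_n|] \leq_m [|B_1|,\dots,|B_k|]$, which is (i).

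For (ii), if $\{\B_1,\dots,\B_k\}$ is itself minimal then (i) gives $[|A_1|,\dots,|A_n|]\leq_m[|B_1|,\dots,|B_k|]$, while minimality of $\{\B_1,\dots,\B_k\}$ applied to the generating set $\{\A_1,\dots,\A_n\}$ gives the reverse inequality, so the two multisets are equal. By antisymmetry of $\leq_m$ every intermediate multiset in the reduction must equal $[|B_1|,\dots,|B_k|]$; since each genuine replacement or deletion strictly decreases it, none can have occurred, meaning every $\B_j$ is already $\Q$-subdirectly irreducible and the $\B_j$ are pairwise non-embeddable. The two-sided embedding argument then applies verbatim with $\{\B_1,\dots,\B_k\}$ in place of $\K'''$, yielding $\{\B_1,\dots,\B_k\}=\{\A_1,\dots,\A_n\}$ up to isomorphism. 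I expect the main obstacle to be the faithful handling of the multiset order: since $\leq_m$ is not stochastic dominance, the bound cannot be read off from a naive element-by-element matching but must be produced through these strictly decreasing reduction steps, and it is precisely their strictness, together with antisymmetry, that converts the equality of cardinalities into an isomorphism of generating sets in (ii).
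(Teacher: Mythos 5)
Your proof is correct, but it is organized differently from the paper's. Both arguments run on the same ingredients --- Lemmas~\ref{lem:Qdecomp}, \ref{lem:ReduceGens} and~\ref{lem:irr_emb}, plus the composition trick $\A_i \hookrightarrow \C \hookrightarrow \A_{i'}$ combined with the hypothesis $\A_i \notin \op{IS}(\A_j)$ for $j \neq i$ --- but where the paper proves minimality by contradiction, assuming $[|B_1|,\dots,|B_k|] <_m [|A_1|,\dots,|A_n|]$ and then engaging the fine structure of the Dershowitz--Manna order (extracting the largest cardinality $r$ whose count differs, and running a pigeonhole argument among the size-$r$ algebras to manufacture two isomorphic $\A_i$'s), you prove $[|A_1|,\dots,|A_n|] \leq_m [|B_1|,\dots,|B_k|]$ positively: you normalize an arbitrary generating set by weakly $\leq_m$-decreasing steps (strict at every genuine step) to a pairwise non-embeddable set of $\Q$-subdirectly irreducible algebras, and then identify that canonical set with $\{\A_1,\dots,\A_n\}$ up to isomorphism by the two-sided embedding argument. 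Your treatment of uniqueness also differs: the paper gives a separate proper-embedding contradiction inside a putative minimal generating set (and there it tacitly assumes the members of such a set are $\Q$-subdirectly irreducible before invoking Lemma~\ref{lem:irr_emb} --- a step your scheme justifies explicitly, since a non-irreducible member would admit a strictly decreasing replacement), whereas you deduce it from antisymmetry of $\leq_m$ forcing your reduction to be vacuous. What each approach buys: the paper's contradiction is shorter once one grants the combinatorial analysis of the multiset order, while yours needs only two elementary monotonicity facts (deletion, and replacement by strictly smaller elements, strictly decrease $\leq_m$) plus antisymmetry, making it somewhat more robust and closer in spirit to the correctness argument for \proc{MinGenSet}. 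One point to keep explicit in your write-up: the strict-decrease claim in the first stage rests on the observation that a subdirect component of equal (finite) cardinality would be isomorphic to $\B_j$ itself and would thus make $\B_j$ subdirectly irreducible --- you do state this, and it is exactly what licenses ``strictly smaller whenever $\B_j$ is not irreducible.''
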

\begin{proof}
Let $\Q = \Qg(\A_1, \dots, \A_n )$ where  $\A_i$ is a  finite $\Q$-subdirectly irreducible algebra for  $i \in \{1 \dots n\}$ 
and $\A_i \not \in \op{IS}(\A_j)$ for $j \not = i$. Suppose for a contradiction that $\Q = \Qg(\B_1,\ldots,\B_k)$ and 
$[|B_1|, \dots, |B_k|] <_m [ |A_1|, \dots, |A_n|]$. Without loss of generality, we can suppose that 
$\B_j$ is $\Q$-subdirectly irreducible for $j \in \{1, \ldots, k\}$; otherwise, by Lemmas~\ref{lem:Qdecomp} and~\ref{lem:ReduceGens}, 
$\B_j$ can be replaced with the $\Q$-subdirectly irreducible components of a $\Q$-subdirect representation 
of $\B_j$ and we obtain a smaller (according to $\le_m$) generating set of algebras for $\Q$. 

It follows that there exists a largest $r \in \mathbb{N}$ such that  there are strictly more occurrences of 
$r$ in $[ |A_1|, \dots, |A_n|]$ than in $[|B_1|, \dots, |B_k|]$, and 
for each $r' > r$, the number of occurrences of $r'$ in $[ |A_1|, \dots, |A_n|]$ and 
$[|B_1|, \dots, |B_k|]$ are equal. Each $\A_i$ is finite and 
$\Q$-subdirectly irreducible, and hence by Lemma~\ref{lem:irr_emb}, 
embeds into some $\B_j$ where $|A_i| \le |B_j|$. If every $\A_i$ of size $r$ 
embeds into, and is hence isomorphic to, a $\B_j$ of size $r$, 
then (by the pigeonhole principle) there must be two isomorphic algebras in $\{ \A_1,\ldots,\A_n \}$, a contradiction. 
Hence, suppose without loss of generality that $\A_1$ embeds into $\B_1$ with $|A_1| = r$ and 
$|B_1| > r$. But notice now that $\B_1$ is also $\Q$-subdirectly irreducible and hence embeds into some  
$\A_i$ with $i \in \{2,\ldots,n\}$. So  $\A_1 \in \op{IS}(\A_i)$, a contradiction.

Finally, consider any minimal generating set $\{ \B_1,\ldots,\B_k \}$ for $\Q$, and suppose for a contradiction 
that $\B_i \not \in \op{I}(\A_1,\ldots,\A_n)$ for some $i \in \{1,\ldots,k\}$. 
Then  by Lemma~\ref{lem:irr_emb}, $\B_i$ properly embeds into  
$\A_j$ for some  $j \in \{1,\ldots,n\}$. But also  by Lemma~\ref{lem:irr_emb}, $\A_j$ embeds into 
$\B_d$ for some $d \in \{1,\ldots,k\} \setminus \{i\}$. It follows that $\B_i$ can be embedded into the strictly larger 
algebra $\B_d$. But then $\{ \B_1,\ldots,\B_k \}$ is not a minimal generating set for $\Q$, a 
contradiction. 
\end{proof}

In the remainder of this section, we develop some results for $\Q$-subdirect products and $\Q$-subdirectly irreducible algebras, and 
present an algorithm for obtaining minimal generating sets for finitely generated quasivarieties.
First, let us recall the following theorem of Birkhoff, which establishes a useful relationship between subdirect representations of
a given algebra $\A$ and sets of congruences on $\A$.

\begin{lem}[{\cite[Universal~Algebra,~Theorem~11]{Bir40}}] \label{lem:correspondence}
If $\A$ is a subdirect product of the family  $(\A_i)_{i \in I}$,
then there exist for $i \in I$, congruences $\th_i \in \Con(\A)$ such that 
$\A_i \cong \A/\th_i$ and $\bigcap_{i \in I} \th_i = \De_\A$.
Conversely, for any family of congruences $(\th_i)_{i \in I}$ on $\A$, the quotient
$\A/(\bigcap_{i \in I} \th_i)$ is a subdirect product of the family $(\A/\th_i)_{i \in I}$.
\end{lem}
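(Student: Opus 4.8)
The plan is to prove both directions of the correspondence between subdirect representations and families of congruences with trivial intersection. For the forward direction, I would start from a subdirect product $\A$ of the family $(\A_i)_{i \in I}$, which by definition supplies surjective homomorphisms $f_i \colon \A \to \A_i$ such that the induced map $f$ into the product is an embedding. For each $i \in I$, I would set $\th_i = \ker f_i$. Since $f_i$ is a surjective homomorphism, the homomorphism theorem gives $\A / \th_i \cong \A_i$, which is the first claim. For the intersection claim, I would observe that $(a,b) \in \bigcap_{i \in I} \th_i$ means $f_i(a) = f_i(b)$ for every $i$, which is exactly the condition $f(a) = f(b)$; since $f$ is an embedding (hence injective), this forces $a = b$, so $\bigcap_{i \in I} \th_i = \De_\A$.

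For the converse, I would take an arbitrary family $(\th_i)_{i \in I}$ of congruences on $\A$, write $\th = \bigcap_{i \in I} \th_i$, and exhibit $\A / \th$ as a subdirect product of $(\A / \th_i)_{i \in I}$. The natural candidates for the surjective maps are the quotient maps $g_i \colon \A / \th \to \A / \th_i$ sending $a / \th \mapsto a / \th_i$. The main verification is that each $g_i$ is well defined: if $a / \th = b / \th$ then $(a,b) \in \th \subseteq \th_i$, so $a / \th_i = b / \th_i$. Each $g_i$ is then clearly a surjective homomorphism. It remains to check that the induced map $g \colon \A / \th \to \prod_{i \in I} \A / \th_i$ is an embedding, i.e.\ injective: if $g(a / \th) = g(b / \th)$ then $a / \th_i = b / \th_i$ for all $i$, so $(a,b) \in \th_i$ for every $i$, hence $(a,b) \in \bigcap_{i \in I} \th_i = \th$, giving $a / \th = b / \th$.

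The argument is essentially a bookkeeping exercise linking kernels of homomorphisms to congruences and injectivity to trivial intersection, so there is no deep obstacle; the only point requiring slight care is confirming that the various quotient maps are well defined and that the homomorphism theorem applies cleanly in the possibly infinite-index setting. I would also note that both directions only use the general definition of subdirect product together with the standard isomorphism theorems for universal algebra, so no finiteness or quasivariety hypotheses are needed here, in keeping with the fully general statement of the lemma.
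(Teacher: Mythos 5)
Your proof is correct and complete: the paper itself gives no proof of this lemma, citing it directly from Birkhoff, and your argument (kernels $\th_i = \ker f_i$ with the homomorphism theorem for the forward direction, well-defined quotient maps $a/\th \mapsto a/\th_i$ with injectivity from $\bigcap_i \th_i = \th$ for the converse) is exactly the standard textbook proof that the citation points to. Your closing remark is also accurate — nothing beyond the isomorphism theorems is needed, and the argument works for arbitrary index sets $I$ without finiteness or quasivariety assumptions.
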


\noindent
The set of \emph{$\Q$-congruences on $\A$} is defined as $\Con_\Q(\A) = \{ \th \in \Con(\A) : \A/\th \in \Q \}$. 
Clearly, the above lemma also holds for $\Q$-subdirect representations of an algebra $\A$ with 
respect to $\Q$-congruences. 
Note, moreover, that the number of congruences needed to obtain a 
subdirect representation of a finite algebra $\A$ is at most $|A|$, the maximal number of coatoms 
of the congruence lattice $\Con(\A)$.

\begin{cor} \label{cor:correspondence}
Let $\Q$ be a quasivariety and $\A \in \Q$.

\begin{enumerate}[\rm(a)]
\item

If $\A$ is a $\Q$-subdirect product of  the family $(\A_i)_{i \in I}$,
 then there exist for $i \in I$, $\Q$-congruences $\th_i \in \Con_{\Q}(\A)$ such that 
$\A_i \cong \A/\th_i$ and $\bigcap_{i \in I} \th_i = \De_\A$.
Conversely, for any family of $\Q$-congruences $(\th_i)_{i \in I}$ on $\A$ with 
$\bigcap_{i \in I} \th_i = \De_\A$, $\A$ is a $\Q$-subdirect product  of the family $(\A/\th_i)_{i \in I}$.

\item
$\A$ is $\Q$-subdirectly irreducible iff the bottom element $\De_\A$ of $\Con_\Q(\A)$ is meet-irreducible 
(i.e., if $\bigcap_{i \in I} \th_i = \De_\A$ for $\th_i \in \Con_\Q(\A)$, then $\De_\A = \th_i$ for some $i \in I$).
\end{enumerate}
\end{cor}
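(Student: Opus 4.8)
The plan is to deduce Corollary~\ref{cor:correspondence} from Lemma~\ref{lem:correspondence} by relativizing everything to $\Q$-congruences. The key observation is that subdirect representations of $\A$ correspond (via Lemma~\ref{lem:correspondence}) to families of congruences with intersection $\De_\A$, and the only extra ingredient in the $\Q$-relative setting is the requirement that each quotient $\A/\th_i$ lie in $\Q$, i.e.\ that each $\th_i$ be a $\Q$-congruence. Since $\A \in \Q$ is assumed throughout, the constituent algebras $\A_i$ in a $\Q$-subdirect product are by definition in $\Q$, so the correspondence lines up exactly.

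For part~(a), I would argue both directions. For the forward direction, suppose $\A$ is a $\Q$-subdirect product of $(\A_i)_{i \in I}$. By definition this is in particular a subdirect product, so Lemma~\ref{lem:correspondence} yields congruences $\th_i \in \Con(\A)$ with $\A_i \cong \A/\th_i$ and $\bigcap_{i \in I} \th_i = \De_\A$. It remains only to check $\th_i \in \Con_\Q(\A)$, which is immediate: $\A/\th_i \cong \A_i \in \Q$ since a $\Q$-subdirect product has all its subdirect components in $\Q$ by definition, and $\Q$ is closed under isomorphism. For the converse, given $\Q$-congruences $(\th_i)_{i \in I}$ with $\bigcap_{i \in I} \th_i = \De_\A$, Lemma~\ref{lem:correspondence} gives that $\A$ is a subdirect product of $(\A/\th_i)_{i \in I}$; and each $\A/\th_i \in \Q$ precisely because each $\th_i$ is a $\Q$-congruence, so this is a genuine $\Q$-subdirect product.

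For part~(b), I would prove both implications of the stated equivalence. Suppose first that $\De_\A$ is meet-irreducible in $\Con_\Q(\A)$, and let $f \colon \A \to \A_1 \times \dots \times \A_n$ be any $\Q$-subdirect embedding. By part~(a) there are $\Q$-congruences $\th_1,\dots,\th_n$ with $\A_i \cong \A/\th_i$ and $\bigcap_{i} \th_i = \De_\A$; meet-irreducibility forces $\De_\A = \th_i$ for some $i$, whence $\A \cong \A/\th_i \cong \A_i$, so $\A$ is $\Q$-subdirectly irreducible. Conversely, suppose $\A$ is $\Q$-subdirectly irreducible and that $\bigcap_{i \in I} \th_i = \De_\A$ for some $\th_i \in \Con_\Q(\A)$. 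Here the mild obstacle is that $I$ may be infinite while $\Q$-subdirect irreducibility is phrased via \emph{finite} products; I would handle this by invoking the remark preceding the corollary that for finite $\A$ only finitely many congruences (at most $|A|$, the coatoms of $\Con(\A)$) are needed, or more directly by noting that $\De_\A$ is the intersection of the finitely many coatoms of $\Con_\Q(\A)$ above it, reducing to a finite subfamily. Applying the converse direction of part~(a) to this finite subfamily exhibits $\A$ as a $\Q$-subdirect product of the $\A/\th_i$, and $\Q$-subdirect irreducibility yields $\A \cong \A/\th_i$ for some $i$, which gives $\th_i = \De_\A$. The main point to get right is thus this finiteness reduction in the infinite-index case; everything else is a routine transfer of Lemma~\ref{lem:correspondence} through the defining closure conditions of $\Q$.
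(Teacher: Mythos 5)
Your proposal is correct in substance and follows the same route as the paper, which in fact offers no separate proof of this corollary: after defining $\Con_\Q(\A)$ it simply remarks that Lemma~\ref{lem:correspondence} ``clearly also holds'' for $\Q$-subdirect representations with respect to $\Q$-congruences. Your part~(a) is exactly this relativization, with the two closure observations (components of a $\Q$-subdirect product lie in $\Q$ by definition; quotients by $\Q$-congruences lie in $\Q$ by definition of $\Con_\Q(\A)$) spelled out, and your part~(b) is the routine consequence.

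That said, your handling of the ``mild obstacle'' in (b) deserves correction. Under the reading the paper must intend, there is no obstacle at all: Lemma~\ref{lem:Qdecomp} (Caicedo) decomposes an arbitrary member of $\Q$ into a possibly \emph{infinite} family of factors, and Lemma~\ref{lem:correspondence} and the corollary itself are phrased for families $(\A_i)_{i \in I}$, so subdirect products --- and hence $\Q$-subdirect irreducibility --- should be read over arbitrary index sets, and you may apply part~(a) directly to the whole family $(\th_i)_{i \in I}$ with no finiteness reduction. If one instead insists on the finite-tuple definition given earlier in the paper, no reduction can rescue the statement for infinite $\A$: in the quasivariety of abelian groups, every finite family of nontrivial congruences on $\mathbb{Z}$ has nontrivial intersection, so $\mathbb{Z}$ is irreducible in the finite sense, yet the congruences modulo the primes intersect to $\De_{\mathbb{Z}}$, so the forward direction of (b) would fail; your reduction is thus valid only for finite $\A$ (where it is correctly justified by your first suggestion, or even more simply by the fact that a finite $\Con(\A)$ admits only finitely many distinct $\th_i$), whereas the corollary is stated for arbitrary $\A \in \Q$. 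Moreover, your ``more direct'' alternative is false as stated: $\De_\A$ is in general \emph{not} the intersection of the coatoms of $\Con_\Q(\A)$ --- if $\De_\A$ is meet-irreducible, that intersection contains the monolith (consider $\mathbb{Z}_4$ in groups, whose congruence lattice is a three-element chain) --- and the set of coatoms is only guaranteed finite when $\A$ is finite; you have conflated this with the paper's remark bounding the number of congruences \emph{needed} for a subdirect representation of a finite algebra. Finally, your closing inference that $\A \cong \A/\th_i$ gives $\th_i = \De_\A$ is a cardinality argument valid only for finite $\A$; in general one needs irreducibility in the form ``some projection $f_i$ is an isomorphism''. None of this damages the corollary in the setting where the paper uses it (finite algebras in finitely generated quasivarieties), but your write-up should either adopt the arbitrary-family reading, making the reduction unnecessary, or restrict the finiteness detour to its correct justification.
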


\noindent The problem of finding the congruence closure for a given equivalence relation on a finite algebra,
i.e., the smallest congruence containing this equivalence, can be solved in polynomial time.
This result was used in \cite{Dem82} to provide a polynomial time algorithm for calculating a subdirect 
representation of a finite algebra. The problem of finding the $\Q$-congruence closure of an equivalence relation on a finite 
algebra with respect to a finitely generated quasivariety $\Q$ appears to be much harder, however. Instead,  
we use here the following characterization of $\Q$-subdirectly irreducible algebras as the basis for a procedure that constructs 
$\Q$-subdirectly irreducible components for a $\Q$-subdirect representation of a given finite algebra without needing to
calculate the $\Q$-congruence lattice.

\begin{lem} \label{lem:CharQSubDir}
For a finite set $\K$ of finite algebras and $\A \in \Qg(\K)$, the following are equivalent:
\begin{enumerate}[\rm (1)]
\item $\A$ is $\Qg(\K)$-subdirectly irreducible.
\item $\bigcap \{ \th \in \Con(\A)\setminus\{\De_\A\} : \A/\th \in \op{IS}(\K) \} \neq \De_\A$.
\end{enumerate}
\end{lem}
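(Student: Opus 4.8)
The plan is to read both conditions as statements about the ordinary congruence lattice $\Con(\A)$ and to route everything through Corollary~\ref{cor:correspondence}(b), which already characterizes $\Qg(\K)$-subdirect irreducibility as meet-irreducibility of $\De_\A$ in $\Con_\Q(\A)$. Writing $\Q = \Qg(\K)$ and abbreviating the set occurring in (2) as $S = \{ \th \in \Con(\A) \setminus \{\De_\A\} : \A/\th \in \op{IS}(\K) \}$, the first observation is that $S \subseteq \Con_\Q(\A) \setminus \{\De_\A\}$, since $\op{IS}(\K) \subseteq \op{ISP}(\K) = \Q$. The entire content of the lemma is then that, when testing meet-irreducibility of $\De_\A$, one may replace the condition ``$\A/\th \in \Q$'' by the a priori stronger ``$\A/\th \in \op{IS}(\K)$'' without loss.

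For $(1) \Rightarrow (2)$ I would argue contrapositively. Suppose (2) fails, so $\bigcap S = \De_\A$. Every $\th \in S$ lies in $\Con_\Q(\A) \setminus \{\De_\A\}$, so this is a family of $\Q$-congruences, none equal to $\De_\A$, whose intersection is $\De_\A$; by Corollary~\ref{cor:correspondence}(b), $\De_\A$ is therefore not meet-irreducible in $\Con_\Q(\A)$, whence $\A$ is not $\Q$-subdirectly irreducible and (1) fails. This direction uses only the inclusion $\op{IS}(\K) \subseteq \Q$ and is essentially immediate.

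The real work lies in $(2) \Rightarrow (1)$, again taken contrapositively. Assuming $\A$ is not $\Q$-subdirectly irreducible, I must produce a family witnessing $\bigcap S = \De_\A$ whose members have quotients in $\op{IS}(\K)$, not merely in $\Q$. The decisive move is to apply Lemma~\ref{lem:Qdecomp}, writing $\A$ as a $\Q$-subdirect product of $\Q$-subdirectly irreducible algebras $(\B_i)_{i \in I}$; Corollary~\ref{cor:correspondence}(a) then supplies $\Q$-congruences $\th_i$ with $\A/\th_i \cong \B_i$ and $\bigcap_{i \in I} \th_i = \De_\A$. No $\th_i$ can equal $\De_\A$, for otherwise $\A \cong \B_i$ would itself be $\Q$-subdirectly irreducible, contradicting our assumption. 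Now each $\B_i$ is $\Q$-subdirectly irreducible, so Lemma~\ref{lem:irr_emb} places $\B_i \cong \A/\th_i$ in $\op{IS}(\K)$; hence every $\th_i$ belongs to $S$. Consequently $\bigcap S \subseteq \bigcap_{i \in I} \th_i = \De_\A$, and since $\De_\A$ sits below every congruence we conclude $\bigcap S = \De_\A$, so (2) fails.

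I expect the main obstacle to be precisely this upgrade from ``$\A/\th \in \Q$'' to ``$\A/\th \in \op{IS}(\K)$'': the congruences handed over by mere meet-reducibility only guarantee quotients in the (typically enormous) quasivariety $\Q$, whereas condition (2) insists on quotients embeddable into the finite generating set $\K$. The insight resolving this is that it suffices to decompose through $\Q$-subdirectly irreducible quotients, because Lemma~\ref{lem:irr_emb} forces those to embed into $\K$ itself; finiteness of $\K$ and $\A$ keeps all the families and intersections finite and well-behaved. I would also record the degenerate check that if $S = \emptyset$ then $\bigcap S = \nabla_\A \neq \De_\A$ for nontrivial $\A$, so (2) holds vacuously and no case is lost.
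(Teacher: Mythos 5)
Your proof is correct and follows essentially the same route as the paper's: both directions are argued contrapositively, with $(1)\Rightarrow(2)$ reduced to Corollary~\ref{cor:correspondence} and $(2)\Rightarrow(1)$ obtained by decomposing $\A$ via Lemma~\ref{lem:Qdecomp} into $\Qg(\K)$-subdirectly irreducible quotients and upgrading them to $\op{IS}(\K)$ via Lemma~\ref{lem:irr_emb}. Your explicit check that no $\th_i$ equals $\De_\A$ (which the paper leaves implicit) and your remark on the case $S=\emptyset$ are harmless refinements, not a different argument.
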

\begin{proof}
For convenience, let
\[
\Theta = \{ \th \in \Con(\A)\setminus\{\De_\A\} : \A/\th \in \op{IS}(\K) \} \subseteq \Con_{\Qg(\K)}(\A).
\]
(1)$\Rightarrow$(2) We proceed contrapositively. If $\bigcap \Theta = \De_\A$, then by Corollary~\ref{cor:correspondence}(a), 
$\A$ is a $\Qg(\K)$-subdirect product of algebras in $\{\A / \th : \th \in \Theta\}$.  
But also by Corollary~\ref{cor:correspondence}(b),  since $\De_\A \not\in \Theta$, $\A$ is not $\Qg(\K)$-subdirectly irreducible. \medskip

\noindent(2)$\Rightarrow$(1) Again, we proceed contrapositively.  
If $\A$ is not $\Qg(\K)$-subdirectly irreducible, then combining  Lemma~\ref{lem:Qdecomp} and 
Corollary~\ref{cor:correspondence}, there exist $(\th_i)_{i \in I} \subseteq \Con_{\Qg(\K)}(\A)\setminus\{\De_\A\}$ such that 
$\bigcap_{i \in I} \th_i = \De_\A$ and $\A$ is a $\Qg(\K)$-subdirect product of $\Qg(\K)$-subdirectly irreducible 
algebras $\A / \th_i$ ($i \in I$). But then also  by Lemma~\ref{lem:irr_emb}, we have 
$\A/\th_i \in \op{IS}(\K)$ for each $i \in I$. So $(\th_i)_{i \in I} \subseteq \Theta$,
and hence $\bigcap \Theta = \De_\A$.
\end{proof}

\begin{figure}[t]
\begin{algorithmic}[1]
\Function {MinGenSet}{$\K$}
   \State \textbf{declare} $S_1, S_2, C$ : set
   \State \textbf{declare} $\mathcal{M}$ : list
   \State \textbf{declare} $\A$ : algebra
   \State \textbf{declare} $i$ : integer
   \State $\mathcal{M} \leftarrow \mathrm{list}(\K)$

   \State $i \leftarrow 1$
   \While{$i \leq \mathrm{length}({\mathcal{M}})$}
      \State $\A \leftarrow {\mathcal{M}}[i]$
      \State $C \leftarrow \Con(\A)\setminus\{\De_\A\}$ \label{alg:MinGenSet:Con}
      \State $S_1 \leftarrow \{ \th \in C : \A/\th$ embeds into $\A \}$ \label{alg:MinGenSet:S1}
      \State $S_2 \leftarrow \{ \th \in C : \A/\th$ embeds into some ${\mathcal{M}}[j] \neq \A \}$
      \If{$\bigcap (S_1 \cup S_2) = \De_\A$}
      {
         \ForAll{$\th$ in $S_1 \setminus S_2$}
            \State add $\A/\th$ to ${\mathcal{M}}$ \label{alg:MinGenSet:AddQuot}
         \EndFor
      }
         \State remove $\A$ from ${\mathcal{M}}$ \label{alg:MinGenSet:DelA}
      \Else
         \State $i \leftarrow i + 1$
      \EndIf
   \EndWhile
   \ForAll{$\A$ in ${\mathcal{M}}$} \label{alg:MinGenSet:E1}
      \If{$\A$ embeds into some ${\mathcal{M}}[j] \neq \A$}
         \State remove $\A$ from ${\mathcal{M}}$
      \EndIf
   \EndFor \label{alg:MinGenSet:E2}
   \State \Return {set($\mathcal{M}$)}
\EndFunction
\end{algorithmic}
\caption{For a finite set $\K$ of finite algebras, return the minimal generating set of $\Qg(\K)$. \label{alg:MinGenSet}}
\end{figure}

We now have all the  ingredients necessary to describe an algorithm \proc{MinGenSet} 
(see Figure~\ref{alg:MinGenSet}) that calculates the  (unique up to isomorphism) minimal generating set 
for a quasivariety $\Q = \Qg(\K)$, where $\K$ is a finite set of finite $\lang$-algebras. 
By Theorem~\ref{t:MinGenSet}, it suffices to find a set of $\Q$-subdirectly irreducible algebras that 
generates $\Q$, where no member of the set embeds into another member of the set. 

The algorithm proceeds by considering each $\A \in \K$ in turn. First, the congruence lattice $\Con(\A)$ 
is generated  (line~\ref{alg:MinGenSet:Con}) by checking for all equivalence relations if they are congruences. 
Next, the congruences $\th \in \Con(\A) \setminus \{\De_\A\}$ such that $\A / \th$ 
embeds into $\A$ or some other member of $\K$ are collected in sets $S_1$ and $S_2$, respectively. 
If  $\bigcap (S_1 \cup S_2) \not = \De_\A$, then $\A$ is $\Q$-subdirectly irreducible 
by Lemma~\ref{lem:CharQSubDir}, so the algorithm proceeds to the next algebra in $\K$. 
Otherwise $\bigcap (S_1 \cup S_2) = \De_\A$ and by Lemma~\ref{lem:CharQSubDir}, $\A$ is not 
$\Q$-subdirectly irreducible. In this case, for each $\th \in S_1 \setminus S_2$, the algebra 
$\A / \th$ is added to $\K$ (line~\ref{alg:MinGenSet:AddQuot}) and $\A$ is removed from $\K$ 
 (line~\ref{alg:MinGenSet:DelA}). Note that since the cardinalities of the added algebras are strictly smaller 
 than the cardinality of the removed algebra, the new set of algebras is smaller according to the multiset 
 ordering defined in Section~\ref{s:fingen}. Hence this procedure is terminating. Moreover, the resulting 
 finite set of finite algebras must generate the quasivariety $\Q$ (by~Lemma \ref{lem:ReduceGens}), contain 
 only $\Q$-subdirectly irreducible algebras, and not contain any algebra that embeds into another member of the set
(lines~\ref{alg:MinGenSet:E1}--\ref{alg:MinGenSet:E2}). 
 Hence by Theorem~\ref{t:MinGenSet}, we obtain:

\begin{thm}
For a finite set $\K$ of finite $\lang$-algebras, \proc{MinGenSet}($\K$)
returns the  (unique up to isomorphism) minimal generating set for the quasivariety $\Qg(\K)$.
\end{thm}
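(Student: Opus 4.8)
The plan is to verify that \proc{MinGenSet}($\K$) terminates and that its output satisfies the hypotheses of Theorem~\ref{t:MinGenSet}, from which uniqueness and minimality follow immediately. The theorem to be proved is essentially a correctness statement for the algorithm, so the argument has three obligations: termination, the invariant that $\Qg$ of the current list equals $\Qg(\K)$, and the structural properties (every returned algebra is $\Q$-subdirectly irreducible and none embeds into another) needed to invoke Theorem~\ref{t:MinGenSet}.

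First I would establish termination. Each pass of the \textbf{while} loop either advances $i$ (when $\A = \mathcal{M}[i]$ is found $\Q$-subdirectly irreducible) or replaces $\A$ by finitely many quotients $\A/\th$ with $\th \neq \De_\A$. Since $\K$ consists of \emph{non-trivial} finite algebras, any such $\th$ gives $|A/\th| < |A|$, so the replacement strictly decreases the multiset $[|\mathcal{M}[1]|,\dots]$ with respect to $\leq_m$ from Section~\ref{s:fingen}; as $\leq_m$ is a well-ordering, only finitely many replacements can occur, and between them $i$ can only increase up to the (bounded) length, so the loop halts. The final \textbf{for} loop (lines~\ref{alg:MinGenSet:E1}--\ref{alg:MinGenSet:E2}) is over a fixed finite list and trivially terminates.

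Next I would establish the invariant $\Qg(\mathcal{M}) = \Qg(\K)$ throughout. The key point is that when $\bigcap(S_1 \cup S_2) = \De_\A$, the algebra $\A$ is a $\Q$-subdirect product of the quotients $\{\A/\th : \th \in S_1 \cup S_2\}$ by Corollary~\ref{cor:correspondence}(a). Each such quotient embeds into $\A$ or into some other $\mathcal{M}[j]$, so by part~(b) of Lemma~\ref{lem:ReduceGens} the quotients in $S_2$, and those in $S_1 \cap S_2$, need not be added; only the quotients in $S_1 \setminus S_2$ are adjoined before $\A$ is deleted. Applying part~(a) of Lemma~\ref{lem:ReduceGens} (replace $\A$ by the subdirect components) followed by part~(b) (drop any component embedding into a surviving algebra) shows each loop iteration preserves $\Qg(\mathcal{M})$; the final embedding-removal loop invokes part~(b) once more. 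When the loop exits, every surviving $\A$ had $\bigcap(S_1 \cup S_2) \neq \De_\A$, and since $S_1 \cup S_2 = \{\th \in \Con(\A)\setminus\{\De_\A\} : \A/\th \in \op{IS}(\K)\}$, Lemma~\ref{lem:CharQSubDir} gives that $\A$ is $\Q$-subdirectly irreducible. The concluding loop ensures no returned algebra embeds into another. Thus the output is a set of $\Q$-subdirectly irreducible algebras, pairwise non-embedding, generating $\Q$, and Theorem~\ref{t:MinGenSet} delivers the conclusion.

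The main obstacle I anticipate is the subtle bookkeeping in the termination-plus-invariant argument, specifically ensuring that the set $S_1 \cup S_2$ used in the test is exactly the set $\Theta$ of Lemma~\ref{lem:CharQSubDir}, so that the irreducibility test is sound \emph{relative to the current list} $\mathcal{M}$ rather than the original $\K$. Because $\mathcal{M}$ changes as algebras are added and removed, one must check that ``$\A/\th$ embeds into some $\mathcal{M}[j]$'' at the time of testing still characterizes membership in $\op{IS}(\K)$; this rests on the invariant $\Qg(\mathcal{M}) = \Qg(\K)$ together with Lemma~\ref{lem:irr_emb}, since a $\Q$-subdirectly irreducible quotient $\A/\th$ lies in $\op{IS}(\K)$ exactly when it embeds into a generator of the current list. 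Handling this interplay carefully — rather than the individual lemma applications, which are routine — is where the real work lies.
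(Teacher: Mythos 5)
Your proof is correct and follows essentially the same route as the paper's: termination via the multiset well-ordering, preservation of $\Qg(\mathcal{M})=\Qg(\K)$ via Lemma~\ref{lem:ReduceGens}, soundness of the irreducibility test via Lemma~\ref{lem:CharQSubDir} applied to the current list (justified by the invariant, a bookkeeping point the paper glosses over), and a concluding appeal to Theorem~\ref{t:MinGenSet}. One small remark: non-triviality of the algebras is not needed for termination, since for a finite algebra $\A$ any congruence $\th\neq\De_\A$ already forces $|A/\th|<|A|$.
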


\noindent
We remark that although the algorithm \proc{MinGenSet} does not need to calculate the $\Q$-congruence lattice, already 
calculating the congruence lattice of a finite algebra can take exponential time. Moreover, we make frequent 
use in our algorithm here of checking embeddings, which is in general an NP-complete problem (see~\cite{Mil79, BS00}). 


\section{Admissibility and Free Algebras} \label{s:admfree}

Let us fix a class of $\lang$-algebras $\K$. A homomorphism $\si \colon \alg{Tm}_\lang \rightarrow \alg{Tm}_\lang$ 
is called a \emph{$\K$-unifier} of a set $\Si$ of $\lang$-equations  if $\mdl{\K} \si(\f) \eq \si(\p)$ for every $(\f \eq \p) \in \Si$; 
in this case, $\Si$ is said to be {\em $\K$-unifiable}. An $\lang$-quasiequation 
$\Si \imp \f \eq \p$ is {\em $\K$-admissible}  if every $\K$-unifier of $\Si$ is a $\K$-unifier of $\f \eq \p$.  
In this section, we present characterizations of $\K$-unifiability, $\K$-admissibility, and related properties,  
emphasizing their close relationship to free algebras. The characterizations obtained will then be used 
in Section~\ref{s:algorithms} to develop algorithms for checking these properties in the context of finitely generated 
quasivarieties.

Let us first take a closer look at $\K$-unifiability, noting that a finite set $\Si$ of $\lang$-equations is $\K$-unifiable 
if and only if the $\lang$-negative clause $\Si \imp \emptyset$ is not $\K$-admissible
(equivalently, when $\K$ contains a non-trivial algebra, if and only if 
the $\lang$-quasiequation $\Si \imp x \eq y$ with $x,y$ not occurring in $\Si$ is not $\K$-admissible). 
We will see that for checking $\K$-unifiability, the ``optimal'' solution is to check satisfiability in the 
smallest finite subalgebra (if such an algebra exists) of $\F_\K(\omega)$. 

\begin{lem}\label{l:unif}
The following are equivalent for any class $\K'$ of $\lang$-algebras:
\begin{enumerate}[\rm (1)]
\item $\Si$ is $\K$-unifiable \ iff \ $\Si$ is $\K'$-satisfiable.
\item $\nVg(\K') = \nVg(\F_{\K}(\omega))$.
\end{enumerate}
\end{lem}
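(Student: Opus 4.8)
The plan is to route the whole equivalence through a single bridging fact: for any finite set $\Si$ of $\lang$-equations,
\[
\Si \text{ is } \K\text{-unifiable} \quad\Longleftrightarrow\quad \Si \text{ is } \F_\K(\omega)\text{-satisfiable}.
\]
Once this is available, both (1) and (2) turn into assertions comparing $\K'$ with $\F_\K(\omega)$, and the result will fall out of the duality between satisfiability of a finite equation set and validity of the associated negative clause.

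First I would prove the bridging fact. Let $\pi\colon \alg{Tm}_\lang \to \F_\K(\omega)$ be the canonical surjection, so that $\ker \pi = \theta_\K$. The key observation is that $\models_\K \f \eq \p$ holds exactly when $(\f,\p)\in\theta_\K$, i.e. when $\pi(\f)=\pi(\p)$: indeed $\{\ker h : h\colon \alg{Tm}_\lang \to \A,\ \A\in\K\}$ coincides with $\{\phi : \alg{Tm}_\lang/\phi \in \op{IS}(\K)\}$, whose intersection is $\theta_\K$ by definition. Now if $\si$ is a $\K$-unifier of $\Si$, then $h = \pi \circ \si$ is a homomorphism into $\F_\K(\omega)$ with $\Si \subseteq \ker h$, so $\Si$ is $\F_\K(\omega)$-satisfiable. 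Conversely, given $h\colon \alg{Tm}_\lang \to \F_\K(\omega)$ with $\Si \subseteq \ker h$, surjectivity of $\pi$ lets me choose, for each variable $x$, a term $\si(x)$ with $\pi(\si(x)) = h(x)$; the resulting endomorphism $\si$ of $\alg{Tm}_\lang$ then satisfies $\pi \circ \si = h$ on generators, hence everywhere, so $\models_\K \si(\f)\eq\si(\p)$ for each $(\f\eq\p)\in\Si$ and $\si$ is a $\K$-unifier.

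With this in hand I would recast both conditions in terms of negative clauses. Since $\Si$ is $\mathcal{C}$-satisfiable precisely when the negative clause $\Si \imp \emptyset$ fails in $\mathcal{C}$, statement (1), combined with the bridging fact, says exactly that $\F_\K(\omega)$ and $\K'$ validate the same negative clauses. On the other hand, $\nVg(\mathcal{C})$ is the model class of the negative clauses valid in $\mathcal{C}$ (the Birkhoff-type characterization underlying the identity $\nVg(\K)=\op{H}^{-1}\op{S}\op{P}^*_U(\K)$ recalled in Section~\ref{s:prelim}, together with closure of this model class under $\op{H}^{-1}$, $\op{S}$, $\op{P}^*_U$), and consequently two classes generate the same antivariety if and only if they validate the same negative clauses. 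Hence (2) is likewise equivalent to ``$\F_\K(\omega)$ and $\K'$ validate the same negative clauses'', and chaining the two equivalences gives (1)$\Leftrightarrow$(2).

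The main obstacle I anticipate is the backward direction of the bridging fact: recovering a genuine substitution, that is an endomorphism of the term algebra, from a mere homomorphism into $\F_\K(\omega)$. This is exactly where I must use that $X$ is countably infinite and that $\pi$ is surjective, selecting term representatives for the images of the variables; once $\si$ is fixed on generators the identity $\pi\circ\si = h$ propagates to all terms, and the equivalence $\models_\K \f\eq\p \Leftrightarrow \pi(\f)=\pi(\p)$ finishes it. The passage to negative clauses and the appeal to the antivariety characterization are routine by comparison.
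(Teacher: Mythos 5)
Your proposal is correct and takes essentially the same route as the paper: reduce condition (2) to the statement that $\K'$ and $\F_\K(\omega)$ validate the same negative clauses, observe that validity of $\Si \imp \emptyset$ is exactly the failure of satisfiability of $\Si$, and prove the bridging fact that $\K$-unifiability of $\Si$ coincides with $\F_\K(\omega)$-satisfiability by translating along the canonical quotient map (your $h = \pi\circ\si$ is the paper's $h(x)=\si(x)/\theta_\K(\omega)$, and your choice of representatives with $\pi(\si(x))=h(x)$ is the paper's $\si(x)\in h(x)$). The only difference is cosmetic: you spell out the identity $\models_\K \f\eq\p$ iff $(\f,\p)\in\theta_\K(\omega)$, which the paper uses tacitly.
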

\begin{proof}
Recall that $\nVg(\K') = \nVg(\F_{\K}(\omega))$ is equivalent to the condition that  an 
$\lang$-negative clause $\Si \imp \emptyset$ is $\K'$-valid iff it  is $\F_\K(\omega)$-valid. 
However,  $\Si \imp \emptyset$ is  $\K'$-valid iff $\Si$ is not $\K'$-satisfiable and 
$\Si \imp \emptyset$ is $\F_\K(\omega)$-valid iff $\Si$ is not $\F_\K(\omega)$-satisfiable. 
For the equivalence of (1) and (2), it suffices therefore to show that 
 $\Si$ is $\F_\K(\omega)$-satisfiable iff $\Si$ is $\K$-unifiable. Suppose first that 
$h \colon \alg{Tm_\lang} \to \F_\K(\omega)$ satisfies $\Si$. Then any homomorphism 
$\si \colon \alg{Tm_\lang} \to  \alg{Tm_\lang}$ defined such that $\si(x) \in h(x)$ for each variable $x$ 
is a $\K$-unifier of $\Si$. Conversely, if $\sigma$ is a $\K$-unifier of $\Si$, then the homomorphism 
$h \colon \alg{Tm_\lang} \to \F_\K(\omega)$ defined by $h(x) = \sigma(x) / \theta_\K(\omega)$  for each variable $x$ 
satisfies $\Si$. 
\end{proof}

\begin{prop}\label{p:unif}
Let $\K$ be a class of $\lang$-algebras and $\C \in \op{S}(\F_\K(\omega))$. 
\begin{enumerate}[\rm(a)]
\item	$\Si$ is $\K$-unifiable \ iff \ $\Si$ is $\C$-satisfiable.
\item	If $\C$ is a smallest finite subalgebra of $\F_\K(\omega)$ and 
			$\K'$ is a class of $\lang$-algebras such that $\Si$ is $\K$-unifiable iff $\Si$ is $\K'$-satisfiable, 
			then $|C| \le |B|$ for each $\B \in \K'$.
\end{enumerate}
\end{prop}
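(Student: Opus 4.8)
The plan is to prove Proposition~\ref{p:unif} in two parts, leaning heavily on Lemma~\ref{l:unif} and on the fact that the free algebra $\F_\K(\omega)$ captures unifiability.

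For part~(a), I would first observe that by Lemma~\ref{l:unif} (taking $\K' = \F_\K(\omega)$), the set $\Si$ is $\K$-unifiable if and only if $\Si$ is $\F_\K(\omega)$-satisfiable; indeed, this equivalence is established inside the proof of that lemma. So it suffices to show that for $\C \in \op{S}(\F_\K(\omega))$, $\Si$ is $\F_\K(\omega)$-satisfiable iff $\Si$ is $\C$-satisfiable. The forward direction needs the real argument: given a homomorphism $h \colon \alg{Tm}_\lang \to \F_\K(\omega)$ satisfying $\Si$, I want to produce a satisfying homomorphism into the subalgebra $\C$. The key point is that $\Si$ is finite, so it mentions only finitely many variables, and $h$ restricted to those variables lands in finitely many elements of $\F_\K(\omega)$. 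The plan is to use the fact that $\F_\K(\omega)$ contains free generators freely permutable: since only finitely many variables occur in $\Si$, and $\F_\K(\omega)$ has a subalgebra $\C$, I should exhibit that any finitely-generated piece of $\F_\K(\omega)$ relevant to satisfying $\Si$ can be mapped into $\C$. The backward direction is immediate, since $\C \subseteq \F_\K(\omega)$ means any $\C$-satisfying homomorphism is also an $\F_\K(\omega)$-satisfying one.

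For part~(b), I would argue by contradiction on minimality. Suppose $\C$ is a smallest finite subalgebra of $\F_\K(\omega)$ and $\K'$ is any class with $\Si$ $\K$-unifiable iff $\Si$ $\K'$-satisfiable. I want $|C| \le |B|$ for every $\B \in \K'$. The idea is to take an arbitrary $\B \in \K'$ and show it must contain (as a satisfying witness) enough structure to force $|B| \ge |C|$. The cleanest route is via Lemma~\ref{l:unif}: the hypothesis on $\K'$ says exactly that $\nVg(\K') = \nVg(\F_\K(\omega))$. Since $\nVg(\K') = \op{H}^{-1}\op{S}\op{P}^*_U(\K')$ and negative clauses are preserved/reflected appropriately, I would relate $\C$ — as a finite subalgebra of $\F_\K(\omega)$, hence a member of $\nVg(\F_\K(\omega)) = \nVg(\K')$ — back to the algebras of $\K'$. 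Concretely, a finite algebra in $\nVg(\K') = \op{H}^{-1}\op{S}\op{P}^*_U(\K')$ should embed (up to the homomorphic-preimage/subalgebra/ultraproduct operators) into structure built from $\K'$, and minimality of $|C|$ against this construction yields $|C| \le |B|$ for some (hence, by minimality, every relevant) $\B$.

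The hard part will be part~(a)'s forward direction: transferring a satisfying homomorphism from $\F_\K(\omega)$ down into an \emph{arbitrary} subalgebra $\C$, since $\C$ need not contain the specific images $h(x)$. I expect the resolution to exploit that $\F_\K(\omega)$ is free on countably many generators, so there is enough ``room'' and enough automorphism-like flexibility to relocate the finitely many relevant images into $\C$; making this relocation precise — and confirming it genuinely preserves satisfaction of every equation in $\Si$ — is the main obstacle. The second subtlety is in part~(b): pinning down exactly how the antivariety generation $\op{H}^{-1}\op{S}\op{P}^*_U$ forces a cardinality bound, since ultraproducts and homomorphic preimages can change cardinalities, so I would need to use finiteness of $\C$ to collapse ultraproducts to single factors and thereby extract a genuine size comparison with a single $\B \in \K'$.
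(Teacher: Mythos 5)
Your overall skeleton---reducing everything to the antivariety equality of Lemma~\ref{l:unif}---is indeed the paper's strategy, but the forward direction of (a), which you yourself single out as the crux, is left unresolved, and the mechanism you sketch would fail. Automorphisms of $\F_\K(\omega)$ essentially permute the free generators and cannot ``relocate'' the finitely many images $h(x)$ into an arbitrary subalgebra $\C$: for instance, with $\K = \{\alg{2}\}$ and $\C$ the two-element subalgebra of the free Boolean algebra, the class of a free generator lies in no two-element subalgebra, and no automorphism moves it into one. The idea the paper uses is different and much simpler: since $\C \in \op{S}(\F_\K(\omega)) \subseteq \Vg(\F_\K(\omega)) = \Vg(\K)$ and $\F_\K(\omega) = \F_{\Vg(\K)}(\omega)$ has the universal mapping property for the \emph{variety} $\Vg(\K)$, the assignment sending every free generator to one fixed element $c \in C$ extends to a homomorphism $h \colon \F_\K(\omega) \to \C$. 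One then simply postcomposes: any homomorphism $\alg{Tm}_\lang \to \F_\K(\omega)$ satisfying $\Si$ yields, after composition with $h$, a homomorphism into $\C$ satisfying $\Si$, because kernels only grow under postcomposition. No relocation inside the free algebra is needed, and the images need not land anywhere near $\C$ beforehand. Equivalently, in operator language, $\F_\K(\omega) \in \op{H}^{-1}\op{S}(\C)$, so $\nVg(\F_\K(\omega)) \subseteq \nVg(\C)$; the reverse inclusion is your easy direction, and (a) follows from Lemma~\ref{l:unif}. Note also a point you would have had to supply even to begin: the universal mapping property is stated for $\K$, and $\C$ need not belong to $\K$, so the passage to $\Vg(\K)$ is essential.

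In (b) your route is the paper's---Lemma~\ref{l:unif} together with (a) gives $\nVg(\K') = \nVg(\F_\K(\omega)) = \nVg(\C)$, and one unwinds $\nVg(\C) = \op{H}^{-1}\op{S}\op{P}^*_U(\C)$---but you identify only half of the collapse. Finiteness of $\C$ does reduce $\op{P}^*_U(\C)$ to $\op{I}(\C)$, as you say; however, without more, $\B \in \nVg(\C)$ only yields a surjection from $\B$ onto some $\alg{D} \in \op{S}(\C)$, giving $|B| \ge |D|$ and no bound against $|C|$. The place where minimality of $\C$ actually enters is here: every subalgebra of $\C$ is again a finite subalgebra of $\F_\K(\omega)$, so a smallest such $\C$ has no proper subalgebras, whence $\op{S}\op{P}^*_U(\C) = \op{I}(\C)$ and $\nVg(\C) = \op{H}^{-1}(\C)$. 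Then each $\B \in \K' \subseteq \nVg(\K') = \op{H}^{-1}(\C)$ maps onto an isomorphic copy of $\C$ itself, giving $|C| \le |B|$ for \emph{every} $\B \in \K'$ directly---your ``for some (hence, by minimality, every)'' phrasing has the quantifier backwards, since the conclusion must be extracted uniformly for each member of $\K'$, not propagated from one witness.
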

\begin{proof}
(a) By assumption, $\C \in \nVg(\F_\K(\omega))$, so $\nVg(\C) \subseteq  \nVg(\F_\K(\omega))$. 
But also, since $\C \in \op{S}(\F_\K(\omega)) \subseteq \Vg(\F_\K(\omega)) = \Vg(\K)$ and $\F_\K(\omega) = \F_{\Vg(\K)}(\omega)$ has the universal mapping property for $\Vg(\K)$ 
over countably infinitely many generators, we obtain a homomorphism $h \colon \F_\K(\omega) \to \C$ 
defined by $h(x) = c$ for every variable $x$ for some fixed $c \in C$. Hence $h[\F_\K(\omega)]$ is a subalgebra of $\C$ and  
$\F_\K(\omega) \in \op{H}^{-1}\op{S}(\C) \subseteq \nVg(\C)$. So 
$\nVg(\F_\K(\omega)) \subseteq \nVg(\C)$ and the result follows by Lemma~\ref{l:unif}.

(b) Let $\C$ be a smallest finite subalgebra of $\F_\K(\omega)$ and suppose that 
$\K'$ is a class of $\lang$-algebras such that $\Si$ is $\K$-unifiable iff $\Si$ is $\K'$-satisfiable. 
Then by Lemma~\ref{l:unif} and part (a),  $\nVg(\K') = \nVg(\F_{\K}(\omega)) = \nVg(\C)$. 
Hence if $\B \in \K' \subseteq \nVg(\K') = \nVg(\F_{\K}(\omega)) = \nVg(\C) = \op{H}^{-1}\op{S}\op{P}^*_U(\C) = \op{H}^{-1}(\C)$,
then clearly $|C| \le |B|$.
\end{proof}

\begin{exa}\label{ex:DeMorganUni}
The variety $\vty{DMA}$ of De Morgan algebras is generated as a quasivariety by the $4$-element  algebra 
$\alg{D_4} = \langle \{ \bot,a,b,\top\}, \land, \lor, \lnot, \bot, \top \rangle$ 
consisting of a distributive bounded lattice with an involutive negation 
defined as follows:
   \begin{center}  
   \pspicture(-4,-0.5)(4,3)
   \psset{unit=12mm}
   \psline(0,0)(1,1)
   \psline(1,1)(0,2)
   \psline(0,0)(-1,1)
   \psline(-1,1)(0,2)
   \psdots(0,0)(-1,1)(1,1)(0,2)
   \rput[m](0,-0.3){$\bot$}
   \rput[r](-1.2,1){$a$}
   \rput[l](1.2,1){$b$}
   \rput[m](0,2.3){$\top$}
   \pscurve{->}(1.3,1.2)(1.5,1.4)(1.7,1)(1.5,0.6)(1.3,0.8)
   \pscurve{->}(-1.3,1.2)(-1.5,1.4)(-1.7,1)(-1.5,0.6)(-1.3,0.8)
   \psline{<->}(0,0.2)(0,1.8)
   \endpspicture
   \end{center}
Since there are constants in the language of $\alg{D_4}$, the smallest algebra for checking $\vty{DMA}$-unifiability is 
the $2$-element ground algebra $\F_{\alg{D_4}}(0)$: i.e., the $2$-element Boolean algebra. That is, checking unifiability 
amounts to checking classical satisfiability. E.g., $x \land \lnot x \eq x \lor \lnot x$ is not $\vty{DMA}$-unifiable, since 
in the $2$-element Boolean algebra, $\top \land \lnot \top \neq \top \lor \lnot \top$ and $\bot \land \lnot \bot \neq \bot \lor \lnot \bot$. 
The case of the ``constant-free'' variety $\vty{DML}$ of De Morgan lattices, generated as 
a quasivariety by $\alg{D^{\ell}_4} = \langle \{ \bot,a,b,\top\}, \land, \lor, \lnot \rangle$, 
is not so immediate. However, there is also a smallest $2$-element subalgebra of $\F_{\alg{D^{\ell}_4}}(\omega)$ with elements 
corresponding to $x \land \lnot x$ and $x \lor \lnot x$. So checking  $\vty{DML}$-unifiability 
amounts again to checking classical satisfiability. 
\end{exa}

Let us turn our attention now to admissible quasiequations, recalling the following useful characterizations of admissibility 
in terms of free algebras and generated subvarieties.

\begin{thm}[{\cite[Theorem 2]{CM201x}}; see also {\cite[Theorem 1.4.5]{Ryb97}}] \label{t:EqAdm} 
The following are equivalent:
\begin{enumerate}[{\rm (1)}]
\item		$\Si \imp \f \eq \p$ is $\K$-admissible.
\item		$\Si \imp \f \eq \p$ is $\Qg(\K)$-admissible.
\item		$\Si \mdl{\F_{\K}(\omega)} \f \eq \p$.
\item		$\Vg(\K) = \Vg( \{\A \in \Qg(\K)  :  \Si \mdl{\A} \f \eq \p\})$. 
\end{enumerate}
\end{thm}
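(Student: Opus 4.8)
The plan is to organize the four conditions around a single engine: the identification of $\K$-unifiers with homomorphisms into $\F_\K(\omega)$. First I would record the basic fact that the congruence $\theta_\K(\omega)$ collects exactly the $\K$-valid equations, i.e.\ $(\alpha,\beta) \in \theta_\K(\omega)$ iff $\mdl{\K} \alpha \eq \beta$. This holds because the congruences $\phi$ with $\alg{Tm}_\lang/\phi \in \op{IS}(\K)$ are precisely the kernels of homomorphisms $\alg{Tm}_\lang \to \A$ with $\A \in \K$, so their intersection $\theta_\K(\omega)$ consists of the pairs identified by all such homomorphisms. Writing $\pi \colon \alg{Tm}_\lang \to \F_\K(\omega)$ for the natural projection, this immediately gives the key correspondence (the same one used in the proof of Lemma~\ref{l:unif}): for every substitution $\si$ and equation $\alpha \eq \beta$, $\si$ is a $\K$-unifier of $\alpha \eq \beta$ iff $(\alpha,\beta) \in \ker(\pi \circ \si)$.

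The core equivalence (1)$\iff$(3) then falls out by reading this correspondence in both directions. Assuming (3), if $\si$ is a $\K$-unifier of $\Si$, then $h := \pi \circ \si$ satisfies $\Si \subseteq \ker h$, so (3) yields $(\f,\p) \in \ker h$, i.e.\ $\si$ unifies $\f \eq \p$; hence (1). Conversely, assuming (1), given any $h \colon \alg{Tm}_\lang \to \F_\K(\omega)$ with $\Si \subseteq \ker h$, I would choose for each variable $x$ a term representative of the class $h(x)$, obtaining a substitution $\si$ with $\pi \circ \si = h$ (the two homomorphisms agree on generators); then $\si$ unifies $\Si$, so by (1) it unifies $\f \eq \p$, which translates back to $(\f,\p) \in \ker h$; hence (3).

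The equivalence (1)$\iff$(2) is then essentially immediate: since equations are preserved by $\op{I}$, $\op{S}$, $\op{P}$, and $\op{P}_U$ and $\Qg(\K) = \op{ISPP}_U(\K)$, the classes $\K$ and $\Qg(\K)$ validate the same equations, so a substitution is a $\K$-unifier of a set of equations iff it is a $\Qg(\K)$-unifier, and the two admissibility statements coincide word for word. For the link to (4), write $\K^*$ for $\{ \A \in \Qg(\K) : \Si \mdl{\A} \f \eq \p \}$. For (3)$\Rightarrow$(4), note that (3) says exactly $\F_\K(\omega) \in \K^*$; combining $\Vg(\K) = \Vg(\F_\K(\omega))$ with $\K^* \subseteq \Qg(\K)$ (hence $\Vg(\K^*) \subseteq \Vg(\K)$) gives $\Vg(\K) \subseteq \Vg(\K^*) \subseteq \Vg(\K)$, which is (4). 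For (4)$\Rightarrow$(3), I would use the recalled fact that $\Vg(\K^*) = \Vg(\K)$ forces $\F_{\K^*}(\omega) = \F_\K(\omega)$; since $\F_{\K^*}(\omega) \in \op{ISP}(\K^*)$, each member of $\K^*$ validates $\Si \imp \f \eq \p$, and quasiequations are preserved by $\op{I}$, $\op{S}$, and $\op{P}$, it follows that $\Si \imp \f \eq \p$ is $\F_{\K^*}(\omega)$-valid, and hence $\F_\K(\omega)$-valid, giving (3).

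The only step carrying genuine content is the substitution/homomorphism correspondence behind (1)$\iff$(3), and within it the two facts that $\theta_\K(\omega)$ captures $\K$-validity of equations and that $\pi$ can be inverted on generators by picking term representatives. I expect this to be the main obstacle; the remaining implications are bookkeeping with the class operators and the free-algebra facts recalled in Section~\ref{s:prelim}, and should be routine once the correspondence is in hand.
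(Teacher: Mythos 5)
Your proof is correct. The paper itself does not prove Theorem~\ref{t:EqAdm} but imports it from \cite[Theorem 2]{CM201x} and \cite[Theorem 1.4.5]{Ryb97}, and your argument is essentially the standard one built from exactly the machinery the paper deploys nearby: the identification of $\theta_\K(\omega)$ with the $\K$-valid equations and the resulting unifier--homomorphism correspondence (including the inversion of $\pi$ on generators by picking representatives $\si(x) \in h(x)$) is the same device used in the paper's proof of Lemma~\ref{l:unif}, while your treatment of (1)$\Leftrightarrow$(2) via preservation of equations under $\op{ISPP}_U$ and of (3)$\Leftrightarrow$(4) via $\F_\K(\omega) \in \Qg(\K)$, $\Vg(\K) = \Vg(\F_\K(\omega))$, and $\Vg(\K_1) = \Vg(\K_2) \Rightarrow \F_{\K_1}(\omega) = \F_{\K_2}(\omega)$ rests precisely on the free-algebra facts recalled at the end of Section~\ref{s:prelim}, in the same style as the paper's proof of Proposition~\ref{p:adm}.
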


\begin{exa}\label{ex:lattices}
The following quasiequations, expressing meet and join semi-distributivity in the language of lattices with $\land$ and $\lor$, 
are satisfied by all free lattices (see~\cite{Jon61}), and are therefore admissible in the variety of lattices:
\[
\begin{array}{rcl}
\{x \land y \eq x \land z\} & \imp & x \land y \eq x \land (y \lor z)\\[.05in]
\{x \lor y \eq x \lor z\} & \imp & x \lor y \eq x \lor (y \land z).
\end{array}
\]
Similarly, the quasiequations below expressing torsion-freeness in a language of groups with 
$\cdot$, $^{-1}$, and ${\rm e}$ are satisfied by  all free groups, and are hence admissible in the variety of groups:
\[
\{\underbrace{x \cdot \ldots \cdot x}_n \eq {\rm e}\} \quad \imp \quad x \eq {\rm e} \qquad (n = 2,3,\ldots).
\]
\end{exa}

\noindent
Given a class $\K$ of $\lang$-algebras, we will be interested in determining when the $\K$-admissibility of quasiequations 
coincides with their $\K'$-validity in another class of $\lang$-algebras $\K'$. By Theorem~\ref{t:EqAdm}, this is the case exactly 
when  $\Qg(\K') = \Qg(\F_{\K}(\omega))$. The next result provides a further useful characterization of this situation.

\begin{prop} \label{p:adm}
The following are equivalent:
\begin{enumerate}[\rm (1)]
\item $\Si \imp \f \eq \p$ is $\K$-admissible \ iff \ $\Si \mdl{\K'} \f \eq \p$.
\item $\Qg(\K') = \Qg(\F_{\K}(\omega))$.
\item $\K' \subseteq \Qg(\F_\K(\omega))$ and $\K \subseteq \Vg(\K')$.
\end{enumerate}
\end{prop}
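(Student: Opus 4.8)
The plan is to prove the equivalences by establishing $(1)\Leftrightarrow(2)$ via the preceding characterization theorem, and then $(2)\Leftrightarrow(3)$ by unfolding the definitions of the variety and quasivariety generated by a class. For the first equivalence, Proposition~\ref{p:adm}(1) asserts that $\K$-admissibility of $\Si \imp \f \eq \p$ coincides with $\K'$-validity for \emph{all} quasiequations. By Theorem~\ref{t:EqAdm}, $\K$-admissibility of $\Si \imp \f \eq \p$ is equivalent to $\Si \mdl{\F_\K(\omega)} \f \eq \p$, i.e.\ to $\F_\K(\omega)$-validity of the quasiequation. So condition (1) says precisely that $\K'$-validity and $\F_\K(\omega)$-validity agree on all quasiequations. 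Since a quasivariety is determined by the quasiequations valid in it, this holds iff $\Qg(\K') = \Qg(\F_\K(\omega))$, which is exactly (2). This step should be routine given the machinery already in place.

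For $(2)\Rightarrow(3)$, assume $\Qg(\K') = \Qg(\F_\K(\omega))$. Then $\K' \subseteq \Qg(\K') = \Qg(\F_\K(\omega))$, giving the first half of (3) immediately. For the second half, I would argue that $\Vg(\K') = \Vg(\Qg(\K')) = \Vg(\Qg(\F_\K(\omega)))$, and since $\F_\K(\omega) \in \Qg(\F_\K(\omega))$ while $\Vg(\F_\K(\omega)) = \Vg(\K)$ (recorded at the end of Section~\ref{s:prelim}), this equals $\Vg(\F_\K(\omega)) = \Vg(\K)$; hence $\K \subseteq \Vg(\K') $. Here I would use that generating a quasivariety and then a variety yields the same variety as generating the variety directly, i.e.\ $\Vg(\Qg(\mathcal{C})) = \Vg(\mathcal{C})$, which follows since $\Qg(\mathcal{C}) \subseteq \Vg(\mathcal{C})$ and $\mathcal{C} \subseteq \Qg(\mathcal{C})$.

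For the converse $(3)\Rightarrow(2)$, assume $\K' \subseteq \Qg(\F_\K(\omega))$ and $\K \subseteq \Vg(\K')$. The inclusion $\K' \subseteq \Qg(\F_\K(\omega))$ gives $\Qg(\K') \subseteq \Qg(\F_\K(\omega))$ at once. The reverse inclusion $\Qg(\F_\K(\omega)) \subseteq \Qg(\K')$ is the substantive direction: it suffices to show $\F_\K(\omega) \in \Qg(\K')$. From $\K \subseteq \Vg(\K')$ we get $\Vg(\K) \subseteq \Vg(\K')$, and since $\F_\K(\omega) = \F_{\Vg(\K)}(\omega)$, I would recover $\F_\K(\omega)$ as a free algebra sitting inside $\Vg(\K')$. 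The key leverage is that the \emph{free} algebra of a variety on $\omega$ generators belongs to the quasivariety generated by any generating class of that variety: concretely, $\F_\K(\omega) = \F_{\Vg(\K')}(\omega) \in \op{ISP}(\K') \subseteq \Qg(\K')$, using that free algebras lie in $\op{ISP}$ of the generating class (the universal-mapping-property / $\op{ISP}$ facts cited from~\cite{BS81}) together with $\Vg(\K) = \Vg(\K')$.

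I expect the main obstacle to be justifying that $\F_\K(\omega) \in \Qg(\K')$ in the direction $(3)\Rightarrow(2)$, specifically pinning down that $\F_\K(\omega)$ is (isomorphic to) the free algebra $\F_{\Vg(\K')}(\omega)$ and that this algebra embeds into a product of members of $\K'$. This requires carefully invoking $\Vg(\K_1) = \Vg(\K_2) \Rightarrow \F_{\K_1}(\omega) = \F_{\K_2}(\omega)$ from the preliminaries, and then the fact that for a \emph{nonempty} class $\K'$ one has $\F_{\K'}(\overline{X}) \in \op{ISP}(\K')$; one should also handle the degenerate case where $\K'$ might be empty or consist only of trivial algebras, though in the intended setting this is not an issue.
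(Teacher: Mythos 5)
Your proposal matches the paper's proof essentially step for step: (1)$\Leftrightarrow$(2) is read off from Theorem~\ref{t:EqAdm} (admissibility equals $\F_\K(\omega)$-validity, and equal quasiequational theories mean equal generated quasivarieties), (2)$\Rightarrow$(3) uses $\Vg(\K') = \Vg(\Qg(\F_\K(\omega))) = \Vg(\F_\K(\omega)) = \Vg(\K)$, and (3)$\Rightarrow$(2) reduces to $\Vg(\K) = \Vg(\K')$ so that $\F_\K(\omega) = \F_{\K'}(\omega) \in \op{ISP}(\K') \subseteq \Qg(\K')$, exactly as in the paper. The one loose end is that in (3)$\Rightarrow$(2) you derive only $\Vg(\K) \subseteq \Vg(\K')$ from $\K \subseteq \Vg(\K')$ and then invoke the full equality $\Vg(\K) = \Vg(\K')$ without stating where the reverse inclusion comes from; it follows, as the paper notes, from the first conjunct of (3) via $\Vg(\K') \subseteq \Vg(\Qg(\F_\K(\omega))) = \Vg(\F_\K(\omega)) = \Vg(\K)$ --- a one-line fix using the same identity you already applied in (2)$\Rightarrow$(3).
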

\begin{proof}
(1) $\Leftrightarrow$ (2) Follows directly from Theorem~\ref{t:EqAdm}. \medskip 

\noindent(2) $\Rightarrow$ (3) Suppose that $\Qg(\K') = \Qg(\F_{\K}(\omega))$. Then 
$\K' \subseteq \Qg(\F_\K(\omega))$. Moreover, 
$\Vg(\K') = \Vg(\Qg(\K')) = \Vg(\Qg(\F_{\K}(\omega))) =  \Vg(\F_{\K}(\omega)) = \Vg(\K)$, 
so  $\K \subseteq \Vg(\K')$.\medskip

\noindent(3) $\Rightarrow$ (2) Suppose that $\K' \subseteq \Qg(\F_\K(\omega))$ and $\K \subseteq \Vg(\K')$. 
Then clearly $\Qg(\K') \subseteq \Qg(\F_\K(\omega))$. But also 
$\Vg(\K) \subseteq \Vg(\K') \subseteq \Vg(\Qg(\F_\K(\omega))) = \Vg(\F_\K(\omega)) = \Vg(\K)$. 
That is, $\Vg(\K) = \Vg(\K')$. Hence $\F_\K(\omega) = \F_{\K'}(\omega) \in \Qg(\K')$ and 
$\Qg(\F_\K(\omega)) \subseteq \Qg(\K')$. 
\end{proof}

\noindent
For some well-behaved classes of algebras, admissibility and validity coincide. More precisely, 
a class $\K$ of $\lang$-algebras is said to be {\em structurally complete} if it satisfies the condition:
\[
\Si \imp \f \eq \p \textrm{ is $\K$-admissible} \qquad \mbox{iff} \qquad \Si \mdl{\K} \f \eq \p.
\]
The following characterization then follows almost immediately from Theorem~\ref{t:EqAdm}:

\begin{prop}[{\cite[Proposition 2.3]{Ber91}}]\label{p:Bergman}
The following are equivalent:
\begin{enumerate}[\rm (1)]
\item $\K$ is structurally complete.
\item $\Qg(\K) = \Qg(\F_\K(\omega))$.
\item  $\K' \subseteq \K$ and $\Vg(\K') = \Vg(\K)$ implies $\Qg(\K') = \Qg(\K)$.
\end{enumerate}
\end{prop}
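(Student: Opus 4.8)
The plan is to prove Proposition~\ref{p:Bergman} by leveraging the already-established Proposition~\ref{p:adm}, which gives a clean criterion for when $\K$-admissibility coincides with $\K'$-validity. The key observation is that structural completeness is simply the special case $\K' = \K$ of the property treated in Proposition~\ref{p:adm}.

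First I would establish the equivalence of (1) and (2). Setting $\K' = \K$ in the biconditional of Proposition~\ref{p:adm}(1), the condition ``$\Si \imp \f \eq \p$ is $\K$-admissible iff $\Si \mdl{\K} \f \eq \p$'' is exactly the definition of $\K$ being structurally complete. By Proposition~\ref{p:adm} this holds iff $\Qg(\K) = \Qg(\F_\K(\omega))$, which is statement (2). So (1)$\Leftrightarrow$(2) is immediate.

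Next I would prove (2)$\Rightarrow$(3). Assume $\Qg(\K) = \Qg(\F_\K(\omega))$, and suppose $\K' \subseteq \K$ with $\Vg(\K') = \Vg(\K)$. From $\Vg(\K') = \Vg(\K)$ and the fact (recorded in the Preliminaries) that the generated variety determines the free algebra on $\omega$ generators, I get $\F_{\K'}(\omega) = \F_\K(\omega)$. Since $\K' \subseteq \K \subseteq \Qg(\K) = \Qg(\F_\K(\omega)) = \Qg(\F_{\K'}(\omega))$, I have both inclusions needed: $\K' \subseteq \Qg(\F_{\K'}(\omega))$ and $\K \subseteq \Vg(\K) = \Vg(\K')$. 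Applying Proposition~\ref{p:adm}((3)$\Rightarrow$(2)) with the roles played by $\K$ and $\K'$ yields $\Qg(\K') = \Qg(\F_{\K'}(\omega)) = \Qg(\F_\K(\omega)) = \Qg(\K)$, which is (3).

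Finally, for (3)$\Rightarrow$(2), the natural approach is to instantiate (3) with the witnessing class $\K' = \{\A \in \Qg(\K) : \A \in \op{S}(\F_\K(\omega))\}$, or more directly with $\K' = \op{S}(\F_\K(\omega)) \cap \K$ adjusted so that the hypotheses of (3) apply. The main obstacle I anticipate is arranging a subclass $\K' \subseteq \K$ with $\Vg(\K') = \Vg(\K)$ whose quasivariety is forced to equal $\Qg(\F_\K(\omega))$; the cleanest route is to take the trivial choice $\K' = \K$, noting $\K \subseteq \K$ and $\Vg(\K) = \Vg(\K)$ hold vacuously, so (3) gives $\Qg(\K) = \Qg(\K)$, which is unhelpful, and therefore a genuinely smaller witness is needed. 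I expect the correct witness to be $\K' = \op{IS}(\F_\K(\omega))$ intersected appropriately, using that $\F_\K(\omega)$ generates the same variety as $\K$ while its quasivariety is $\Qg(\F_\K(\omega))$; verifying $\Vg(\K') = \Vg(\K)$ for this choice, via $\Vg(\F_\K(\omega)) = \Vg(\K)$, is the delicate step, after which $\Qg(\K') = \Qg(\K)$ delivers $\Qg(\F_\K(\omega)) \subseteq \Qg(\K)$ and hence equality (the reverse inclusion being automatic since $\F_\K(\omega) \in \Qg(\K)$).
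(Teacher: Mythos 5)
Your (1)$\Leftrightarrow$(2) and (2)$\Rightarrow$(3) are correct, and in fact the paper offers no proof of its own here --- it cites Bergman and remarks that the result ``follows almost immediately from Theorem~\ref{t:EqAdm}'' --- so specializing Proposition~\ref{p:adm} to $\K' = \K$ is exactly the intended route for (1)$\Leftrightarrow$(2). Your (2)$\Rightarrow$(3) argument is also sound: $\Vg(\K') = \Vg(\K)$ forces $\F_{\K'}(\omega) = \F_\K(\omega)$ by the fact recorded in the Preliminaries, and then $\K' \subseteq \K \subseteq \Qg(\K) = \Qg(\F_\K(\omega))$ together with $\K \subseteq \Vg(\K) = \Vg(\K')$ are precisely the two conditions of Proposition~\ref{p:adm}(3), yielding $\Qg(\K') = \Qg(\F_\K(\omega)) = \Qg(\K)$.

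The genuine gap is (3)$\Rightarrow$(2), which you leave as speculation, and your candidate witness cannot be repaired: a class such as $\op{IS}(\F_\K(\omega))$ intersected with anything need not be a subclass of $\K$, nor need it generate $\Vg(\K)$. Worse, under the literal reading in which $\K'$ ranges over subclasses of an arbitrary class $\K$, the implication (3)$\Rightarrow$(2) is simply false: take $\K = \{\alg{L_3}\}$ from Example~\ref{ex:luk}; its only subclasses are $\emptyset$ and $\K$ itself, and $\Vg(\emptyset) \neq \Vg(\alg{L_3})$, so (3) holds vacuously, yet $\Qg(\alg{L_3}) \neq \Qg(\F_{\alg{L_3}}(\omega)) = \Qg(\alg{L_3} \times \alg{L_2})$ (the quasiequation $\{x \eq \lnot x\} \imp y \eq z$ separates them). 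The statement is Bergman's Proposition 2.3, where $\K$ is a quasivariety and $\K'$ ranges over subquasivarieties, and with that reading the missing direction is one line: take $\K' = \Qg(\F_\K(\omega))$, which is contained in $\K = \Qg(\K)$ because $\F_\K(\omega) \in \Qg(\K)$, and which satisfies $\Vg(\K') = \Vg(\F_\K(\omega)) = \Vg(\K)$; then (3) gives $\Qg(\F_\K(\omega)) = \Qg(\K)$, i.e.\ (2). For an arbitrary class $\K$ one first replaces $\K$ by $\Qg(\K)$, which is legitimate since by Theorem~\ref{t:EqAdm} a quasiequation is $\K$-admissible iff $\Qg(\K)$-admissible, and $\K$-validity of quasiequations coincides with $\Qg(\K)$-validity. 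So your instinct that $\K' = \K$ is unhelpful and that ``a genuinely smaller witness is needed'' was right; the resolution is not a cleverer subclass of $\K$ but the quasivariety reading of (3), with witness $\Qg(\F_\K(\omega))$.
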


\begin{exa}
The variety $\vty{BA}$ of Boolean algebras, generated as a quasivariety by the $2$-element 
algebra $\alg{2} = \langle \{0,1\}, \land, \lor, \lnot, 0, 1 \rangle$, is structurally complete. It suffices to observe that 
$\alg{2}$ embeds into $\F_{\alg{2}}(\omega) = \F_{\vty{BA}}(\omega)$ via a mapping that sends 
$0$ and $1$ to their respective equivalence classes, and hence that 
$\vty{BA} = \Qg(\alg{2})  = \Qg(\F_{\vty{BA}}(\omega))$.
\end{exa}

\begin{exa} \label{ex:M5N5}
A \emph{modular lattice} $\alg{L}$ may be characterized as a lattice satisfying the equation
$(x \land y) \lor (y \land z) \eq y \land ((x \land y) \lor z)$. Famously, 
a lattice $\alg{L}$ is non-modular if and only if the lattice $\alg{N_5}$ (below) 
embeds into $\alg{L}$ (see~\cite[Theorem~I.3.5]{BS81}). 
But since $\alg{N_5}$ is non-modular, also $\F_{\alg{N_5}}(\omega)$ (which must satisfy the same equations) is non-modular. 
So $\alg{N_5}$ embeds into $\F_{\alg{N_5}}(\omega)$, and $\Qg(\alg{N_5})$ is structurally complete. 
Similarly, it is well-known that a lattice $\lgc{L}$ is distributive if and only if 
neither $\alg{N_5}$ nor $\alg{M_5}$ (below) embeds into $\alg{L}$  (see~\cite[Theorem~I.3.6]{BS81}). 
Since $\alg{M_5}$ is non-distributive and modular, also $\F_{\alg{M_5}}(\omega)$ is non-distributive and modular. 
So $\alg{M_5}$ embeds into $\F_{\alg{M_5}}(\omega)$, and $\Qg(\alg{M_5})$ is structurally complete. 

\begin{center}  
\begin{pspicture}(0,-2)(5.94,1.5)
\psdots(0.86,1.2)
\psdots(0.06,0.4)
\psdots(0.06,-0.4)
\psdots(0.86,-1.2)
\psdots(1.66,0.0)
\psdots(4.66,0.0)
\psdots(4.66,1.2)
\psdots(4.66,-1.2)
\psdots(5.86,0.0)
\psdots(3.46,0.0)
\psline(0.06,0.4)(0.86,1.2)
\psline(1.66,0.0)(0.86,1.2)
\psline(0.06,0.4)(0.06,-0.4)
\psline(0.06,-0.4)(0.86,-1.2)
\psline(0.86,-1.2)(1.66,0.0)
\psline(3.46,0.0)(4.66,-1.2)
\psline(4.66,-1.2)(5.86,0.0)
\psline(5.86,0.0)(4.66,1.2)
\psline(4.66,1.2)(3.46,0.0)
\psline(4.66,-1.2)(4.66,0.0)
\psline(4.66,0.0)(4.66,1.2)
\rput[m](0.9,-1.6){$\alg{N_5}$}
\rput[m](4.7,-1.6){$\alg{M_5}$}
\end{pspicture} 
\end{center}
\end{exa}

\noindent
For certain other classes, admissibility and validity coincide for quasiequations with unifiable premises. 
More precisely, we call a class $\K$ of $\lang$-algebras  {\em almost structurally complete} if  it satisfies the  condition:
\[
\Si \imp \f \eq \p \textrm{ is $\K$-admissible} \qquad\mbox{iff} 
\qquad \Si \mdl{\K} \f \eq \p \ \textrm{ or \ $\Si$ is not $\K$-unifiable.}
\]
The next result provides a useful characterization of this situation:

\begin{thm}\label{thm:ASC}
The following are equivalent for any $\B  \in \op{S}(\F_\K (\omega))$:
\begin{enumerate}[{\rm (1)}]
\item $\K$ is almost structurally complete.
\item $\Qg(\{\A \times \B  :  \A \in \K\}) = \Qg(\F_\K(\omega))$.
\item $\{\A \times \B  :  \A \in \K\} \subseteq \Qg(\F_\K(\omega))$.
\end{enumerate}
\end{thm}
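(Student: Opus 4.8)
The plan is to route the whole proof through Proposition~\ref{p:adm}, instantiating the auxiliary class $\K'$ appearing there as $\{\A \times \B : \A \in \K\}$. The first thing I would establish is that $\Vg(\K') = \Vg(\K)$. For the inclusion $\subseteq$, note that $\B \in \op{S}(\F_\K(\omega)) \subseteq \Vg(\F_\K(\omega)) = \Vg(\K)$, so each product $\A \times \B$ with $\A \in \K$ lies in $\Vg(\K)$; for $\supseteq$, each $\A \in \K$ is the image of $\A \times \B$ under the first projection, giving $\K \subseteq \op{H}(\K') \subseteq \Vg(\K')$. Granting this identity, the equivalence of~(2) and~(3) drops out of Proposition~\ref{p:adm}: its condition~(3) for the present $\K'$ reads ``$\K' \subseteq \Qg(\F_\K(\omega))$ and $\K \subseteq \Vg(\K')$'', and since $\K \subseteq \Vg(\K) = \Vg(\K')$ is now automatic, it collapses to exactly statement~(3) of the theorem, while Proposition~\ref{p:adm}(2) is verbatim statement~(2). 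Thus Proposition~\ref{p:adm} yields (2)$\Leftrightarrow$(3).

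The core of the proof is the claim that, for every quasiequation,
\[
\Si \mdl{\K'} \f \eq \p \qquad \mbox{iff} \qquad \Si \mdl{\K} \f \eq \p \ \textrm{ or $\Si$ is not $\K$-unifiable.}
\]
To prove it I would write each homomorphism $h \colon \alg{Tm}_\lang \to \A \times \B$ as a pair $(h_1, h_2)$ with $h_1 \colon \alg{Tm}_\lang \to \A$ and $h_2 \colon \alg{Tm}_\lang \to \B$, so that $\ker h = \ker h_1 \cap \ker h_2$. Two observations make the coordinates fit together: since $\B \in \op{S}(\F_\K(\omega)) \subseteq \Qg(\K)$, every $\K$-valid quasiequation is also $\B$-valid; and by Proposition~\ref{p:unif}(a), $\Si$ is $\K$-unifiable precisely when $\Si$ is $\B$-satisfiable. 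For the right-to-left direction, if $\Si$ is not $\K$-unifiable then no $h_2$ satisfies $\Si$, hence no $h$ does, and $\Si \mdl{\K'} \f \eq \p$ holds vacuously; whereas if $\Si \mdl{\K} \f \eq \p$, then also $\Si \mdl{\B} \f \eq \p$, so any $h = (h_1,h_2)$ satisfying $\Si$ satisfies $\f \eq \p$ in both coordinates and hence $(\f,\p) \in \ker h$. For the left-to-right direction, assuming $\Si$ is $\K$-unifiable I would fix some $h_2$ satisfying $\Si$; then for any $\A \in \K$ and any $h_1$ satisfying $\Si$, the pair $(h_1,h_2)$ satisfies $\Si$, forcing $(\f,\p) \in \ker h \subseteq \ker h_1$, so that $\Si \mdl{\K} \f \eq \p$.

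Finally, I would derive~(1)$\Leftrightarrow$(2) by feeding the claim into Proposition~\ref{p:adm}. Reading off the claim, the defining condition of almost structural completeness becomes literally ``$\Si \imp \f \eq \p$ is $\K$-admissible iff $\Si \mdl{\K'} \f \eq \p$'', holding for all quasiequations; this is condition~(1) of Proposition~\ref{p:adm} for the present $\K'$, which that proposition equates with $\Qg(\K') = \Qg(\F_\K(\omega))$, i.e.\ statement~(2). Together with the first paragraph this closes the cycle. The hard part will be the claim in the second paragraph: each individual step is elementary, but it requires disciplined bookkeeping of the two product coordinates and, crucially, the two uses of the hypothesis $\B \in \op{S}(\F_\K(\omega))$ --- once to secure $\B \in \Qg(\K)$ so that $\K$-validity transfers to $\B$-validity, and once (via Proposition~\ref{p:unif}(a)) to identify non-unifiability of $\Si$ with its $\B$-unsatisfiability. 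It is exactly this double role that explains why $\B$ must be a subalgebra of $\F_\K(\omega)$.
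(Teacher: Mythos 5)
Your proposal is correct and follows essentially the same route as the paper's proof: the same pairing of homomorphisms into $\A \times \B$ (the paper's $e_\A(u) = (k(u),h(u))$), the same use of Proposition~\ref{p:unif}(a) to trade $\K$-unifiability of $\Si$ for $\B$-satisfiability, and the same appeal to Proposition~\ref{p:adm} (via $\A \in \op{H}(\A \times \B)$) to get (3)$\Rightarrow$(2). The only difference is organizational: where the paper proves (1)$\Rightarrow$(2) and (2)$\Rightarrow$(1) separately, you factor both through the single biconditional claim that $\Si \mdl{\K'} \f \eq \p$ holds iff $\Si \mdl{\K} \f \eq \p$ or $\Si$ is not $\K$-unifiable, which is a clean repackaging of the same ingredients.
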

\begin{proof} 
(1) $\Rightarrow$ (2) Suppose that $\K$ is almost structurally complete. 
To establish  $\Qg(\{\A \times \B  :  \A \in \K\}) = \Qg(\F_\K(\omega))$, it suffices to show 
that a quasiequation $\Si \imp \f \eq \p$ is valid in all algebras 
$\A \times \B$ for $\A \in \K$ iff it is valid in $\F_\K(\omega)$. 
Suppose first that  $\Si \mdl{\F_\K(\omega)} \f \eq \p$. Then by Theorem~\ref{t:EqAdm}, 
either $\Si$ is not $\K$-unifiable or $\Si \imp \f \eq \p$ is $\K$-valid. In the first case, by Proposition~\ref{p:unif}, 
$\Si$ is not $\B$-satisfiable, so $\Si \imp \f \eq \p$ is valid in $\A \times \B$ for all $\A \in \K$. 
In the second case,  $\Si \imp \f \eq \p$ is valid in $\A \times \B \in \Qg(\K)$ for all $\A \in \K$. Conversely, 
if $\Si \imp \f \eq \p$ is valid in $\A \times \B$ for each $\A \in \K$, then 
either $\Si$ is not $\B$-satisfiable 
or $\Si \imp \f \eq \p$ is valid in each $\A$ in $\K$. In the first case, by Proposition~\ref{p:unif}, $\Si$ is not $\K$-unifiable,
so $\Si \imp \f \eq \p$ is valid in $\F_\K (\omega)$.
In the second case, $\Si \imp \f \eq \p$ is valid in $\Qg(\K)$ and  hence valid in  $\F_\K (\omega)$. \medskip

\noindent(2) $\Rightarrow$ (1) Suppose that  $\Qg(\F_\K (\omega)) = \Qg(\{\A \times \B  :  \A \in \K\})$. Then whenever 
$\Si \imp \f \eq \p$ is $\K$-admissible,  it is  $\F_\K(\omega)$-valid and hence also valid in $\A \times \B$ for all 
$\A \in \K$. Moreover, if $\Si$ is $\K$-unifiable, then, by Proposition~\ref{p:unif}, it is $\B$-satisfiable. 
I.e., there exists a homomorphism $h  \colon \alg{Tm_\lang} \to \B$ with $\Si \subseteq \ker h$. 
 For any $\A \in \K$ and homomorphism $k  \colon\alg{Tm_\lang} \to \A$ with $\Si \subseteq \ker k$, define
$e_\A  \colon \alg{Tm_\lang} \to \A \times \B$ by $e_\A(u) = (k(u),h(u))$. Then,
since $\Si \imp \f \eq \p$ is valid in $\A \times \B$ for all $\A \in \K$, 
$\Si \subseteq \ker e$, so $e(\f) = e(\p)$ and  $k(\f) = k(\p)$. I.e., $\Si \mdl{\A} \f \eq \p$.  
So we have shown that $\Si \mdl{\K} \f \eq \p$. \medskip

\noindent(2) $\Rightarrow$ (3) Immediate. \medskip

\noindent(3) $\Rightarrow$ (2) Suppose that $\{\A \times \B  :  \A \in \K\} \subseteq \Qg(\F_\K(\omega))$. 
Then also, since $\A \in \op{H}(\A \times \B)$ for each $\A \in \K$, we obtain $\K \subseteq \Vg(\{\A \times \B  :  \A \in \K\})$. 
Hence by Proposition~\ref{p:adm}, $\Qg(\{\A \times \B  :  \A \in \K\}) = \Qg(\F_\K(\omega))$.
\end{proof}

\begin{exa} \label{ex:luk}
Consider the $2$-element and $3$-element Wajsberg algebras 
$\alg{L_2}=\langle \{0,1\}, \to, \lnot \rangle$ and 
$\alg{L_3} = \langle \{ 0, \frac{1}{2}, 1 \}, \to, \lnot \rangle$ where 
\[
x \to y = \min(1, 1-x+y) \quad \mbox{and} \quad \lnot x = 1 - x.
\] 
The algebra $\alg{L_3} \times \alg{L_2}$ embeds into $\alg{\F_{L_3}}(\omega)$, as illustrated in the diagram 
below by the terms associated to elements, and has $\alg{L_3}$ as  
a homomorphic image, as indicated by the arrows. Hence $\alg{L_3}$ is almost structurally complete. 
Note that it is not structurally complete since, for example, $\{x \eq \lnot x\} \imp \emptyset$ is 
$\alg{L_3}$-admissible, but not $\alg{L_3}$-valid. 
On the other hand, its implicational reduct $\alg{L_3^\to}= \langle \{ 0, \frac{1}{2}, 1 \}, \to \rangle$ is 
structurally complete, since it embeds into the free algebra 
$\alg{\F_{L_3^\to}}(2)$.\\
   \begin{center}
   \pspicture(-5,-1)(10,2.5)
   \psset{unit=3mm}
   \psdots(0,0)(3,1)(0,3)(3,4)(0,6)(3,7)(21,0.5)(21,3.5)(21,6.5)

   \rput[r](-1,0){$\lnot(\f \to \f)$}
   \rput[r](-1,3){$\lnot \f$}
   \rput[r](-1,6){$\f \to \lnot \f$}
   \rput[l](4,1){$\lnot(\f \to \lnot \f)$}
   \rput[l](4,4){$\f$}
   \rput[l](4,7){$\f \to \f$}
   \rput[l](22,0.5){$[0]$}
   \rput[l](22,3.5){$[\frac{1}{2}]$}
   \rput[l](22,6.5){$[1]$}

   \psline(0,0)(3,1)
   \psline(0,3)(3,4)
   \psline(0,6)(3,7)
   \psline(0,0)(0,6)
   \psline(3,1)(3,7)
   \psline(21,0.5)(21,6.5)

   \psline{->}(13,0.5)(18,0.5)
   \psline{->}(13,3.5)(18,3.5)
   \psline{->}(13,6.5)(18,6.5)

   \psline[linewidth=0.5pt](-7,-1)(11,-1)
   \psline[linewidth=0.5pt](-7,2)(11,2)
   \psline[linewidth=0.5pt](-7,5)(11,5)
   \psline[linewidth=0.5pt](-7,8)(11,8)
   \psline[linewidth=0.5pt](-7,-1)(-7,8)
   \psline[linewidth=0.5pt](11,-1)(11,8)
   \rput[l](-7,-2.5){$\f := (x \to \lnot x) \to \lnot x$}
   \endpspicture
   \end{center}
  \end{exa}

 
\section{Algorithms and Examples}~\label{s:algorithms}

In this section, we present algorithms for checking admissibility of quasiequations and related properties in  
finitely generated quasivarieties. We also present results on admissibility (some new, some well-known) for a range 
of examples taken from the universal algebra and non-classical logic literature. These results, obtained using a Delphi
implementation of the algorithms described here, are collected at the end of this section in Table~\ref{table}. 

We begin by recalling Birkhoff's result that the finitely generated free algebras for a finite set $\K$ of finite algebras are 
themselves finite; more precisely:

\begin{lem}[\cite{Bir35}]\label{l:finite}
For any set of finite algebras $\K = \{\A_1,\ldots,\A_m\}$ and $n \in \mathbb{N}$:
\[
|F_\K(n)| \le \prod_{i=1}^m |A_i|^{|A_i|^n}.
\]
\end{lem}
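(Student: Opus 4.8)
The plan is to realize $\F_\K(n)$ explicitly as a subalgebra of a finite product of copies of the algebras in $\K$, and then read the bound directly off the cardinality of that product. Fix a set $X$ of $n$ generators, so that $\F_\K(n) = \alg{Tm_\lang}(X)/\theta_\K(X)$. For each $i \in \{1,\ldots,m\}$, let $H_i$ be the set of all homomorphisms $h \colon \alg{Tm_\lang}(X) \to \A_i$. Since $\alg{Tm_\lang}(X)$ is the term algebra on $X$, every map $X \to A_i$ extends to a unique such homomorphism and conversely each homomorphism is determined by its restriction to $X$; hence $|H_i| = |A_i|^n$.

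Next I would assemble all of these homomorphisms into one map. Set $P = \prod_{i=1}^m \A_i^{H_i}$ and define the homomorphism $\Phi \colon \alg{Tm_\lang}(X) \to P$ coordinatewise by $\Phi(\f) = (h(\f))_{i,\, h \in H_i}$. By the previous count, $|P| = \prod_{i=1}^m |A_i|^{|A_i|^n}$. The homomorphism theorem then gives $\alg{Tm_\lang}(X)/\ker\Phi \cong \Phi[\alg{Tm_\lang}(X)] \in \op{IS}(P)$, so it remains only to identify $\ker\Phi$ with $\theta_\K(X)$, whence $\F_\K(n) \cong \Phi[\alg{Tm_\lang}(X)]$ and $|F_\K(n)| \le |P|$ follows at once.

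The crux, and the step I expect to be the main obstacle, is the equality $\ker\Phi = \theta_\K(X)$. By construction $\ker\Phi = \bigcap_{i,\, h \in H_i} \ker h$, so I would compare this family of congruences with the family $\{\phi \in \Con(\alg{Tm_\lang}(X)) : \alg{Tm_\lang}(X)/\phi \in \op{IS}(\K)\}$ defining $\theta_\K(X)$. In one direction, each $\ker h$ has quotient isomorphic to the image $h[\alg{Tm_\lang}(X)]$, a subalgebra of $\A_i$, hence a member of $\op{IS}(\K)$. In the other direction, any congruence $\phi$ with $\alg{Tm_\lang}(X)/\phi \in \op{IS}(\K)$ has quotient isomorphic to a subalgebra of some $\A_i$, so composing the quotient map $\alg{Tm_\lang}(X) \to \alg{Tm_\lang}(X)/\phi$ with this embedding yields an $h \in H_i$ with $\ker h = \phi$. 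Thus the two families of congruences coincide, their intersections agree, and $\ker\Phi = \theta_\K(X)$ as required, completing the bound.
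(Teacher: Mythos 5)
Your proof is correct and is essentially the canonical argument behind the paper's citation: the paper gives no proof of its own for this lemma, citing Birkhoff, and your construction of the evaluation homomorphism $\Phi \colon \alg{Tm_\lang}(X) \to \prod_{i=1}^m \A_i^{H_i}$ with $|H_i| = |A_i|^n$, followed by the identification $\ker\Phi = \theta_\K(X)$, is exactly the standard proof of that classical result. Your two-step verification that the kernels $\ker h$ for $h \in H_i$ coincide with the congruences $\phi$ satisfying $\alg{Tm_\lang}(X)/\phi \in \op{IS}(\K)$ correctly supplies the detail the citation suppresses, so the bound $|F_\K(n)| \le \prod_{i=1}^m |A_i|^{|A_i|^n}$ follows as you say.
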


\noindent
Hence for the relatively easy task of checking $\K$-unifiability, it suffices to find (e.g., through exhaustive search) the 
(unique up to isomorphism) smallest subalgebra  $\C$  of the finite free algebra $\F_{\K}(1)$, noting that this is 
$\F_{\K}(0)$ if the language contains constants. 
It then follows by Proposition~\ref{p:unif} that a set of equations $\Si$ is $\K$-unifiable 
if and only if $\Si$ is $\C$-valid, and indeed that there is no smaller algebra with this property. 

For checking $\K$-admissibility, we make use of a known result  for 
finitely generated quasivarieties (see~\cite[Lemma 4.1.10]{Ryb97}), obtained here as 
a corollary of Proposition~\ref{p:adm}:

\begin{cor} \label{c:admfin}
Given a finite set $\K$ of finite $\lang$-algebras with $n = \max\{|A| : \A \in \K\}$:
\begin{enumerate}[\rm(a)]
\item	$\Qg(\F_{\K}(\omega)) = \Qg(\F_{\K}(n))$.
\item	$\Si \imp \f \eq \p$ is $\K$-admissible \ iff \ $\Si \mdl{\F_{\K}({\mathnormal{n}})} \f \eq \p$.
\end{enumerate}
\end{cor}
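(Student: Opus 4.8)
The plan is to derive Corollary~\ref{c:admfin} as a direct application of Proposition~\ref{p:adm}, with the sole new ingredient being the identification of a suitable witness class $\K'$. The key observation is that part~(b) follows immediately from part~(a) via Theorem~\ref{t:EqAdm}: once we know $\Qg(\F_\K(\omega)) = \Qg(\F_\K(n))$, the equivalences (2) and (3) of Theorem~\ref{t:EqAdm} give us that $\K$-admissibility of $\Si \imp \f \eq \p$ is the same as $\Si \mdl{\F_\K(\omega)} \f \eq \p$, which is the same as $\Si \mdl{\F_\K(n)} \f \eq \p$ since the two free algebras generate the same quasivariety. So the real content is all in establishing part~(a).

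For part~(a), the plan is to apply Proposition~\ref{p:adm} (or more precisely its criterion (3)) with $\K' = \{\F_\K(n)\}$. We need two inclusions. First, $\F_\K(n) \in \op{ISP}(\K) \subseteq \Qg(\K) \subseteq \Qg(\F_\K(\omega))$, so the condition $\K' \subseteq \Qg(\F_\K(\omega))$ holds. Second, we must show $\K \subseteq \Vg(\F_\K(n))$. Here I would use that $\Vg(\F_\K(n)) = \Vg(\K)$ precisely when $n$ generators suffice to witness all equations that fail in $\K$, and this is exactly where the choice $n = \max\{|A| : \A \in \K\}$ enters: any equation in the language failing in some $\A \in \K$ can be falsified using an assignment involving at most $|A| \le n$ distinct elements, hence at most $n$ generators. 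The standard fact that $\Vg(\K) = \Vg(\F_\K(\kappa))$ for $\kappa$ at least the number of generators needed, combined with $\F_\K(n)$ being a subalgebra of $\F_\K(\omega)$ with $\Vg(\F_\K(\omega)) = \Vg(\K)$, would yield $\K \subseteq \Vg(\F_\K(n))$. With both inclusions in hand, Proposition~\ref{p:adm} gives $\Qg(\F_\K(n)) = \Qg(\F_{\{\F_\K(n)\}}(\omega))$; I would then reconcile this with the target $\Qg(\F_\K(\omega))$ by noting $\Vg(\K) = \Vg(\F_\K(n))$ forces $\F_\K(\omega) = \F_{\F_\K(n)}(\omega)$ via the stated fact that equal varieties share the same denumerably generated free algebra.

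The step I expect to be the main obstacle is pinning down \emph{why} $n = \max\{|A| : \A \in \K\}$ is the correct bound for $\K \subseteq \Vg(\F_\K(n))$, i.e., the claim that $n$ generators suffice to separate $\K$ at the level of equational validity. The subtlety is that a term may contain arbitrarily many variables, yet to falsify an equation $\f \eq \p$ in a finite algebra $\A$ one only needs an assignment into $A$, which ranges over at most $|A|$ values; so one can always collapse variables to land in $\F_\K(n)$ without changing whether the equation fails. Making this collapsing argument precise—that every failure of an equation in $\K$ is reflected in $\F_\K(n)$, so that $\F_\K(n)$ and $\K$ satisfy exactly the same equations—is the crux. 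Everything else is routine manipulation of the class operators $\op{H}$, $\op{S}$, $\op{P}$ and the already-established Proposition~\ref{p:adm}.
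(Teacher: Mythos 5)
Your overall strategy is the paper's own: apply Proposition~\ref{p:adm}, condition (3), with $\K' = \{\F_\K(n)\}$. However, your justification of the first half of condition (3) contains a step that is genuinely false. The chain $\F_\K(n) \in \op{ISP}(\K) \subseteq \Qg(\K) \subseteq \Qg(\F_\K(\omega))$ asserts $\Qg(\K) \subseteq \Qg(\F_\K(\omega))$, and this inclusion fails in general: since $\F_\K(\omega) \in \op{ISP}(\K)$ always gives $\Qg(\F_\K(\omega)) \subseteq \Qg(\K)$, your asserted inclusion would force $\Qg(\K) = \Qg(\F_\K(\omega))$, which by Proposition~\ref{p:Bergman} is \emph{precisely} structural completeness of $\K$. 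So if that step were available, every finitely generated quasivariety would be structurally complete, contradicting, e.g., $\K = \{\alg{L_3}\}$ of Example~\ref{ex:luk}. The fact you actually need is true, but for a different reason, and it is the one the paper uses: $\F_\K(n)$ is (isomorphic to) a subalgebra of $\F_\K(\omega)$, as noted in Section~\ref{s:prelim}, so $\F_\K(n) \in \op{S}(\F_\K(\omega)) \subseteq \Qg(\F_\K(\omega))$. With that repair, your argument goes through.

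Two further comparisons with the paper's proof. First, the step you single out as the crux --- the variable-collapsing argument showing that $\K$ and $\F_\K(n)$ satisfy the same equations, whence $\K \subseteq \Vg(\F_\K(n))$ --- is correct in substance but unnecessary: the paper argues directly that each $\A \in \K$ is a homomorphic image of $\F_\K(n)$, since $|A| \le n$ permits a surjection from the $n$ generators onto $A$, which extends to a homomorphism onto $\A$ by the universal mapping property; hence $\K \subseteq \op{H}(\F_\K(n)) \subseteq \Vg(\F_\K(n))$, with no equational bookkeeping at all. Your route can be made precise (collapse the variables of a failing equation according to the at most $n$ values of a falsifying assignment, then use that the generic assignment in $\F_\K(n)$ detects failure of $n$-variable equations), but the homomorphic-image argument dissolves the difficulty. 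Second, your closing ``reconciliation'' rests on a misreading of Proposition~\ref{p:adm}: instantiated with ambient class $\K$ and $\K' = \{\F_\K(n)\}$, (3) $\Rightarrow$ (2) yields $\Qg(\F_\K(n)) = \Qg(\F_\K(\omega))$ directly, which is part (a), and (3) $\Rightarrow$ (1) yields part (b) directly as well; the detour through $\F_{\F_\K(n)}(\omega)$ happens to be harmless, since equal varieties share the same countably generated free algebra, but nothing needs reconciling.
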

\begin{proof}
Observe first that each $\A \in \K$ is a homomorphic image of $\F_\K(n)$. That is, define any surjective map from 
the $n$ generators of $\F_\K(n)$ to $A$; this extends to a homomorphism from $\F_\K(n)$ onto $\A$ 
since $\F_\K(n)$ has the universal mapping property for $\K$ over $n$ generators. 
So $\K \subseteq \Vg(\F_\K(n))$ and, since also 
$\F_\K(n) \in \Qg(\F_\K(\omega))$, (a) and (b) follow by Proposition~\ref{p:adm}.
\end{proof}

\noindent
Hence checking $\K$-admissibility of quasiequations is decidable. 
However, even when $\K$ consists of a small number of small algebras, free algebras on a small number of generators 
can be quite large. For example, the free algebra $\F_\alg{D_4}(2)$ (see Example \ref{ex:DeMorganUni}) has $168$ elements. 
We therefore seek smaller algebras or finite sets of smaller algebras that also generate $\Qg(\F_{\K}(\omega))$ as a quasivariety. 
In fact, since $\Qg(\F_\K(\omega))$ is finitely generated, 
we may apply the multiset ordering defined in Section~\ref{s:fingen}, 
and seek a minimal generating set of finite algebras for this quasivariety that is unique up to isomorphism. 
One strategy would therefore be to apply the algorithm \proc{MinGenSet} directly 
to $\F_{\K}(n)$. However, this method is not feasible for large free algebras, since it involves the computationally 
labour-intensive task of building the congruence lattice of $\F_{\K}(n)$. 
Instead, we make auxiliary use of the following immediate corollary of Proposition~\ref{p:adm}:

\begin{cor} \label{c:adm}
Given a class $\K$ of $\lang$-algebras and $\K' \subseteq \op{S}(\F_\K(\omega))$ such that $\K \subseteq \op{H}(\K')$:
\begin{enumerate}[{\rm (a)}]
\item $\Qg(\K') = \Qg(\F_{\K}(\omega))$.
\item $\Si \imp \f \eq \p$ is $\K$-admissible \ iff \ $\Si \mdl{\K'} \f \eq \p$.
\end{enumerate}
\end{cor}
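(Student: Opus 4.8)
The plan is to derive Corollary~\ref{c:adm} as an essentially immediate consequence of Proposition~\ref{p:adm}, applied with the given $\K'$ in place of the abstract class there. The hypotheses state that $\K' \subseteq \op{S}(\F_\K(\omega))$ and $\K \subseteq \op{H}(\K')$, and the goal is to verify that these imply precisely the two conditions (3) of Proposition~\ref{p:adm}, namely $\K' \subseteq \Qg(\F_\K(\omega))$ and $\K \subseteq \Vg(\K')$. Once those two inclusions are in hand, Proposition~\ref{p:adm} gives directly that $\Qg(\K') = \Qg(\F_\K(\omega))$, which is part~(a), and that $\Si \imp \f \eq \p$ is $\K$-admissible iff $\Si \mdl{\K'} \f \eq \p$, which is part~(b).

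First I would check the inclusion $\K' \subseteq \Qg(\F_\K(\omega))$. Since $\K' \subseteq \op{S}(\F_\K(\omega))$ and subalgebras of an algebra lie in the quasivariety it generates (as $\op{S} \subseteq \op{ISP} = \Qg$ on a single algebra, using $\Qg(\F_\K(\omega)) = \op{ISPP}_U(\F_\K(\omega))$ from the Preliminaries), this inclusion is immediate. Next I would check $\K \subseteq \Vg(\K')$. From $\K \subseteq \op{H}(\K')$ and the fact that homomorphic images lie in the generated variety ($\op{H} \subseteq \op{HSP} = \Vg$), we get $\K \subseteq \op{H}(\K') \subseteq \Vg(\K')$, as required.

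With both inclusions verified, I would invoke Proposition~\ref{p:adm}: its clause (3) is exactly what we have just established, so its equivalence with clauses (1) and (2) yields the conclusion. Concretely, Proposition~\ref{p:adm}(2) gives $\Qg(\K') = \Qg(\F_\K(\omega))$, which is~(a); and Proposition~\ref{p:adm}(1) gives the admissibility characterization, which is~(b).

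I do not expect any genuine obstacle here: the work is purely a matter of matching the hypotheses to the input of Proposition~\ref{p:adm} and reading off the output. The only point deserving a moment's care is confirming that $\op{S}(\F_\K(\omega)) \subseteq \Qg(\F_\K(\omega))$ and $\op{H}(\K') \subseteq \Vg(\K')$, both of which follow from the standard descriptions $\Qg(\cdot) = \op{ISPP}_U(\cdot)$ and $\Vg(\cdot) = \op{HSP}(\cdot)$ recalled earlier. Indeed, this is why the corollary is labelled ``immediate,'' and a one- or two-line proof should suffice.
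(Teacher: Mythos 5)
Your proof is correct and matches the paper's intended route exactly: the paper presents this as an ``immediate corollary of Proposition~\ref{p:adm}'' with no written proof, and the verification you supply---$\K' \subseteq \op{S}(\F_\K(\omega)) \subseteq \Qg(\F_\K(\omega))$ and $\K \subseteq \op{H}(\K') \subseteq \Vg(\K')$, so that condition (3) of Proposition~\ref{p:adm} holds and its equivalences (2) and (1) yield parts (a) and (b)---is precisely the argument the authors leave implicit.
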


\begin{figure}[t]
\begin{algorithmic}[1]
\Function {AdmAlgs}{$\K$}
   \State \textbf{declare} {$\mathcal{A}, \mathcal{D}$ : set} 
      \State \textbf{declare} $\B, \B'$ : algebra
   \State ${\mathcal{D}} \leftarrow$ \Call{MinGenSet}{$\K$}
   \State ${\mathcal{A}} \leftarrow \emptyset$
   \ForAll{$\A \in {\mathcal{D}}$}
      \State $\B \leftarrow$ \Call{Free}{$\A,{\mathcal{D}}$}
      \State $\B' \leftarrow$ \Call{SubPreHom}{$\A,\B$}
      \While{$\B' \neq \B$} \label{alg:SubPreHom}
         \State $\B \leftarrow \B'$
         \State $\B' \leftarrow$ \Call{SubPreHom}{$\A,\B$}
      \EndWhile
      \State add $\B$ to ${\mathcal{A}}$
   \EndFor
   \State \Return \Call{MinGenSet}{${\mathcal{A}}$}
\EndFunction
\end{algorithmic}
\caption{For a finite set $\K$ of finite algebras, return the minimal generating set of
$\Qg(\F_\K(\omega))$. \label{alg:AdmAlgs}}
\end{figure}

\noindent
We describe an algorithm \proc{AdmAlgs} (see Figure~\ref{alg:AdmAlgs}) 
which takes as input a finite set $\K$ of finite $\lang$-algebras 
and outputs the (unique up to isomorphism) minimal generating set for $\Qg(\F_{\K}(\omega))$. 
First, the procedure \proc{MinGenSet} is applied to $\K$ (which typically would be a 
small set of small algebras) to obtain the set  of $\lang$-algebras $\mathcal{D}$. Then for each $\A \in \mathcal{D}$, 
a procedure \proc{Free($\A,\mathcal{D}$)} is invoked to produce the smallest free algebra 
$\F_{\mathcal{D}}(m)$ having $\A$ as a homomorphic image. (The procedure begins by checking the 
smallest free algebra $\F_\mathcal{D}(0)$ or $\F_\mathcal{D}(1)$, then increases the number of generators one at a time.) 
The algorithm then searches for progressively smaller subalgebras of $\F_{\mathcal{D}}(m)$ which have 
$\A$ as a homomorphic image. More precisely, the procedure $\proc{SubPreHom}(\A,\B)$ searches for 
a proper subalgebra of $\B$ that is a homomorphic image of $\A$, returning $\B$ if no such algebra exists 
(line~\ref{alg:SubPreHom}). This process terminates with a (hopefully reasonably small) algebra which 
is added to a set $\mathcal{A}$. 
Finally, the procedure \proc{MinGenSet} is applied to $\mathcal{A}$. 

\begin{thm}
For a finite set $\K$ of finite $\lang$-algebras, \proc{AdmAlgs}($\K$)
returns the (unique up to isomorphism) 
minimal generating set for the quasivariety $\Qg(\F_\K(\omega))$.
\end{thm}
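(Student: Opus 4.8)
The plan is to establish correctness of \proc{AdmAlgs} by verifying that the set $\mathcal{A}$ it assembles satisfies the hypotheses of Corollary~\ref{c:adm}, after which the final application of \proc{MinGenSet} delivers the minimal generating set. First I would invoke the already-proven correctness of \proc{MinGenSet}($\K$) to conclude that $\mathcal{D}$ generates the same quasivariety as $\K$, so $\Qg(\F_\mathcal{D}(\omega)) = \Qg(\F_\K(\omega))$; this reduces the problem to showing that \proc{AdmAlgs} returns the minimal generating set for $\Qg(\F_\mathcal{D}(\omega))$. The heart of the argument is then the loop: for each $\A \in \mathcal{D}$, the procedures \proc{Free} and \proc{SubPreHom} must together produce an algebra $\B$ with the two crucial properties that $\B \in \op{S}(\F_\mathcal{D}(\omega))$ and $\A \in \op{H}(\B)$.

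Next I would verify these two invariants step by step. For the subalgebra property, note that \proc{Free}($\A,\mathcal{D}$) returns some finite free algebra $\F_\mathcal{D}(m)$, which is a subalgebra of $\F_\mathcal{D}(\omega)$; and each call to \proc{SubPreHom} passes to a \emph{subalgebra} of its input, so by transitivity of $\op{S}$ every $\B$ collected lies in $\op{S}(\F_\mathcal{D}(\omega)) = \op{S}(\F_\K(\omega))$. For the homomorphic-image property, \proc{Free} is specified to return a free algebra having $\A$ as a homomorphic image, and \proc{SubPreHom}($\A,\B$) by construction only replaces $\B$ by a proper subalgebra that is \emph{still} a homomorphic image of $\A$, returning $\B$ unchanged otherwise; hence the invariant $\A \in \op{H}(\B)$ is preserved throughout the \textbf{while} loop, which terminates because the cardinality of $\B$ strictly decreases at each iteration. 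Therefore the final set $\mathcal{A}$ satisfies $\mathcal{A} \subseteq \op{S}(\F_\K(\omega))$ and $\mathcal{D} \subseteq \op{H}(\mathcal{A})$.

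It then follows from $\mathcal{D} \subseteq \op{H}(\mathcal{A})$ that $\mathcal{D} \subseteq \Vg(\mathcal{A})$, and since $\mathcal{A} \subseteq \op{S}(\F_\mathcal{D}(\omega))$ we may apply Corollary~\ref{c:adm} (with $\mathcal{D}$ in place of $\K$ and $\mathcal{A}$ in place of $\K'$) to obtain $\Qg(\mathcal{A}) = \Qg(\F_\mathcal{D}(\omega)) = \Qg(\F_\K(\omega))$. Finally, applying the correctness of \proc{MinGenSet}($\mathcal{A}$), the output is precisely the unique-up-to-isomorphism minimal generating set for $\Qg(\mathcal{A}) = \Qg(\F_\K(\omega))$, as required.

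I expect the main obstacle to be the termination and well-definedness of the inner \textbf{while} loop together with the exact specification of the subprocedures \proc{Free} and \proc{SubPreHom}: the argument hinges on \proc{Free} genuinely returning a free algebra of which $\A$ is a homomorphic image (guaranteed by the universal mapping property, since a surjection from the $m$ generators onto $A$ extends to a homomorphism onto $\A$ once $m$ is large enough) and on \proc{SubPreHom} preserving the homomorphic-preimage relationship while strictly shrinking the carrier. Once these procedural invariants are pinned down, the remainder is a direct bookkeeping application of the corollaries; the genuinely substantive content has already been discharged in Corollary~\ref{c:adm} and in the correctness of \proc{MinGenSet}.
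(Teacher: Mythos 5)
Your proposal is correct and follows essentially the same route as the paper, which leaves this theorem's proof implicit in the prose description of \proc{AdmAlgs}: reduce to $\mathcal{D}$ via the correctness of \proc{MinGenSet} (noting $\F_\mathcal{D}(\omega)=\F_\K(\omega)$ since the quasivarieties, hence varieties, coincide), maintain the invariants $\B\in\op{S}(\F_\mathcal{D}(\omega))$ and $\A\in\op{H}(\B)$ through \proc{Free} and \proc{SubPreHom}, and conclude $\Qg(\mathcal{A})=\Qg(\F_\K(\omega))$ by Corollary~\ref{c:adm} before the final \proc{MinGenSet} call. Your explicit treatment of termination and of the universal mapping property underlying \proc{Free} just spells out details the paper takes for granted.
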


\begin{exa}\label{ex:no_embedding}
Note that given even just one algebra $\A$, it might not be the case that the smallest subalgebra of $\F_{\A}(\omega)$ having 
$\A$ as a homomorphic image is the smallest algebra that generates $\Qg(\F_{\A}(\omega))$. 
Consider the $4$-element algebra $\alg{P} = \langle \{ a,b,c,d \}, \star \rangle$ where the unary function $\star$ 
and the free algebras $\F_\alg{P}(n)$ are described by the following diagrams:\\

\begin{center}
\pspicture(-1.5,0)(3,3)
\psset{unit=4mm}
\rput[c](-4,7){$a$}
\rput[l](-7.2,7){$\alg{P}$}
\psline[linewidth=0.5pt]{->}(-4.2,6.5)(-4.2,4.5)
\psline[linewidth=0.5pt]{<-}(-3.8,6.5)(-3.8,4.5)
\rput[c](-4,4){$b$}
\psline[linewidth=0.5pt]{->}(-6.5,1.5)(-4.5,3.5)
\rput[c](-7,1){$c$}
\psline[linewidth=0.5pt]{->}(-1.5,1.5)(-3.5,3.5)
\rput[c](-1,1){$d$}
\rput[l](1.5,7){$\F_\alg{P}(n)$}
\rput[c](2,5){$x_1$}
\psline[linewidth=0.5pt]{->}(3,5)(4.5,5)
\rput[c](6,5){$\star(x_1)$}
\psline[linewidth=0.5pt]{->}(7.5,5.2)(9,5.2)
\psline[linewidth=0.5pt]{<-}(7.5,4.8)(9,4.8)
\rput[l](9.5,5){$\star(\star(x_1))$}
\psdots[dotsize=1pt](6,2.5)(6,3)(6,3.5)
\rput[c](2,1){$x_n$}
\psline[linewidth=0.5pt]{->}(3,1)(4.5,1)
\rput[c](6,1){$\star(x_n)$}
\psline[linewidth=0.5pt]{->}(7.5,1.2)(9,1.2)
\psline[linewidth=0.5pt]{<-}(7.5,0.8)(9,0.8)
\rput[l](9.5,1){$\star(\star(x_n))$}
\endpspicture
\end{center}

\noindent 
The smallest subalgebra of $\F_\alg{P}(\omega)$ with $\alg{P}$ as a homomorphic image is the 
$6$-element free algebra $\F_\alg{P}(2)$. However, $\proc{MinGenSet}(\{\alg{P}\})$, performed at the 
beginning of the algorithm \proc{AdmAlgs}, discovers that $\alg{P}$ is a  $\Qg(\alg{P})$-subdirect 
product of (two copies of) the $3$-element free algebra $\F_\alg{P}(1)$. Hence 
$\Qg(\alg{P}) = \Qg(\F_\alg{P}(1))$ and this quasivariety is structurally complete.
\end{exa}

In the previous example, the fact that the algebra $\alg{P}$ is structurally complete is discovered by 
the algorithm \proc{AdmAlgs}. However, structural completeness can also be checked more 
directly using the following result:

\begin{prop} \label{prop:CharSC}
The following are equivalent for any finite set $\K$ of finite $\lang$-algebras:
\begin{enumerate}[{\rm (1)}]
\item $\K$ is structurally complete.
\item $\proc{MinGenSet}(\K) \subseteq \op{IS}(\F_\K(n))$ where $n = \max\{|C| : \C \in \K\}$.
\end{enumerate}
\end{prop}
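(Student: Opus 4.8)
The plan is to prove the equivalence by relating structural completeness of $\K$ to the existence of embeddings of the minimal generators into the appropriate finite free algebra, using the characterization of structural completeness in Proposition~\ref{p:Bergman}. By Proposition~\ref{p:Bergman}, $\K$ is structurally complete iff $\Qg(\K) = \Qg(\F_\K(\omega))$, and by Corollary~\ref{c:admfin}(a) we may replace $\F_\K(\omega)$ with the finite algebra $\F_\K(n)$ where $n = \max\{|C| : \C \in \K\}$. So the condition to be characterized is $\Qg(\K) = \Qg(\F_\K(n))$. Note also that $\Vg(\K) = \Vg(\F_\K(n))$ always holds, since each $\A \in \K$ is a homomorphic image of $\F_\K(n)$ (as in the proof of Corollary~\ref{c:admfin}) and $\F_\K(n) \in \Qg(\K) \subseteq \Vg(\K)$; thus we always have the inclusion $\Qg(\F_\K(n)) \subseteq \Qg(\K)$, and the content of structural completeness is precisely the reverse inclusion $\Qg(\K) \subseteq \Qg(\F_\K(n))$.

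First I would observe that $\Qg(\K) = \Qg(\proc{MinGenSet}(\K))$ by the correctness theorem for \proc{MinGenSet}, and that every member of $\proc{MinGenSet}(\K)$ is $\Qg(\K)$-subdirectly irreducible. The key point is that $\Qg(\K) \subseteq \Qg(\F_\K(n))$ holds iff every $\Qg(\K)$-subdirectly irreducible algebra lies in $\op{IS}(\F_\K(n))$. For the direction (2)$\Rightarrow$(1): if $\proc{MinGenSet}(\K) \subseteq \op{IS}(\F_\K(n))$, then since $\F_\K(n) \in \Qg(\K)$ we get $\op{IS}(\F_\K(n)) \subseteq \Qg(\F_\K(n))$, so $\proc{MinGenSet}(\K) \subseteq \Qg(\F_\K(n))$, whence $\Qg(\K) = \Qg(\proc{MinGenSet}(\K)) \subseteq \Qg(\F_\K(n))$, giving equality and structural completeness. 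For (1)$\Rightarrow$(2): assuming structural completeness, $\Qg(\K) = \Qg(\F_\K(n))$, so each $\A_i \in \proc{MinGenSet}(\K)$ is a $\Qg(\F_\K(n))$-subdirectly irreducible member of $\Qg(\F_\K(n)) = \Qg(\F_\K(n))$; then by Lemma~\ref{lem:irr_emb} applied to the finite generating set $\{\F_\K(n)\}$, each such $\A_i$ embeds into $\F_\K(n)$, i.e. $\A_i \in \op{IS}(\F_\K(n))$.

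The main subtlety I would watch for is the matching of the notion of subdirect irreducibility relative to the two quasivarieties $\Qg(\K)$ and $\Qg(\F_\K(n))$. Since these quasivarieties coincide under the structural completeness hypothesis, ``$\Qg(\K)$-subdirectly irreducible'' and ``$\Qg(\F_\K(n))$-subdirectly irreducible'' are the same notion, so Lemma~\ref{lem:irr_emb} applies cleanly with generating set $\K' = \{\F_\K(n)\}$: a finite set of one finite algebra. The members of $\proc{MinGenSet}(\K)$ are indeed $\Qg(\K)$-subdirectly irreducible by Theorem~\ref{t:MinGenSet} (this is exactly the property \proc{MinGenSet} guarantees via Lemma~\ref{lem:CharQSubDir}), so the hypothesis of Lemma~\ref{lem:irr_emb} is met. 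I would present both implications explicitly, being careful in (1)$\Rightarrow$(2) to invoke that the minimal generators are $\Qg(\K)$-subdirectly irreducible \emph{finite} algebras, which is what licenses the embedding conclusion. The expected obstacle is purely bookkeeping: ensuring $\F_\K(n)$ is simultaneously a member of the quasivariety (so that subalgebras of it stay inside) and a legitimate finite generating set (so that Lemma~\ref{lem:irr_emb} yields embeddings), both of which follow from the universal mapping property and Corollary~\ref{c:admfin}.
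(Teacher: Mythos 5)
Your proposal is correct and follows essentially the same route as the paper's proof: both directions rest on Proposition~\ref{p:Bergman} together with Corollary~\ref{c:admfin}(a), with (1)$\Rightarrow$(2) obtained by noting that the members of $\proc{MinGenSet}(\K)$ are $\Qg(\F_\K(n))$-subdirectly irreducible (the two quasivarieties coinciding) and applying Lemma~\ref{lem:irr_emb} to the generating set $\{\F_\K(n)\}$, and (2)$\Rightarrow$(1) via the chain $\Qg(\F_\K(n)) \subseteq \Qg(\K) = \Qg(\proc{MinGenSet}(\K)) \subseteq \Qg(\F_\K(n))$. Your extra remarks on $\Vg(\K) = \Vg(\F_\K(n))$ and the automatic inclusion $\Qg(\F_\K(n)) \subseteq \Qg(\K)$ are sound bookkeeping that the paper leaves implicit.
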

\begin{proof}
(1) $\Rightarrow$ (2) If $\K$ is structurally complete, then, by Proposition~\ref{p:Bergman} and Corollary~\ref{c:admfin}, 
$\Qg(\K) = \Qg(\F_\K(\omega)) = \Qg(\F_\K(n))$ where $n = \max\{|C| : \C \in \K\}$. 
So $\proc{MinGenSet}(\K) \subseteq \Qg(\F_\K(n))$. But each $\A\in \proc{MinGenSet}(\K)$ 
is $\Qg(\F_\K(n))$-subdirectly irreducible, so by Lemma~\ref{lem:irr_emb}, $\A$ embeds into $\F_\K(n)$. I.e., 
$\proc{MinGenSet}(\K) \subseteq \op{IS}(\F_\K(n))$.\medskip

\noindent (2) $\Rightarrow$ (1) If each  $\A \in \proc{MinGenSet}(\K)$ embeds into $\F_\K(n)$, then 
$\Qg(\F_\K(n)) \subseteq \Qg(\K) = \Qg(\proc{MinGenSet}(\K)) \subseteq \Qg(\F_\K(n))$. So 
$\K$ is structurally complete by Proposition~\ref{p:Bergman}.
\end{proof}

\begin{exa} \label{ex:Pseudocomp}
Proposition~\ref{prop:CharSC} has been used to confirm known structural completeness results 
for the $3$-element positive G{\"o}del algebra $\alg{G^+_3} = \langle \{0,\frac{1}{2},1\}, \min, \max, \to_\lgc{G} \rangle$ 
where  $x \to_\lgc{G} y$ is $y$ if $x > y$, otherwise $1$, and the Stone algebra $\alg{B_1} = 
\langle \{0,\frac{1}{2},1\}, \min, \max, \lnot_{{\lgc{G}}} \rangle$ where $\lnot_\lgc{G} x = x \to_\lgc{G} 0$. 
Moreover, a new structural completeness result has been established for the 
pseudocomplemented distributive lattice $\alg{B_2}$, obtained by adding a top element 
to the $4$-element Boolean lattice, and calculating the pseudocomplement.
\end{exa}

\noindent
Similarly, we can check almost structural completeness using the following result:

\begin{prop}\label{prop:almost}
The following are equivalent for any finite set $\K$ of finite $\lang$-algebras and $\B \in \op{S}(\F_\K(\omega))$:
\begin{enumerate}[{\rm (1)}]
\item $\K$ is almost structurally complete.
\item $\proc{MinGenSet}(\{\A \times \B : \A \in \K\}) \subseteq \op{IS}(\F_\K(n))$ where $n = \max\{|C| : \C \in \K\}$.
\end{enumerate}
\end{prop}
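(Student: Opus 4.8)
The plan is to mirror the structure of the proof of Proposition~\ref{prop:CharSC}, but invoking the almost-structural-completeness characterization (Theorem~\ref{thm:ASC}) in place of Bergman's characterization (Proposition~\ref{p:Bergman}). The key observation is that both results reduce a ``smallest algebra'' question to an embedding question via Lemma~\ref{lem:irr_emb}, which says that every $\Q$-subdirectly irreducible algebra in a finitely generated quasivariety embeds into a generating set. Throughout, write $\K'' = \{\A \times \B : \A \in \K\}$ and $n = \max\{|C| : \C \in \K\}$ for brevity.

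First I would establish the key fact that $\Qg(\F_{\K''}(\omega)) = \Qg(\F_\K(n))$, so that $\F_\K(n)$ is the correct ``target'' algebra for testing embeddings. Since $\B \in \op{S}(\F_\K(\omega)) \subseteq \Vg(\F_\K(\omega)) = \Vg(\K)$, each product $\A \times \B$ lies in $\Vg(\K)$, and since $\A \in \op{H}(\A \times \B)$ we also have $\K \subseteq \Vg(\K'')$; combined with $\B \in \Vg(\K)$ this gives $\Vg(\K'') = \Vg(\K)$, and hence $\F_{\K''}(\omega) = \F_\K(\omega)$. By Corollary~\ref{c:admfin}(a), $\Qg(\F_\K(\omega)) = \Qg(\F_\K(n))$, so the quasivariety whose minimal generating set we are comparing against $\op{IS}(\F_\K(n))$ is exactly $\Qg(\F_\K(n))$.

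For the direction (1)$\Rightarrow$(2): if $\K$ is almost structurally complete, then by Theorem~\ref{thm:ASC} we have $\Qg(\K'') = \Qg(\F_\K(\omega)) = \Qg(\F_\K(n))$. Hence $\proc{MinGenSet}(\K'') \subseteq \Qg(\F_\K(n))$. By definition of \proc{MinGenSet}, each $\A' \in \proc{MinGenSet}(\K'')$ is $\Qg(\K'')$-subdirectly irreducible, i.e.\ $\Qg(\F_\K(n))$-subdirectly irreducible, so by Lemma~\ref{lem:irr_emb} it embeds into $\F_\K(n)$; thus $\proc{MinGenSet}(\K'') \subseteq \op{IS}(\F_\K(n))$. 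For the converse (2)$\Rightarrow$(1): if every member of $\proc{MinGenSet}(\K'')$ embeds into $\F_\K(n)$, then $\Qg(\F_\K(n)) \subseteq \Qg(\K'') = \Qg(\proc{MinGenSet}(\K''))\subseteq \op{ISP}(\F_\K(n)) = \Qg(\F_\K(n))$, forcing $\Qg(\K'') = \Qg(\F_\K(n)) = \Qg(\F_\K(\omega))$, which is condition~(2) of Theorem~\ref{thm:ASC} and hence yields almost structural completeness.

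The main obstacle, and the only genuinely new ingredient beyond transcribing the argument of Proposition~\ref{prop:CharSC}, is confirming that the reduction $\Qg(\F_{\K''}(\omega)) = \Qg(\F_\K(n))$ is legitimate. The subtlety is that the proposition fixes an arbitrary $\B \in \op{S}(\F_\K(\omega))$ rather than the smallest such subalgebra, so one must verify that adjoining $\B$ as a factor does not enlarge the generated variety — i.e.\ that $\F_{\K''}(\omega) = \F_\K(\omega)$ really does hold for \emph{every} subalgebra $\B$ of $\F_\K(\omega)$. This follows cleanly from $\B \in \Vg(\K)$, but it is worth stating explicitly since it is what allows Corollary~\ref{c:admfin}(a) to be applied to $\K''$ with the \emph{same} bound $n$ computed from $\K$ alone.
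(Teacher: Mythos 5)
Your proposal is correct, and your (1)$\Rightarrow$(2) direction is exactly the paper's argument: combine Theorem~\ref{thm:ASC} with Corollary~\ref{c:admfin} to get $\Qg(\{\A \times \B : \A \in \K\}) = \Qg(\F_\K(\omega)) = \Qg(\F_\K(n))$, note that each member of $\proc{MinGenSet}(\{\A \times \B : \A \in \K\})$ is $\Qg(\F_\K(n))$-subdirectly irreducible, and apply Lemma~\ref{lem:irr_emb}. Where you diverge is the converse, and the divergence is instructive. Writing $\K'' = \{\A \times \B : \A \in \K\}$, the paper's (2)$\Rightarrow$(1) is a one-liner: from the embeddings one gets only the \emph{inclusion} $\K'' \subseteq \Qg(\F_\K(n)) = \Qg(\F_\K(\omega))$, which is already condition~(3) of Theorem~\ref{thm:ASC}, and that theorem's (3)$\Rightarrow$(1) implication finishes the proof. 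You instead aim for the full \emph{equality} $\Qg(\K'') = \Qg(\F_\K(\omega))$, i.e.\ condition~(2) of Theorem~\ref{thm:ASC}, which obliges you to prove the reverse inclusion $\Qg(\F_\K(n)) \subseteq \Qg(\K'')$; your ``key fact'' paragraph supplies this: $\Vg(\K'') = \Vg(\K)$ via $\A \in \op{H}(\A \times \B)$ and $\B \in \op{S}(\F_\K(\omega)) \subseteq \Vg(\K)$, whence $\F_{\K''}(\omega) = \F_\K(\omega)$, and since $\F_{\K''}(\omega) \in \op{ISP}(\K'') \subseteq \Qg(\K'')$ (a step you should state explicitly, as it is what actually yields your unexplained first inclusion $\Qg(\F_\K(n)) \subseteq \Qg(\K'')$), the equality follows. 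This is sound, but it is redundant relative to the paper's route: the variety computation you perform inline is precisely the content of the (3)$\Rightarrow$(2) step already proved inside Theorem~\ref{thm:ASC} (via Proposition~\ref{p:adm}), so invoking (3)$\Rightarrow$(1) directly would let you delete your opening paragraph and your closing worry about whether adjoining $\B$ enlarges the variety --- a question the paper's proof never needs to raise.
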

\begin{proof}
(1) $\Rightarrow$ (2) If $\K$ is almost structurally complete, then by Theorem~\ref{thm:ASC}  and Corollary~\ref{c:admfin}, 
$\Qg(\{\A \times \B  :  \A \in \K\}) = \Qg(\F_\K(\omega)) =  \Qg(\F_\K (n))$ where $n = \max\{|{C}| : \C \in \K\}$. In particular, 
$\proc{MinGenSet}(\{\A \times \B : \A \in \K\}) \subseteq \Qg(\F_\K (n))$. 
But each $\C \in  \proc{MinGenSet}(\{\A \times \B : \A \in \K\})$ 
is  $\Qg(\F_\K(n))$-subdirectly irreducible, so by Lemma~\ref{lem:irr_emb}, 
$\C$ embeds into $\F_\K(n)$. I.e.,  $\proc{MinGenSet}(\{\A \times \B : \A \in \K\}) \subseteq \op{IS}(\F_\K(n))$.\medskip

\noindent(2) $\Rightarrow$ (1) If $\proc{MinGenSet}(\{\A \times \B : \A \in \K\}) \subseteq \op{IS}(\F_\K(n))$, 
then $\{\A \times \B : \A \in \K\} \subseteq \Qg(\F_\K(n)) = \Qg(\F_\K(\omega))$. So by Theorem~\ref{thm:ASC}, 
$\K$ is almost structurally complete.
\end{proof}

\begin{exa} \label{ex:SugiMono3}
The Soboci{\'n}ski algebra $\alg{S_3} = \langle \{ -1, 0, 1 \}, \to, \lnot \rangle$ 
with operations 
\[
\begin{array}{r@{\ \ }|@{\ \ }r@{\quad}r@{\quad}rrr@{\ \ }|@{\ \ }rrr}
\to   & -1  & 0   & 1   & \qquad\qquad\qquad & \lnot & \\
\cline{1-4}
\cline{6-7}
-1       & 1   & 1   & 1   & \quad\quad\quad & -1  & 1 \\
0        & -1  & 0   & 1   & \quad\quad\quad & 0   & 0 \\
1        & -1  & -1  & 1   & \quad\quad\quad & 1   & -1 \\
\end{array}
\]
is a weak characteristic matrix for the multiplicative fragment $\lgc{RM_m}$ 
of the substructural logic R-mingle:  $\f$ is 
a theorem of $\lgc{RM_m}$ if and only if $\mdl{\alg{S_3}} \f \eq \f \to \f$. 
Using Proposition~\ref{prop:almost}, we establish that $\alg{S_3}$ is almost structurally 
complete and  that $\Qg(\F_{\alg{S_3}}(\omega)) = \Qg(\alg{S_3} \times \alg{S_2})$ where $\alg{S_2} = 
\langle \{-1,1\}, \to, \lnot \rangle$. Note, moreover, that 
the implicational reduct $\alg{S^{\to}} = \langle \{ -1, 0, 1 \}, \to \rangle$ of $\alg{S_3}$ is structurally complete, 
which can be confirmed directly using Proposition~\ref{prop:CharSC} or discovered automatically by \proc{AdmAlgs}.
\end{exa}

\begin{exa}\label{ex:SugiMono4}
Consider now the algebra $\alg{Z_4} = \langle \{ -2,-1, 1,2 \}, \to, \lnot, 1 \rangle$ (the reduct of a $4$-element 
algebra for R-mingle) with a constant $1$ and operations 
\[
\begin{array}{r@{\ \ }|@{\ \ }r@{\ \ }r@{\quad}r@{\quad}rrr@{\ \ }|@{\ \ }rrr}
\to   & -2  & -1  &  1 & 2    & \qquad\qquad\qquad & \lnot & \\
\cline{1-5}
\cline{7-8}
-2    &  2  &  2  &  2 & 2    & \quad\quad\quad & -2  & 2 \\
-1    & -2  &  1  &  1 & 2    & \quad\quad\quad & -1  & 1 \\
 1    & -2  & -1  &  1 & 2    & \quad\quad\quad & 1   & -1 \\
 2    & -2  & -2  & -2 & 2    & \quad\quad\quad & 2   & -2 \\
\end{array}
\]
 \proc{AdmAlgs} produces a $6$-element algebra (the product of a $2$-element algebra and a 
$3$-element algebra). Remarkably, if we drop $\lnot$ to obtain  $\alg{Z^+_4} = \langle \{ -2,-1, 1,2 \}, \to, 1 \rangle$, 
 \proc{AdmAlgs} produces a $4$-element algebra that is not  isomorphic to  $\alg{Z^+_4}$.
\end{exa}


\begin{exa} \label{ex:DeMorganKleene}
Recall from Example~\ref{ex:DeMorganUni} that the algebras $\alg{D^{\ell}_4}$ and $\alg{D_4}$ generate 
the varieties of De Morgan latices and De Morgan algebras, respectively,  as quasivarieties. 
Both $\alg{D^{\ell}_4}$ and $\alg{D_4}$ are homomorphic images of the corresponding 
free algebras on two generators (with $166$ and $168$ elements, respectively) but not on one generator. 
In the De Morgan lattice case, \proc{AdmAlgs} finds a smallest suitable subalgebra isomorphic to 
$\alg{D^{\ell}_4} \times \alg{2}$, while for De Morgan algebras, 
 the smallest suitable algebra is isomorphic to $\alg{D_4} \times \alg{2}$ with additional top and bottom elements. 
 These results were established ``by hand'' in~\cite{MR12a}; our procedure here confirms them automatically. 
Similar results were also obtained in~\cite{MR12a} for Kleene lattices and Kleene algebras  (subvarieties of De Morgan lattices and 
De Morgan algebras, respectively) which are 
generated as quasivarieties by the $3$-element chains $\alg{C^{\ell}_3} =  \langle \{\top,e,\bot\}, \land, \lor, \lnot \rangle$ 
and $\alg{C_3} =  \langle \{\top,e,\bot\}, \land, \lor, \lnot, \bot, \top \rangle$
where $\lnot$ swaps $\bot$ and $\top$ and leaves $e$ fixed. 
In both cases, the smallest algebra for $\alg{C_3}$  and $\alg{C^{\ell}_3}$,  found automatically by \proc{AdmAlgs} 
is a $4$-element chain.
\end{exa}

\begin{exa}\label{e:twoalgebras}
Consider the $2$-element and $3$-element chains, $\alg{C^e_2} = \langle \{\bot,\top\}, \land, \lor, \lnot, \top \rangle$ and 
 $\alg{C^e_3} = \langle \{\bot, e,�\top\}, \land, \lor, \lnot, e \rangle$ where $\lnot$ swaps $\bot$ and $\top$ and leaves 
 $e$ fixed. Individually, these algebras are structurally complete. However, applying  \proc{AdmAlgs} to 
 $\K = \{\alg{C^e_2}, \alg{C^e_3}\}$, we find that $\K$ is not structurally complete: 
 both $\alg{C^e_2}$ and $\alg{C^e_3}$ are homomorphic images of the $16$-element 
 free algebra $\F_\K(1)$, and the minimal generating set for $\Qg(\F_\K(\omega))$ consists of a single 
 $4$-element algebra.
\end{exa}

We remark finally that except for Example \ref{e:twoalgebras}, all the case studies considered in this section 
(see Table~\ref{table}) involve quasivarieties generated by a single algebra with five elements or fewer, and produce minimal generating sets 
for checking admissibility also containing just one algebra. Indeed in many of the cases considered, the 
quasivariety is shown to be either structurally complete or almost structurally complete. The main obstacle at present to extending 
our experimental results to (larger sets of) larger algebras is the size of the resulting free algebras 
(in the worst case,  size $k^{k^n}$  for $n$ generators and an algebra of size $k$), both for computing these algebras and then for 
generating suitable subalgebras. 
Nevertheless, we hope in future work to obtain procedures capable of treating larger 
examples by using heuristics to constrain the search for appropriate generating algebras.


\begin{table}[tbp]
\begin{center}\small
\begin{tabular}{|@{\ \ }c@{\ \ }|@{\ \ }c@{\ \ }|@{\ \ }l@{\ \ }|@{\ \ }l@{\ \ }|@{\ \ }c@{\ \ }|} \hline
   $\A$                         & $|{A}|$ 	& Quasivariety $\Qg(\A)$   						& Free algebra 		&  $|$\proc{AdmAlgs}($\A$)$|$\\ \hline\hline  
   $\alg{L_3}$      & 3		& algebras for $\lgc{\Luk_3}$ (Ex.~\ref{ex:luk})	 	&  $|{F}_\A(1)|  =   12$ 	&   6  \\ \hline
   $\alg{L_3^\to}$  & 3		& algebras for $\lgc{\Luk^\to_3}$  (Ex.~\ref{ex:luk})			&  $|{F}_\A(2)|  =   40$ 	&   3  \\ \hline
   $\alg{B_1}$                  & 3		& Stone algebras (Ex.~\ref{ex:Pseudocomp})     			&  $|{F}_\A(1)|  =    6$ 	&   3  \\ \hline
   $\alg{C_3}$                  & 3		& Kleene algebras (Ex.~\ref{ex:DeMorganKleene}) 			&  $|{F}_\A(1)|  =    6$ 	&   4  \\ \hline
   $\alg{C^{\ell}_3}$                & 3		& Kleene lattices  (Ex.~\ref{ex:DeMorganKleene})  			&  $|{F}_\A(2)|  =   82$ 	&   4  \\ \hline
   $\alg{S_3}$       & 3		& algebras for $\lgc{RM^{\to\lnot}}$ (Ex.~\ref{ex:SugiMono3})  		&  $|{F}_\A(2)|  =  264$ 	&   6  \\ \hline
   $\alg{S^\to_3}$              & 3		& algebras for $\lgc{RM^\to}$ (Ex.~\ref{ex:SugiMono3})      		&  $|{F}_\A(2)|  =   60$ 	&   3  \\ \hline
   $\alg{G^+_3}$                  & 3		& algebras for $\lgc{G^+_3}$ (Ex.~\ref{ex:Pseudocomp})				&  $|{F}_\A(2)|  =  18$ 	&   3  \\ \hline
   $\alg{D^{\ell}_4}$      & 4		& De Morgan lattices (Exs~\ref{ex:DeMorganUni},\ref{ex:DeMorganKleene})	&  $|{F}_\A(2)|  =  166$ 	&   8  \\ \hline
   $\alg{D_4}$                  & 4		& De Morgan algebras (Exs.~\ref{ex:DeMorganUni},\ref{ex:DeMorganKleene})&  $|{F}_\A(2)|  =  168$ 	&   10 \\ \hline
   $\alg{P}$                    & 4		& $\Qg(\alg{P})$ (Ex.~\ref{ex:no_embedding})        			&  $|{F}_\A(2)|  =    6$ 	&   3  \\ \hline
   $\alg{Z_4}$     & 4		& algebras for $\lgc{RM^{\to\lnot e}}$ (Ex.~\ref{ex:SugiMono4}) &  $|{F}_\A(1)|  =   18$ &   6  \\ \hline
   $\alg{Z_4^+}$          & 4		& algebras for $\lgc{RM^{\to e}}$ (Ex.~\ref{ex:SugiMono4})	&  $|{F}_\A(2)|  =  453$ 	&   4 \\ \hline
   $\alg{B_2}$                  & 5		& $\Qg(\alg{\B_2})$ (Ex.~\ref{ex:Pseudocomp})                    	&  $|{F}_\A(1)|  =    7$ 	&   5  \\ \hline
   $\alg{M_5}$                       & 5		& lattices in $\Qg(\alg{M_5})$  (Ex.~\ref{ex:M5N5})			           	&  $|{F}_\A(3)|  =   28$ 	&   5  \\ \hline
   $\alg{N_5}$                       & 5		& lattices in $\Qg(\alg{N_5})$  (Ex.~\ref{ex:M5N5})           			&  $|{F}_\A(3)|  =   99$ 	&   5  \\ \hline
\end{tabular}
\end{center}\caption{Algebras for checking admissibility}\label{table}
\end{table}


\section{Finite-valued logics}~\label{s:logics}

The preceding characterizations of admissibility can be adapted relatively straightforwardly 
to finite-valued logics. However, in this setting, the designated values of the logic 
-- the ``true truth values'' -- must also be considered, and dealt with appropriately. Here we describe a 
method that given a finite-valued logic $\lgc{L}$, provides another (hopefully small) finite-valued logic $\lgc{L'}$ 
such that validity in $\lgc{L'}$ corresponds to admissibility in $\lgc{L}$. The more general case, where a 
smallest finite set of logics is sought such that validity in all members of the set 
corresponds to admissibility in a logic (or logics), is left as an exercise for the interested reader.

Recall that a {\em finite-valued logic} $\lgc{L} = (\A,D)$ for a language $\lang$ consists of a finite $\lang$-algebra $\A$ and a set of 
{\em designated values} $D \subseteq A$. Given $\Ga \cup \{\f\} \subseteq {\rm Tm}_\lang$, we let $\Ga \der{L} \f$ denote 
that for all homomorphisms $h \colon \alg{Tm_\lang} \to \A$, whenever $h[\Ga] \subseteq D$, also $h(\f) \in D$. A term  
 $\f$ is  {\em $\lgc{L}$-valid} if $\der{L} \f$. 
We  call a logic $\lgc{L_1} = (\A_1,D_1)$ for a language $\lang$ a {\em sublogic} of a logic $\lgc{L_2} = (\A_2,D_2)$ 
if $\A_1$ is a subalgebra of $\A_2$ and also $D_1 = D_2 \cap A_1$. 
 
Consider now a finite-valued logic $\lgc{L} = (\A,D)$ for a language $\lang$ and a finite set of terms 
$\Ga \cup \{\f\} \subseteq {\rm Tm}_\lang$. 
We say that $\Ga$ is {\em $\lgc{L}$-unifiable} if there exists a homomorphism 
 $\si \colon \alg{Tm_\lang} \to \alg{Tm_\lang}$ such that $\der{L} \si(\p)$ for all $\p \in \Ga$ 
 and call $\si$ in this case an  {\em $\lgc{L}$-unifier} of $\Ga$.  
A  rule $\Ga \rl \f$ is said to be {\em $\lgc{L}$-admissible} if every $\lgc{L}$-unifier of $\Ga$ is an $\lgc{L}$-unifier of $\f$.  
Now if we define the finite-valued logic $\lgc{L^*} = (\F_\A(|A|), D^*)$ where $D^* = \{[\f] \in \F_\A(|A|)  : \ \der{L} \f\}$, then 
we easily obtain the following analogue of Theorem~\ref{t:EqAdm}.

\begin{lem}\label{l:finval}
Let $\lgc{L} = (\A,D)$ be a finite-valued logic for a language $\lang$. 
Then $\Ga \rl \f$ is $\lgc{L}$-admissible iff $\Ga \der{L^*} \f$.
\end{lem}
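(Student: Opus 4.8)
The plan is to prove Lemma~\ref{l:finval} by establishing a bridge between $\lgc{L}$-admissibility and validity in $\lgc{L^*}$ that mirrors, at the level of designated values, the bridge between $\K$-admissibility and $\F_\K(\omega)$-validity used in Theorem~\ref{t:EqAdm}. The central observation is that the universe of $\F_\A(|A|)$ consists of equivalence classes $[\p]$ of $\lang$-terms modulo the free-algebra congruence, and that $\der{L^*}$ is defined by insisting $h[\Ga] \subseteq D^*$ implies $h(\f) \in D^*$ for homomorphisms $h \colon \alg{Tm_\lang} \to \F_\A(|A|)$. By Corollary~\ref{c:admfin}(a), $|A|$ generators already suffice to capture the relevant free algebra, so working with $\F_\A(|A|)$ rather than $\F_\A(\omega)$ loses nothing; I would invoke this at the outset to justify the choice of generator count and the finiteness of $\lgc{L^*}$.

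First I would set up the correspondence between substitutions and homomorphisms into the free algebra, exactly as in the proof of Lemma~\ref{l:unif}. Given an $\lgc{L}$-unifier $\si \colon \alg{Tm_\lang} \to \alg{Tm_\lang}$ of $\Ga$, I would pass to the induced homomorphism $h_\si \colon \alg{Tm_\lang} \to \F_\A(|A|)$ sending each variable $x$ to $\si(x)$ modulo the free congruence; conversely, a homomorphism into $\F_\A(|A|)$ lifts to a substitution by choosing term representatives of the images of variables. The key equivalence to nail down is that $\si$ is an $\lgc{L}$-unifier of a term $\p$ (i.e.\ $\der{L}\si(\p)$) if and only if $h_\si(\p) \in D^*$. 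This follows from unwinding the definition of $D^* = \{[\p] : \der{L}\p\}$ together with the fact that $h_\si(\p) = [\si(\p)]$, so that membership in $D^*$ says precisely $\der{L}\si(\p)$; care is needed to check that $D^*$ is well-defined on equivalence classes, which holds because $\der{L}$ respects the free congruence (terms identified in $\F_\A(|A|)$ are $\A$-equivalent and hence interchangeable under $\der{L}$).

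With this dictionary in place, the lemma is essentially a translation. For the forward direction, suppose $\Ga \rl \f$ is $\lgc{L}$-admissible and let $h \colon \alg{Tm_\lang} \to \F_\A(|A|)$ witness $h[\Ga] \subseteq D^*$; lift $h$ to a substitution $\si$, observe $\si$ is an $\lgc{L}$-unifier of $\Ga$ by the equivalence above, conclude $\si$ unifies $\f$ by admissibility, and translate back to $h(\f) \in D^*$, giving $\Ga \der{L^*} \f$. The converse runs symmetrically: from $\Ga \der{L^*}\f$ and an arbitrary $\lgc{L}$-unifier $\si$ of $\Ga$, form $h_\si$, note $h_\si[\Ga] \subseteq D^*$, deduce $h_\si(\f) \in D^*$, and read off that $\si$ unifies $\f$.

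The main obstacle I expect is purely bookkeeping rather than conceptual: verifying that $D^*$ is genuinely a well-defined subset of the quotient $\F_\A(|A|)$ and that the passage between substitutions and homomorphisms is compatible with the designation predicate on both sides. In particular one must confirm that $\der{L}\p$ depends only on the class $[\p]$, and that restricting attention to $|A|$ generators (rather than infinitely many) does not affect which terms become $\lgc{L}$-unifiable --- a point I would settle by appealing to Corollary~\ref{c:admfin}(a) exactly as in the equational setting. Once these compatibility checks are discharged, the two implications are immediate from the substitution/homomorphism correspondence, so the proof will be short.
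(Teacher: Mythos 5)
Your forward direction is sound, and your two bookkeeping checks are exactly right: $D^*$ is well defined because terms identified in $\F_\A(|A|)$ take the same value under every valuation into $\A$, and a homomorphism $h \colon \alg{Tm_\lang} \to \F_\A(|A|)$ lifts to a substitution by choosing representatives. The genuine gap is in the converse. For an \emph{arbitrary} $\lgc{L}$-unifier $\si$ of $\Ga$, your map $h_\si$, ``sending each variable $x$ to $\si(x)$ modulo the free congruence,'' is simply not defined: $\si(x)$ may involve variables outside the $|A|$ generators of $\F_\A(|A|)$, so its class is not an element of that free algebra. Nor can you fix this by first renaming the variables of $\si$ into the generators by one fixed substitution $\tau$: non-validity is not preserved under substitution (classically, $x \lor \lnot y$ is not valid although its collapse $x \lor \lnot x$ is), so $\der{L} \tau(\si(\f))$ for one such $\tau$ does not yield $\der{L} \si(\f)$. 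And the tool you name to settle the generator-count issue, Corollary~\ref{c:admfin}(a), cannot do the job: it asserts $\Qg(\F_\K(\omega)) = \Qg(\F_\K(n))$, which transfers validity of \emph{quasiequations} between the free algebras, but is silent about designated sets; $D^*$ is not part of the algebraic signature and is not in general equationally definable, so consequence in the matrix $\lgc{L^*}$ does not transfer along equality of quasivarieties.

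The repair --- and the place where $n = |A|$ is genuinely used, in analogy with the step in the proof of Corollary~\ref{c:admfin} showing that each $\A \in \K$ is a homomorphic image of $\F_\K(n)$ --- is a collapse matched to each valuation. Suppose $\Ga \der{L^*} \f$ and $\si$ is an $\lgc{L}$-unifier of $\Ga$; to show $\der{L}\si(\f)$, fix any homomorphism $e \colon \alg{Tm_\lang} \to \A$. Writing $A = \{a_1,\ldots,a_n\}$ and letting $x_1,\ldots,x_n$ be the generators of $\F_\A(|A|)$, define the substitution $\tau_e$ by $\tau_e(y) = x_i$ whenever $e(y) = a_i$. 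Then $\tau_e \circ \si$ is again an $\lgc{L}$-unifier of $\Ga$ (validity, unlike non-validity, \emph{is} closed under substitution) whose range consists of terms over $x_1,\ldots,x_n$, so $h \colon x \mapsto [\tau_e(\si(x))]$ is a well-defined homomorphism into $\F_\A(|A|)$ with $h[\Ga] \subseteq D^*$. From $\Ga \der{L^*} \f$ you obtain $[\tau_e(\si(\f))] \in D^*$, i.e.\ $\der{L} \tau_e(\si(\f))$; evaluating under the valuation $x_i \mapsto a_i$, which composed with $\tau_e$ agrees with $e$ on all variables, gives $e(\si(\f)) \in D$. As $e$ was arbitrary, $\der{L}\si(\f)$. (The paper states the lemma without proof as an ``easy analogue'' of Theorem~\ref{t:EqAdm}; this valuation-matched collapse is the content hiding behind that remark, and it is exactly what your appeal to Corollary~\ref{c:admfin}(a) was standing in for.)
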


\noindent
The next result may then be understood as an analogue of 
Proposition~\ref{p:adm}.

\begin{prop}
Let $\lgc{L} = (\A,D_A)$ be a finite-valued logic for a language $\lang$, and let $\lgc{L'} = (\B,D_B)$ be a sublogic of 
$\lgc{L^*}$ such that there exists a surjective homomorphism 
$h \colon \B \to \A$ satisfying $h[D_B] \subseteq D_A$.  Then $\Ga \rl \f$ is $\lgc{L}$-admissible iff 
$\Ga \der{L'} \f$.
\end{prop}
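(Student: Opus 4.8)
The plan is to mirror the proof strategy of Proposition~\ref{p:adm}, but carried out in the logical setting where designated values replace equations. The key object is the starred logic $\lgc{L^*} = (\F_\A(|A|), D^*)$, for which Lemma~\ref{l:finval} reduces $\lgc{L}$-admissibility of $\Ga \rl \f$ to the consequence $\Ga \der{L^*} \f$. So it suffices to show that $\lgc{L'}$-consequence and $\lgc{L^*}$-consequence agree on all rules, i.e.\ that $\Ga \der{L'} \f$ iff $\Ga \der{L^*} \f$. This splits into two inclusions, one of which should follow from $\lgc{L'}$ being a sublogic of $\lgc{L^*}$, the other from the existence of the surjective homomorphism $h$ respecting designated values.

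First I would establish the easy direction: since $\lgc{L'} = (\B, D_B)$ is a sublogic of $\lgc{L^*}$, we have $\B \in \op{S}(\F_\A(|A|))$ with $D_B = D^* \cap B$. Any homomorphism $k \colon \alg{Tm_\lang} \to \B$ composes with the inclusion to give a homomorphism into $\F_\A(|A|)$, and the sublogic condition on designated values guarantees that a counterexample to $\Ga \der{L'} \f$ lifts to a counterexample to $\Ga \der{L^*} \f$. Hence $\Ga \der{L^*} \f$ implies $\Ga \der{L'} \f$. This gives soundness of passing to the smaller logic $\lgc{L'}$.

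The harder direction is to show $\Ga \der{L'} \f$ implies $\Ga \der{L^*} \f$, and this is where the homomorphism $h \colon \B \to \A$ enters, though I expect the real subtlety is that we need to relate consequence in $\lgc{L^*}$ back through $\lgc{L}$ rather than directly to $\lgc{L'}$. The natural route is: a rule failing in $\lgc{L^*}$ yields, via some evaluation into $\F_\A(|A|)$ together with the universal mapping property, a failure of $\lgc{L}$-admissibility; then by Lemma~\ref{l:finval} again and the definition of $D^*$ one transfers this to a substitution witnessing failure, which the surjectivity of $h$ and the condition $h[D_B] \subseteq D_A$ should let us realize inside $\lgc{L'}$. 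Concretely, I would take an $\lgc{L}$-unifier of $\Ga$ that is not an $\lgc{L}$-unifier of $\f$, compose it with a section-like choice of preimages under $h$ (using surjectivity of $h$ to pick representatives in $B$), and verify that the condition $h[D_B]\subseteq D_A$ forces the designated/undesignated status to be preserved in the right direction so that the composite substitution is an $\lgc{L'}$-unifier of $\Ga$ but not of $\f$.

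The main obstacle I anticipate is handling the asymmetry built into the hypotheses: $h$ is only required to satisfy $h[D_B] \subseteq D_A$ (an inclusion, not an equality), and $\lgc{L'}$ is only a sublogic of $\lgc{L^*}$, so the two directions use genuinely different structural facts and one cannot simply invoke an equivalence of generated quasivarieties as in Proposition~\ref{p:adm}. Care is needed to check that designated values map in the correct direction at each step --- in particular, ensuring that $h$ pulls $\lgc{L}$-validity back to $\lgc{L'}$-validity of the relevant substituted terms without losing any unifiers. I would keep the argument phrased entirely at the level of homomorphisms from $\alg{Tm_\lang}$ and their interaction with the sets $D_A$, $D_B$, and $D^*$, appealing to the universal mapping property of $\F_\A(|A|)$ exactly as in the proof of Corollary~\ref{c:admfin}, rather than attempting to reuse the quasivariety machinery wholesale.
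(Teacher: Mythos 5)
Your overall skeleton agrees with the paper's proof: the easy half (admissibility gives $\Ga \der{L^*} \f$ by Lemma~\ref{l:finval}, and a $\B$-valuation refuting $\Ga \der{L'} \f$ composes with the inclusion $\B \hookrightarrow \F_\A(|A|)$ to refute $\Ga \der{L^*} \f$, since $D_B = D^* \cap B$) is correct as you state it. The gap is in your hard direction, at the step where you claim that surjectivity of $h$ and $h[D_B] \subseteq D_A$ let you turn a failing $\lgc{L}$-unifier into an $\lgc{L'}$-unifier of $\Ga$ that fails to unify $\f$, by composing with ``a section-like choice of preimages under $h$''. Suppose $\si$ is an $\lgc{L}$-unifier of $\Ga$ and $e \colon \alg{Tm_\lang} \to \A$ satisfies $e(\si(\f)) \notin D_A$. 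If you choose $v \colon \alg{Tm_\lang} \to \B$ with $h \circ v = e \circ \si$ on variables (hence everywhere, by freeness of the term algebra), then indeed $v(\f) \notin D_B$: that is the contrapositive of $h[D_B] \subseteq D_A$, so the \emph{conclusion} transfers in the right direction. But for a premise $\p \in \Ga$ you only know $h(v(\p)) = e(\si(\p)) \in D_A$, and the inclusion $h[D_B] \subseteq D_A$ says nothing about $h^{-1}[D_A]$: elements of $B \setminus D_B$ may perfectly well be mapped by $h$ into $D_A$. So you cannot conclude $v(\p) \in D_B$, the constructed valuation or substitution need not designate the premises, and no cleverer choice of preimages repairs this --- the hypotheses simply do not make designatedness travel backwards along $h$. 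You flagged exactly this asymmetry as a worry, but the plan as written founders on it.

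The missing idea is that the premises should never pass through $h$ at all. Since $\lgc{L}$-validity is closed under substitution, $\der{L} \si(\p)$ implies that every homomorphism $\alg{Tm_\lang} \to \F_\A(|A|)$ sends $\si(\p)$ into $D^*$, i.e.\ $\der{L^*} \si(\p)$, and hence $\der{L'} \si(\p)$ because $D_B = D^* \cap B$; thus $\si$ is automatically an $\lgc{L'}$-unifier of $\Ga$, by the sublogic relation alone. Now structurality of the matrix consequence $\der{L'}$ together with the assumption $\Ga \der{L'} \f$ gives $\si[\Ga] \der{L'} \si(\f)$, whence $\der{L'} \si(\f)$. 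Only at this point does $h$ enter, and only in the pushforward role you correctly identified: lift an arbitrary $e \colon \alg{Tm_\lang} \to \A$ to $v \colon \alg{Tm_\lang} \to \B$ with $h \circ v = e$ by choosing variable preimages, and conclude $e(\si(\f)) = h(v(\si(\f))) \in h[D_B] \subseteq D_A$, so $\der{L} \si(\f)$. This is the paper's argument. Incidentally, your instinct to stay at the level of homomorphisms out of $\alg{Tm_\lang}$ is sound: the paper's own proof invokes a homomorphic section $k \colon \A \to \B$ with $h \circ k = \mathrm{id}_\A$, which surjectivity alone does not in general guarantee, whereas the variable-wise lift through the absolutely free term algebra is the robust formulation. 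But that refinement concerns the conclusion step, and does not rescue your treatment of the premises, which is where the proposal fails.
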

\begin{proof}
If $\Ga \rl \f$ is $\lgc{L}$-admissible, then by Lemma~\ref{l:finval}, $\Ga \der{L^*} \f$. Since $\lgc{L'}$ is a sublogic of 
$\lgc{L^*}$, also  $\Ga \der{L'} \f$. Conversely, suppose that $\Ga \der{L'} \f$ and that $\si$ is an $\lgc{L}$-unifier of $\Ga$.  
Notice that if $\der{L} \p$,  then $\der{L^*} \p$ and  $\der{L'} \p$. So $\si$ is also an $\lgc{L^*}$-unifier and 
$\lgc{L'}$-unifier of $\Ga$. But 
$\si(\Ga) \der{L'} \si(\f)$ and therefore $\der{L'} \si(\f)$. 
Now consider any homomorphism $e \colon \alg{Tm}_\lang \to \A$. Since $h$ is a surjective 
homomorphism from $\B$ to $\A$, there exists a homomorphism $k \colon \A \to \B$ 
such that $h \circ k$ is the identity map on $\A$. But $\der{L'} \si(\f)$ and hence $k \circ e \circ \si(\f) \in D_B$. Therefore
$e \circ \si(\f) = h \circ k \circ e \circ \si(\f) \in h[D_B] \subseteq D_A$. So  $\der{L} \si(\f)$.
\end{proof}

\begin{exa}
The $3$-valued \L ukasiewicz logic $\lgc{\Luk_3}$ and Ja{\'s}kowski logic $\lgc{J_3}$ may both be presented using the 
$3$-element 
Wajsberg algebra $\alg{L_3}$ (Example~\ref{ex:luk}) but with $1$ as designated value for $\lgc{\Luk_3}$ 
and  $\frac{1}{2}$ and $1$ as designated values for $\lgc{J_3}$. That is, $\lgc{\Luk_3} = (\alg{L_3}, \{1\})$ and 
$\lgc{J_3} =  (\alg{L_3}, \{\frac{1}{2},1\})$. In this case, there is a smallest subalgebra of $\F_{\alg{L_3}}(\omega)$ 
isomorphic to $\alg{L_3} \times \alg{L_2}$ with a homomorphism that maps $\alg{L_3} \times \alg{L_2}$ onto 
$\alg{L_3}$ and sends the inherited designated values $(1,1)$ to $1$, and $(\frac{1}{2},1)$ to $\frac{1}{2}$. 
We therefore obtain a logic $(\alg{L_3} \times \alg{L_2}, \{(1,1)\})$ corresponding to admissibility in $\lgc{\Luk_3}$, and 
another logic $(\alg{L_3} \times \alg{L_2}, \{(\frac{1}{2},1),(1,1)\})$ corresponding to admissibility in $\lgc{J_3}$.
\end{exa}


\section{Concluding Remarks}\label{s:concluding}

The algorithms described in this paper have been implemented in the tool $\mathsf{TAFA}$, 
 a Delphi XE2 implementation compiled for Windows operating systems available for download 
 online at \url{https://sites.google.com/site/admissibility}. This tool 
has been used in particular to obtain  minimal generating sets and to check 
structural completeness and almost structural completeness for all 
$3$-element groupoids  (i.e.,  $3$-element algebras with a single binary operation) of which there 
are $3330$ up to isomorphism.
The cardinalities of the minimal free algebras required to check admissibility range from
$3$ to $1296$, while the minimal generating algebras have at most $9$ elements. 
$2676$ of the groupoids are structurally complete and a further $254$ are almost structurally complete. 
For further details of $\mathsf{TAFA}$  and these experiments, consult the system description~\cite{CR13}.

Although the algorithms described here produce minimal sets of algebras for testing admissibility 
(and of course have to be run only once for a given finite set of finite algebras), the steps involved 
-- generating a suitable free algebra and its subalgebras and checking homomorphic images 
-- may be computationally unfeasible. For this reason, our case studies have consisted so far  
mostly of algebras with $5$ or fewer elements where the appropriate free algebra is of a 
reasonable size (less than $1000$ elements, say). This bottleneck could perhaps be addressed by 
introducing heuristics such as constructing and checking small subalgebras of free algebras incrementally, 
rather than beginning with the whole free algebra, or calculating only some and not all of the congruences 
in the $\proc{MinGenSet}$ algorithm. We might also hope to obtain faster algorithms for 
cases where the given algebras admit certain properties such as congruence distributivity or permutability. 

We recall finally that one of the motivations for investigating admissibility in arbitrary finite algebras or, similarly, 
finite-valued logics, is to obtain admissible quasiequations or rules that can be used to 
improve proof systems (shortening derivations, constraining proof search, simplifying rules, etc.) 
for these algebras or logics. The work reported here provides a significant step towards this goal. 
The next, equally ambitious, step will be to determine which rules have the potential to be useful 
for a given finite algebra or finite-valued logic and to add these appropriately to corresponding proof systems.


\bibliographystyle{plain}


\end{document}